\definecolor{shadecolor}{rgb}{0.9, 0.9, 0.86}
\def\Re{\mathrm {Re}\,}
\def\Im{\mathrm {Im}\,}
\def\wt{\widetilde}
\def\wh{\widehat}
\def\C{\mathbb{C}}
\def\R{\mathbb{R}}
\newcommand{\Q}{\mathbb{Q}}
\def\G{\Gamma}
\def\1{\mathbf{1}}
\def\H{\mathrm{H}}
\def\g{\gamma}
\def\b{\beta}
\def\a{\alpha}
\def\L{\Lambda}
\def\d{\mathrm d}
\def\O{\mathcal{O}}
\def\e{\mathrm{e}}
\def\i{\mathrm{i}}
\def\t{\mathbf{t}}
\def\pa{\partial}
\def\res{\mathop{\mathrm {res}}\limits_}
\def\tr{\mathrm{tr}\,}
\def\s{\sigma}
\def\E{\mathrm{E}}
\def\Mgn{\overline{\mathcal{M}}_{g,n}}
\def\ss{\mathbf{s}}
\def\barnes{\mathrm{G}}
\def\Vol{\mathrm{Vol}}
\def\0{\boldsymbol{0}}
\def\be{\begin{equation}}
\def\ee{\end{equation}}
\newtheorem{theorem}{Theorem}[section]
\newtheorem{example}[theorem]{Example}
\newtheorem{lemma}[theorem]{Lemma}
\newtheorem{remark}[theorem]{Remark}
\newtheorem{proposition}[theorem]{Proposition} 
\newtheorem{corollary}[theorem]{Corollary}
\newcommand*\pFq[6][8]{
  \begingroup 
  \pFqmuskip=#1mu\relax
  \mathchardef\normalcomma=\mathcode`, 
  \mathcode`\,=\string"8000            
  \begingroup\lccode`\~=`\,
  \lowercase{\endgroup\let~}\pFqcomma  
  {}_{#2}F_{#3}{\left(\left.\genfrac..{0pt}{}{#4}{#5}\right|#6\right)}
  \endgroup
}
\newcommand{\pFqcomma}{{\normalcomma}\mskip\pFqmuskip}
\def\shrink#1{
\noindent
\adjustbox{width=1\textwidth}{
	\noindent\begin{minipage}{1\textwidth}
	{#1}
	\end{minipage}} 
}
\begin{document}

\title{Laguerre Ensemble: Correlators, Hurwitz Numbers and Hodge Integrals}

\author{Massimo Gisonni}
\address{M.~Gisonni:\newline SISSA, via Bonomea 265, 34136 Trieste, Italy}
\email{massimo.gisonni@sissa.it}

\author{Tamara Grava}
\address{T.~Grava:\newline School of Mathematics, Bristol University, UK
\newline SISSA, via Bonomea 265, 34136 Trieste, Italy}
\email{grava@sissa.it}

\author{Giulio Ruzza}
\address{G.~Ruzza:\newline 
SISSA, via Bonomea 265, 34136 Trieste, Italy
\newline IRMP, UCLouvain, Chemin du Cyclotron 2, 1348 Ottignies-Louvain-la-Neuve, Belgium}
\email{giulio.ruzza@uclouvain.be}

\pagestyle{myheadings}

\numberwithin{equation}{section}

\begin{abstract}
We consider the  Laguerre partition function, and derive explicit generating functions for connected correlators with arbitrary integer powers of traces in terms of products of Hahn polynomials.
It was recently proven in \cite{CDO2018} that correlators have a topological expansion in terms of weakly or strictly monotone Hurwitz numbers, that can be explicitly computed from our formul\ae. 
As a second result we identify the Laguerre partition function with only positive couplings and a special value of the parameter $\alpha=-1/2$ with the modified GUE partition function, which has recently been introduced  in \cite{DLYZ2016} as a generating function for Hodge integrals. 
This identification provides a direct and new link between monotone Hurwitz numbers and Hodge integrals.
\end{abstract}

\maketitle

\section{Introduction and Results}
\subsection{Laguerre Unitary Ensemble (LUE) and formul\ae\ for correlators}
The LUE is the statistical model on the cone $\H_N^+$ of positive definite hermitian matrices  of size $N$ endowed with the probability measure
\be
\label{laguerreweight}
\frac{1}{Z_N(\a;\0)}{\det}^\a X\exp\tr(-X)\d X,
\ee
$\d X$ being the restriction to $\H_N^+$ of the Lebesgue measure on the space $\H_N\simeq\R^{N^2}$ of hermitian matrices $X=X^\dagger$ of size $N$;
\be
\label{lebesgue}
\d X:=\prod_{1\leq i\leq N} \d X_{ii} \prod_{1\leq i<j\leq N}\d \Re X_{ij}\d \Im X_{ij}.
\ee
The normalizing constant $Z_N(\a;\0)$ in \eqref{laguerreweight} is computed explicitly as
\be
\label{normalizinglaguerre}
Z_N(\a;\0):=\int_{\H_N^+}{\det}^\a X\exp[\tr(-X)]\d X=\pi^{\frac{N(N-1)}2}\prod_{j=1}^N\G(j+\a).
\ee
The parameter $\a$ could be taken as an arbitrary complex number satisfying $\Re\a>-1$.
Writing $\alpha=M-N$, a random matrix $X$ distributed according the measure \eqref{laguerreweight} is called complex \emph{Wishart matrix} with parameter $M$; in particular, when $M$ is an integer there is the equality in law $X=\frac 1N WW^\dagger$ where $W$ is an $N\times M$ random matrix with independent identically distributed Gaussian entries \cite{F2010}.

Our first main result, Theorem \ref{thmmain} below, concerns explicit and effective formul\ae\ for \emph{correlators} of the LUE
\be
\left\langle \tr X^{k_1}\cdots\tr X^{k_r}\right\rangle:=\frac 1{Z_N(\a;\0)}\int_{\H_N^+} \tr X^{k_1}\cdots\tr X^{k_r} {\det}^\a X \exp\tr(-X)\d X
\ee
for \emph{arbitrary nonzero integers} $k_1,\dots,k_r\in\mathbb{Z}\setminus\{0\}$. Theorem \ref{thmmain} is best formulated in terms of \emph{connected} correlators
\be
\label{connectedcorrelators}
\left\langle \tr X^{k_1}\cdots\tr X^{k_r}\right\rangle_{\mathsf{c}}:=\sum_{\mathcal{P}\text{ partition of }\{1,\dots,r\}}(-1)^{|\mathcal{P}|-1}(|\mathcal{P}|-1)!\prod_{I\in\mathcal{P}}\left\langle\prod_{i\in I}\tr X^{k_i}\right\rangle,
\ee
e.g.
\be
\left\langle \tr X^{k_1}\right\rangle_{\mathsf{c}}:=\left\langle \tr X^{k_1}\right\rangle,
\qquad
\left\langle \tr X^{k_1}\tr X^{k_2}\right\rangle_{\mathsf{c}}:=\left\langle\tr X^{k_1}\tr X^{k_2}\right\rangle-\left\langle\tr X^{k_1}\right\rangle\left\langle\tr X^{k_2}\right\rangle.
\ee
The generating function for connected correlators
\be
\left\langle\tr\left(\frac{1}{x_1-X}\right)\tr\left(\frac{1}{x_2-X}\right)\cdots \tr\left(\frac{1}{x_r-X}\right)\right\rangle_{\mathsf{c}}
\ee
can be expanded near $x_j=\infty$ and/or $x_j=0$, yielding the following generating functions up to some irrelevant terms; for $r=1$
\be
\label{generatingonepoint}
\quad C_{1,0}(x):=\sum_{k\geq 1}\frac{1}{x^{k+1}}\left\langle\tr X^{k}\right\rangle,
\quad C_{0,1}(x):=-\sum_{k\geq 1}x^{k-1}\left\langle\tr X^{-k}\right\rangle,
\ee
for $r=2$
\begin{align}
\nonumber
C_{2,0}(x_1,x_2)&:=\sum_{k_1,k_2\geq 1}\frac{\left\langle\tr X^{k_1}\tr X^{k_2}\right\rangle_{\mathsf{c}}}{x_1^{k_1+1}x_2^{k_2+1}},\quad
C_{1,1}(x_1,x_2):=-\sum_{k_1,k_2,\geq 1}\frac{x_2^{k_2-1}}{x_1^{k_1+1}}\left\langle\tr X^{k_1}\tr X^{-k_2}\right\rangle_{\mathsf{c}},
\\ \label{2generating}
C_{0,2}(x_1,x_2)&:=\sum_{k_1,k_2\geq 1}x_1^{k_1-1}x_2^{k_2-1}\left\langle\tr X^{-k_1}\tr X^{-k_2}\right\rangle_{\mathsf{c}},
\end{align}
and, in general,
\be
\label{generatingmixed}
C_{r_+,r_-}(x_1,\dots,x_{r}):=(-1)^{r_-}\sum_{k_1,\dots,k_{r}\geq 1}\frac{1}{x_1^{\s_1k_1+1}\cdots x_r^{\s_rk_r+1}}\left\langle\tr X^{\s_1 k_1}\cdots \tr X^{\s_r k_r}\right\rangle_{\mathsf{c}},
\ee
where $r=r_++r_-$ and we introduce the signs
\be
\label{sign}
\s_1=\dots=\s_{r_+}=+,\qquad\s_{r_++1}=\dots=\s_r=-.
\ee

We obtain the  following explicit expression for these generating functions for correlators.

\begin{shaded}
\begin{theorem}
\label{thmmain}
Introduce the matrix-valued formal series
\begin{align}
\label{R+}
\!\!\!\!\!\!\!\!\!\!\!\!\!\!R_+(x)\!&:=\!\begin{pmatrix} 1 & 0 \\ 0 & 0 \end{pmatrix}+\sum_{\ell\geq 0}\frac 1{x^{\ell+1}}
\begin{pmatrix}
\ell A_{\ell}(N,N+\a) & B_\ell(N+1,N+\a+1)
\\
-N(N+\a)B_\ell(N,N+\a) & -\ell A_{\ell}(N,N+\a) 
\end{pmatrix}
\\
\label{R-}
\!\!\!\!\!\!\!\!\!\!\!\!\!\!R_-(x)\!&:=\!\begin{pmatrix} 1 & 0 \\ 0 & 0 \end{pmatrix}+\sum_{\ell\geq 0}\frac{x^{\ell}}{(\a-\ell)_{2\ell+1}}
\begin{pmatrix}
(\ell+1)A_{\ell}(N,N+\a) & -B_\ell(N+1,N+\a+1)
\\
N(N+\a)B_\ell(N,N+\a) & -(\ell+1)A_{\ell}(N,N+\a)
\end{pmatrix}
\end{align}
where, using a standard notation $(p)_j:=p(p+1)\cdots(p+j-1)$ for the \emph{rising factorial},
\be
\label{ABlMN}
A_\ell(N,M):=\begin{cases}
N, & \ell=0,
\\
\frac 1\ell\sum\limits_{j=0}^{\ell-1}(-1)^j\frac{(N-j)_{\ell}(M-j)_{\ell}}{j!(\ell-1-j)!}, &\ell\geq 1,
\end{cases}
\qquad
B_\ell(N,M):=\sum_{j=0}^\ell(-1)^j\frac{(N-j)_{\ell}(M-j)_{\ell}}{j!(\ell-j)!}.
\ee
Then the generating functions \eqref{generatingmixed} can be expressed as
\begin{align}
\nonumber
C_{1,0}(x)&=\frac 1 x \int_x^{\infty} \left[\left(R_+(y)\right)_{11}- 1\right]\d y, 
& C_{0,1}(x)&=\frac 1x \int_0^x \left[1-\left(R_-(y)\right)_{11}\right]\d y,
\\
\nonumber
C_{2,0}(x_1,x_2)&=\frac{\tr\left(R_+(x_1)R_+(x_2)\right)-1}{(x_1-x_2)^2},
&
C_{1,1}(x_1,x_2)&=\frac{\tr\left(R_+(x_1)R_-(x_2)\right)-1}{(x_1-x_2)^2},
\\
\label{2pointMain}
C_{0,2}(x_1,x_2)&=\frac{\tr\left(R_-(x_1)R_-(x_2)\right)-1}{(x_1-x_2)^2},
\end{align}
and, in general,
\be
\label{multipoint}
C_{r_+,r_-}(x_1,\dots,x_r)=-\sum_{(i_1,\dots,i_r)\in\mathcal{C}_r}\frac{\tr\left(R_{\s_{i_1}}(x_{i_1})\cdots R_{\s_{i_r}}(x_{i_r})\right)-\delta_{r,2}}{(x_{i_1}-x_{i_2})\cdots (x_{i_{r-1}}-x_{i_r})(x_{i_r}-x_{i_1})},
\ee
where $r=r_++r_-\geq 2$, the summation extends over   the $r-$cycles $(i_1,\dots,i_r)$ in the group of permutations of $\{1,\dots,r\}$, and we use the signs $\s_1,\dots,\s_r$ defined in \eqref{sign}.
\end{theorem}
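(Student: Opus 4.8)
The plan is to pass to the matrix-resolvent description of the correlators, in the spirit of the Bertola--Dubrovin--Yang formalism and of the analogous treatment of the Gaussian ensemble. The starting point is that under \eqref{laguerreweight} the eigenvalues of $X$ form a determinantal point process on $(0,\infty)$ with the Christoffel--Darboux kernel $K_N$ of the monic orthogonal polynomials for the weight $x^\a\e^{-x}$. A standard cumulant computation then expresses the connected correlator $\left\langle\prod_{i=1}^r\tr\frac{1}{x_i-X}\right\rangle_{\mathsf c}$, for $r\ge2$, as a signed sum over the $r$-cycles $(i_1,\dots,i_r)$ of the multiple contour integrals $\oint\cdots\oint\prod_k\frac{K_N(\lambda_{i_k},\lambda_{i_{k+1}})}{x_{i_k}-\lambda_{i_k}}\frac{\d\lambda_{i_k}}{2\pi\i}$ over contours enclosing the spectrum, the extra $-\delta_{r,2}/(x_1-x_2)^2$ being the universal correction produced by the coincident-point part of the kernel, with the $r=1$ cases reducing likewise to $\oint\frac{K_N(\lambda,\lambda)}{x-\lambda}\frac{\d\lambda}{2\pi\i}$; expanding these contour integrals at $x_j=\infty$ (resp.\ at $x_j=0$) produces the generating functions \eqref{generatingmixed}.

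Next I would invoke the Christoffel--Darboux formula to write $K_N(x,y)=\frac{1}{x-y}$ times a bilinear form in the columns of $Y_N(x)$ and $Y_N(y)^{-1}$, where $Y_N$ is the $2\times2$ Fokas--Its--Kitaev matrix for the Laguerre weight, which solves a polynomial-coefficient ODE $\partial_x Y_N=M_N(x)Y_N$ with an irregular singular point at $\infty$ and a Fuchsian point at $0$ carrying the $x^\a$-behaviour of the weight. Substituting this into the cycle sums above and collapsing the telescoping products of kernels by residue computations, every contour integral is evaluated in closed form in terms of the single \emph{matrix resolvent} $R(x):=Y_N(x)\left(\begin{smallmatrix}1&0\\0&0\end{smallmatrix}\right)Y_N(x)^{-1}$, and one arrives at exactly the shape of \eqref{generatingonepoint}, \eqref{2pointMain}, \eqref{multipoint} with $R$ in place of $R_\pm$. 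Since the weight is singular at both ends, $R(x)$ carries two genuinely different local expansions: a $1/x$ expansion at $\infty$, which generates the moments $\langle\tr X^k\rangle$ with $k>0$, and a (shifted) Taylor expansion at $0$, which generates the inverse moments $\langle\tr X^{-k}\rangle$; these are the series $R_+$ and $R_-$.

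The remaining and most laborious point is to identify these two expansions with the explicit series \eqref{R+}--\eqref{R-}. I would fix the diagonal entries first: the one-point instances of the formula show that $(R_+)_{11}$ and $(R_-)_{11}$ are the generating functions of the direct and inverse LUE moments, and these can be evaluated in closed form from the hypergeometric representation of the Laguerre polynomials $L_N^{(\a)}$ — equivalently, recognised as the Hahn-polynomial values advertised in the abstract — giving the diagonal coefficients $\ell A_\ell(N,N+\a)$ and $(\ell+1)A_\ell(N,N+\a)/(\a-\ell)_{2\ell+1}$, the denominator being precisely the imprint of the $x^\a$ prefactor in the local behaviour of $Y_N$ at $0$. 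The off-diagonal coefficients $B_\ell$, and the mutual consistency of the two expansions, are then forced by the algebraic constraints satisfied by any such resolvent, namely $R(x)^2=R(x)$ and $\tr R(x)=1$, together with the linear ODE $\partial_x R=[M_N,R]$ and the prescribed behaviours at $\infty$ and at $0$; concretely one checks that the right-hand sides of \eqref{R+}--\eqref{R-} satisfy all of these relations.

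I expect the main obstacle to be this last step — producing and proving the explicit closed forms, rather than the essentially routine reduction to a matrix resolvent. Verifying $R_\pm(x)^2=R_\pm(x)$ and the ODE order by order amounts to a family of terminating hypergeometric summation identities for the quantities $A_\ell$ and $B_\ell$, of Chu--Vandermonde and Pfaff--Saalsch\"utz type, and the genuine bookkeeping lies in reconciling the two a priori unrelated expansions, at $\infty$ and at $0$, of one and the same object. A clean way to organise it is to compute both expansions directly from the explicit Laguerre polynomials and their Cauchy transforms, so that all the identities needed reduce to known evaluations of ${}_3F_2$'s; the general multipoint formula \eqref{multipoint} then follows with no further input beyond the first two steps.
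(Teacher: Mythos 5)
Your strategy is sound, and for the structural half of the argument it is genuinely different from the paper's. You propose to derive the cyclic-sum formula \eqref{multipoint} from the determinantal structure of the eigenvalue process: cluster functions of the Christoffel--Darboux kernel, contour integrals around the spectrum, and the collapse of telescoping kernel products onto the projector $R(x)=Y(x)\E_{11}Y^{-1}(x)$ --- essentially the determinantal formul\ae\ of Berg\`ere--Eynard \cite{BE2009}. The paper instead only uses the (confluent) Christoffel--Darboux kernel once, to compute the first $\t$-derivative of $\log Z_N(\a;\t)$ (Proposition \ref{propfirst}); all higher connected correlators are then generated by repeated $\t$-differentiation, using the isomonodromic deformation equations of Proposition \ref{propdeformation}, the residue representation of $\Omega_k$, and an induction on $r$ (Propositions \ref{propsecond} and \ref{propn}). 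Your route dispenses with the deformation machinery entirely; the paper's route dispenses with the cluster-expansion combinatorics and stays self-contained, which is explicitly why the authors avoid the matrix-resolvent/topological-recursion formalism. For the analytic half --- identifying the expansions of $R$ at $x=\infty$ and $x=0$ with \eqref{R+}--\eqref{R-} --- your plan of computing both expansions from the explicit Laguerre polynomials and their Cauchy transforms and reconciling them through the ODE $\pa_xR=[\mathcal{A},R]$ is the same as Propositions \ref{propexpansion+} and \ref{propexpansion-} together with Lemma \ref{lemmarecursion}.

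The one place you are too quick is the ``standard cumulant computation.'' For a determinantal process the joint cumulant of $r$ linear statistics is a sum over all set partitions of $\{1,\dots,r\}$, each block carrying a single integration variable; the coincident-point (collision) contributions are therefore present for every $r\geq 2$, not only for $r=2$, and showing that after partial-fraction reduction they are exactly absorbed into the cyclic sum of $\tr\left(R(x_{i_1})\cdots R(x_{i_r})\right)$, leaving only the additive $-\delta_{r,2}/(x_1-x_2)^2$, is precisely the content of the determinantal formul\ae\ you are invoking. That step must either be carried out (nontrivial bookkeeping) or cited; as written it is asserted rather than proved. Two smaller points: the constraints $R^2=R$ and $\tr R=1$ alone only give $\det R=0$, hence fix the \emph{product} of the off-diagonal entries, so you genuinely need the ODE together with explicitly computed leading coefficients at \emph{both} $x=\infty$ and $x=0$ (your last sentence already concedes this) to pin down the $B_\ell$'s and the prefactor $(\a-\ell)_{2\ell+1}$; and the expansion at $x=0$ requires the caveat, handled in the paper by analytic continuation in $\a$, that its coefficients are ill-defined for integer $\a$.
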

\end{shaded}

The proof is given in Section \ref{subsecproof}. Theorem~\ref{thmmain} generalizes formul\ae\ for one-point correlators, since  the formul\ae\ for the generating series $C_{1,0}$ and $C_{0,1}$ boil down to the following identities
\be
\label{onepoint+-}
\left\langle\tr X^k\right\rangle=A_k(N,N+\a),
\qquad
\left\langle\tr X^{-k-1}\right\rangle=\frac{A_k(N,N+\a)}{(\a-k)_{2k+1}},
\qquad k\geq 0
\ee
which were already derived in the literature \cite{HSS1992,CMOS2019}.
From Theorem~\ref{thmmain} for example one can deduce compact expressions for correlators of the form
\begin{align}
\!\!\!\!\!\!\!\!\!\left \langle \tr X^{k} \tr X \right \rangle_{\mathsf{c}}
&= k A_k(N,N+\a),
&\!\!\!\!\!\!\!\!\!\!\!\!\!\!\!\!\!\!\!\!\left \langle \tr X^{-k} \tr X^{-1} \right \rangle_{\mathsf{c}}
&= \frac{k A_k(N,N+\a)}{\alpha (\alpha-k)_{2k+1}},
\nonumber
\\
\label{2pointCorr}
\!\!\!\!\!\!\!\!\!\left \langle \tr X^{k} \tr X^{-1} \right \rangle_{\mathsf{c}}
& = -\frac{k A_{k-1}(N,N+\a)}{\alpha}, 
&\!\!\!\!\!\!\!\!\!\!\!\!\!\!\!\!\!\!\!\!\left \langle \tr X^{-k} \tr X \right \rangle_{\mathsf{c}} &= -\frac{k A_{k-1}(N,N+\a)}{(\alpha-k+1)_{2k-1}},
\end{align}
for the derivation see Example \ref{2pointExp}.
For general positive moments, see also \cite{JKM2020}.

The entries $A_\ell(N,M)$, defined in \eqref{ABlMN}, are known to satisfy a three term recursion \cite{HT2003,CMSV2016}. We deduce this recursion together with a similar three term recursion for $B_\ell(N,M)$, in Lemma \ref{lemmarecursion}.
It was pointed out in \cite{CMOS2019} that the entries $A_\ell(N,M)$ are \emph{hypergeometric orthogonal polynomials} (in particular suitably normalized \emph{Hahn} and \emph{dual Hahn polynomials} \cite{KS1994,CMOS2019}), a fact which provides another interpretation of the same three term recursion; this interpretation extends to the entries $B_\ell(N,M)$, see Remark \ref{remarkhg}.
In Lemma \ref{lemmalargeN} we provide an alternative expression for the entries $A_\ell(N,M),B_\ell(N,M)$, which makes clear that they are polynomials in $N,M$ with integer coefficients.

Formul\ae\ of the same sort as \eqref{multipoint} have been considered in \cite{DY2017} for the Gaussian Unitary Ensemble, and already appeared in the Topological Recursion literature, see e.g. \cite{EO2007,CE2006,BE2009,BBE2015,EKR2015}.
Our approach is not directly based  on the Matrix Resolvent method \cite{DY2017} or the Topological Recursion \cite{CE2006}; in particular we provide a self-contained proof to Theorem \ref{thmmain} via orthogonal polynomials and their Riemann-Hilbert problem
\cite{IKF1990}.   

Insertion of \emph{negative} powers of traces in the correlators and computation of \emph{mixed} correlators are, to the best of our knowledge, novel aspects; as we shall see shortly, these general correlators have expansions with integer coefficients, a fact which generalizes results of  (see e.g.\ \cite{CDO2018}).
It would be interesting to implement this method to other invariant ensembles of random matrices \cite{DG2009,F2010}. 
With the aid of the formul\ae\ of Theorem~\ref{thmmain} we have computed several LUE connected correlators which are reported in the tables of App.\ \ref{apptable}.
Moreover, we can make direct use of the formul\ae\ of Theorem~\ref{thmmain} to prove (details in Section \ref{secproofprop13}) the following result, concerning the formal structure as large $N$ asymptotic series of \emph{arbitrary} correlators of the LUE in the scaling
\be
\label{scaling}
\alpha=(c-1)N,
\ee
corresponding to $M=cN$ in terms of the Wishart parameter $M$.

\begin{proposition}
\label{prop13}
Arbitrary rescaled LUE correlators admit an asymptotic expansion for $N\to\infty$ which is a series in $N^{-2}$ with coefficients polynomial in $c$ and $(c-1)^{-1}$ with integer coefficients. More precisely, for all $k_1,\dots,k_\ell \in \mathbb{Z}\setminus \{0\}$ we have
\be
\label{Texp+-}
N^{-[\ell\,\mathrm{mod}\,2]-\sum_{i=1}^\ell k_i}\left\langle \tr X^{k_1}\cdots\tr X^{k_\ell}\right\rangle_{\mathsf{c}}\sim \sum_{j\geq 0} \frac{f_j^{(k_1,\dots,k_\ell)}\left(c\right)}{N^{2j}},\qquad N\to\infty,
\ee
where $f_j^{(k_1,\dots,k_\ell)}(c)\in\mathbb{Z}\left[c,(c-1)^{-1}\right]$ for all $j\geq 0$; here we also denote $[\ell\,\mathrm{mod}\,2]\in\{0,1\}$ the parity of $\ell$.
\end{proposition}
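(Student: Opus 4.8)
The plan is to read off the $N$-dependence of every correlator directly from the closed formul\ae\ of Theorem~\ref{thmmain}. Fix $k_1,\dots,k_\ell\in\mathbb{Z}\setminus\{0\}$ with signs $\s_1,\dots,\s_\ell$ as in \eqref{sign}. Expanding the Vandermonde-type factors in \eqref{multipoint} as geometric series and reading off the Laurent coefficient in $x_1,\dots,x_\ell$ prescribed by \eqref{generatingmixed} exhibits $\langle \tr X^{k_1}\cdots\tr X^{k_\ell}\rangle_{\mathsf c}$ as a \emph{finite} $\mathbb{Z}$-linear combination of products of entries of the matrices $R_{\s_i}(x_i)$ taken at prescribed orders; for $\ell=1$ one uses the formul\ae\ \eqref{onepoint+-} instead, which already give the statement in that case. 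Hence it suffices to understand, after the substitution $\a=(c-1)N$ of \eqref{scaling}, i.e.\ $M:=N+\a=cN$, the large-$N$ behaviour of $A_\ell(N,cN)$, $B_\ell(N,cN)$, $B_\ell(N+1,cN+1)$, $N(N+\a)B_\ell(N,cN)=cN^2B_\ell(N,cN)$, and of the scalars $(\a-\ell)_{2\ell+1}^{-1}$ occurring in \eqref{R+}--\eqref{R-}, and then to track how these recombine in \eqref{multipoint}. Since for fixed $c$ and fixed $k_1,\dots,k_\ell$ each such combination is a genuine rational function of $N$, its expansion at $N=\infty$ is convergent and the $\sim$ in \eqref{Texp+-} is an honest equality of Laurent series in $N^{-1}$; the real content is that only \emph{even} powers of $N^{-1}$ survive and that the coefficients lie in $\mathbb{Z}[c,(c-1)^{-1}]$.

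First I would dispose of the scalar denominators, which are the only source of negative powers of $c-1$. Writing the rising factorial as a symmetric product,
\[
(\a-\ell)_{2\ell+1}=\prod_{i=-\ell}^{\ell}(\a+i)=\bigl((c-1)N\bigr)^{2\ell+1}\prod_{i=1}^{\ell}\Bigl(1-\frac{i^{2}}{(c-1)^{2}N^{2}}\Bigr),
\]
so $(\a-\ell)_{2\ell+1}^{-1}$ equals $(c-1)^{-(2\ell+1)}\,N^{-(2\ell+1)}$ times an \emph{even} power series in $N^{-1}$ with coefficients in $\mathbb{Z}[(c-1)^{-1}]$. For the polynomial entries I invoke Lemma~\ref{lemmalargeN}: it exhibits $A_\ell(N,M),B_\ell(N,M)\in\mathbb{Z}[N,M]$ and makes their degrees apparent (one has $\deg_N A_\ell(N,cN)\le\ell+1$ and $\deg_N B_\ell(N,cN)\le\ell$), so under $M=cN$ they become polynomials in $N$ with coefficients in $\mathbb{Z}[c]$, and likewise for $B_\ell(N+1,cN+1)$ and $cN^2 B_\ell(N,cN)$. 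The algebraic heart of the argument is the behaviour of \eqref{ABlMN} under $(N,M)\mapsto(-N,-M)$: reindexing the defining sums yields the reflection identities
\[
A_\ell(-N,-M)=(-1)^{\ell-1}A_\ell(N,M),\qquad B_\ell(-N,-M)=(-1)^{\ell}\,B_\ell(N+1,M+1),
\]
equivalently a reflection law for every entry of the matrices \eqref{R+}--\eqref{R-}; these can alternatively be produced inductively from the three-term recursions of Lemma~\ref{lemmarecursion}.

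Next I would propagate these entrywise reflection laws through the cyclic sum \eqref{multipoint} to show that under $(N,\a)\mapsto(-N,-\a)$ (with the arguments $x_i$ untouched) the generating function $C_{r_+,r_-}$ is multiplied by an overall sign; extracting the Laurent coefficient and comparing with the one-point cases \eqref{onepoint+-} pins this sign down to $(-1)^{[\ell\,\mathrm{mod}\,2]+\sum_i k_i}$ (the bookkeeping can be checked against the two-point formul\ae\ \eqref{2pointCorr}). Writing $\mu:=[\ell\,\mathrm{mod}\,2]+\sum_i k_i$ and $G(N):=\langle \tr X^{k_1}\cdots\tr X^{k_\ell}\rangle_{\mathsf c}\big|_{\a=(c-1)N}$, and noting that the line $\a=(c-1)N$ is invariant under the involution, this reads $G(-N)=(-1)^{\mu}G(N)$, so $N^{-\mu}G(N)$ is even in $N$; being moreover a rational function of $N$ with no pole at $N=\infty$ (here the degree bounds enter), it is therefore a power series in $N^{-2}$, which is \eqref{Texp+-}. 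Finally, the coefficients lie in $\mathbb{Z}[c,(c-1)^{-1}]$ by collecting the three ingredients above: the integer coefficients from the geometric expansions of the Vandermonde factors, the $\mathbb{Z}[c]$-coefficient polynomials $A_\ell(N,cN)$, $B_\ell(N,cN)$, $B_\ell(N+1,cN+1)$ together with the prefactor $cN^2$, and the $\mathbb{Z}[(c-1)^{-1}]$-coefficient even series coming from $(\a-\ell)_{2\ell+1}^{-1}$.

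I expect the main obstacle to be the third step: transporting the reflection laws through the trace $\tr\bigl(R_{\s_{i_1}}(x_{i_1})\cdots R_{\s_{i_r}}(x_{i_r})\bigr)$. Individually the reflected matrices $R_{\pm}(x;-N,-\a)$ are not scalar multiples of $R_\pm(x;N,\a)$, since the off-diagonal entries carry $B_\ell(N,cN)$ or $B_\ell(N+1,cN+1)$ — both of genuinely mixed parity in $N$ — dressed by the shift $N\leftrightarrow N+1$, an $\ell$-dependent sign, and, in the lower-left slot, the factor $cN^2$. One must show these twists cancel in the trace, leaving the clean scalar $(-1)^{\mu}$; the mechanism is that off-diagonal entries occur only in matched pairs (one passage $1\!\to\!2$ and one $2\!\to\!1$) in each closed product contributing to the trace, so the $N\leftrightarrow N+1$ shifts and the factors $cN^2$ recombine. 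Making this precise — perhaps by recognising the reflected matrices as a simultaneous conjugate of the originals up to the scalar sign — is the one point that is not purely routine; everything else is bookkeeping with the explicit formul\ae\ \eqref{ABlMN}, \eqref{R+}, \eqref{R-}.
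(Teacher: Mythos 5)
Your route is viable and genuinely different from the paper's, so let me first compare. The paper proves the evenness in $N$ by rescaling the arguments to $Nx_i$ and conjugating by an explicit $N$-dependent matrix $G$ (involving $\sqrt c$) chosen so that $GR_\pm(Nx)G^{-1}-\frac 12\1$ becomes \emph{odd} in $N$ --- a fact verified through the three-term recursions of Lemma \ref{lemmarecursion} --- and then invokes the invariance of \eqref{multipoint} under $R\mapsto GRG^{-1}$ and $R\mapsto R+\gamma\1$. You implement what is essentially the same involution (your $(N,\a)\mapsto(-N,-\a)$ preserves the line $\a=(c-1)N$, where it is just $N\mapsto -N$) through reflection identities for $A_\ell,B_\ell$, which you state correctly and which do follow from reindexing the sums in \eqref{ABlMN}. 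The integrality half of your argument coincides with the paper's (Lemma \ref{lemmalargeN} together with the even expansion of $(\a-\ell)_{2\ell+1}^{-1}$), and your degree bounds on $A_\ell(N,cN)$, $B_\ell(N,cN)$ play the role of the paper's estimates $D_\ell=\O(N^{-1})$, $E_\ell,F_\ell=\O(N)$.

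The step you yourself flag is, however, a genuine gap as written, and your phrasing of the target is slightly off: with the $x_i$ untouched, $C_{r_+,r_-}$ is \emph{not} multiplied by an overall sign under the involution (the diagonal coefficient of $x^{-\ell-1}$ in $R_+$ picks up $(-1)^{\ell-1}$, an $\ell$-dependent sign), and the ``matched pairs'' heuristic alone cannot succeed, because the reflected $(1,2)$ entry at order $\ell$ is proportional to the \emph{original $(2,1)$ entry}, not to the original $(1,2)$ entry. The correct closed form of your reflection law --- which does follow coefficient-by-coefficient from the identities you state, the $\ell$-dependent signs being absorbed by $x\mapsto -x$ and the sign flip of $(\a-\ell)_{2\ell+1}$ --- is
\be
R_{\pm}(x)\Big|_{(N,\a)\mapsto(-N,-\a)}=D\,\big(R_{\pm}(-x)\big)^{T}D^{-1},
\qquad
D=\begin{pmatrix}1 & 0\\ 0 & N(N+\a)\end{pmatrix},
\ee
where $N(N+\a)$ is itself invariant under the involution. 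Since the trace of a product of transposes equals the trace of the reversed product, since cycle reversal is a bijection of $\mathcal{C}_r$ under which the denominator of \eqref{multipoint} gains $(-1)^r$, and since this $(-1)^r$ is exactly compensated by evaluating that denominator at $(-x_1,\dots,-x_r)$, the involution sends $C_{r_+,r_-}(x_1,\dots,x_r)$ to $C_{r_+,r_-}(-x_1,\dots,-x_r)$; extracting the coefficient of $\prod_i x_i^{-\s_ik_i-1}$ then produces the factor $(-1)^{\ell+\sum_i\s_ik_i}$, which is your $(-1)^{\mu}$. With this identity supplied your proof closes; without it, the central parity claim is only asserted.
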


From this result we infer that when $c=2$ (equivalently, $\a=N$) the coefficients of this large $N$ expansion are all integers.

From the tables in App.\ \ref{apptable} one easily conjectures that actually a stronger version of this result holds true, namely that that the asymptotic expansion for $N^{\ell-2-\sum_{i=1}^\ell k_i}\left\langle \tr X^{k_1}\cdots\tr X^{k_\ell}\right\rangle_{\mathsf{c}}$ (note the different power of $N$) as $N\to\infty$ is a series in $\mathbb Z[c,(c-1)^{-1}][[N^{-2}]]$. 
Such stronger property holds true when all the $k_j$'s have the same sign, see e.g. \cite{CDO2018,CMSV2016} and the section below.

\subsection{Topological expansions and Hurwitz numbers}\label{paragraphhurwitz}

It has been shown in \cite{ME2003,BG2013} that for matrix models with convex potentials, as in our case, correlators, suitably rescaled by a power of $N$, as in \eqref{Texp+-}, have a \emph{topological expansion}, by which we mean an asymptotic expansion in non-negative powers of $N^{-2}$.
As mentioned above, the topological expansion of the LUE correlators in the regime \eqref{scaling} was considered in \cite{CMSV2016,CDO2018} where the connection with Hurwitz numbers was made explicit.

Hurwitz numbers are very important combinatorial quantities, counting factorizations in the symmetric group; they were first studied in the end of the 19th century by Hurwitz and there has been a recent renewal of interest in view of the connection with integrable systems and random matrices \cite{O2000,HO2015}. The Hurwitz numbers related to this model \cite{CDO2018} are a variant of \emph{monotone} Hurwitz numbers \cite{GGPN2014,GGPN2016,GGPN2017,BGF2017,BCDGF2019} which can be defined as follows.
For $\mu,\nu$ partitions of the same integer $d=|\mu|=|\nu|$, define the \emph{strictly} (resp. \emph{weakly}) \emph{monotone double Hurwitz numbers} $h_g^>(\mu;\nu)$ (resp. $h_g^\geq(\mu;\nu)$) as the number of tuples $(\a,\tau_1,\dots,\tau_r,\beta)$ such that
\begin{itemize}
\item[(i)] $r=\ell+s+2g-2$ where $\ell$ is the length of $\mu$ and $s$ is the length of $\nu$,

\item[(ii)] $\a,\b$ are permutations of $\{1,\dots,d\}$ of cycle type $\mu,\nu$, respectively, and $\tau_1,\dots,\tau_r$ are transpositions such that $\a\tau_1\cdots\tau_r=\b$,

\item[(iii)] the subgroup generated by $\a,\tau_1,\dots,\tau_r$ acts transitively on $\{1,\dots,d\}$, and

\item[(iv)] writing $\tau_j=(a_j,b_j)$ with $a_j<b_j$ we have
$b_1<\cdots<b_r$ (resp. $b_1\leq\cdots\leq b_r$).
\end{itemize}

\begin{theorem}[\cite{CDO2018}]
\label{thmhurwitz}
The following asymptotic expansions as $N\to\infty$ hold true;
\begin{align}
\label{hurwitzstrict}
\!\!\!\!\!\!\!\!\!\!\!\!\!\!\!\!\!\!N^{\ell-|\mu|-2}\left\langle \tr X^{\mu_1}\cdots\tr X^{\mu_\ell}\right\rangle_{\mathsf{c}}&=\sum_{g\geq 0}\frac 1{N^{2g}}\sum_{s=1}^{1-2g+|\mu|-\ell}H_g^>(\mu;s)c^s,
&&
\!\!\!\!\!\!\!\!\!\!\!\!c>1-\frac 1N,
\\
\label{hurwitzweakly}
\!\!\!\!\!\!\!\!\!\!\!\!\!\!\!\!\!\!N^{\ell+|\mu|-2}\left\langle \tr X^{-\mu_1}\cdots\tr X^{-\mu_\ell}\right\rangle_{\mathsf{c}}&=\sum_{g\geq 0}\frac 1{N^{2g}}\sum_{s\geq 1}\frac {H_g^\geq(\mu;s)}{(c-1)^{2g-2+|\mu|+\ell+s}},
&&
\!\!\!\!\!\!\!\!\!\!\!\!c>1+\frac{|\mu|}N,
\end{align}
where we denote $|\mu|:=\mu_1+\cdots+\mu_\ell$, and the coefficients can be expressed as
\be
\label{eq122}
H_g^>(\mu;s)=\frac{z_\mu}{|\mu|!}\sum_{\nu\text{ of length }s}h_g^>(\mu;\nu),\qquad
H_g^\geq(\mu;s)=\frac{z_\mu}{|\mu|!}\sum_{\nu\text{ of length }s}h_g^\geq(\mu;\nu),
\ee
where $z_\mu:=\prod_{i\geq 1}\left(i^{m_i}\right)m_i!$, $m_i$ being the multiplicity of $i$ in the partition $\mu$.
\end{theorem}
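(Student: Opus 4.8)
The plan is to obtain both expansions \eqref{hurwitzstrict}–\eqref{hurwitzweakly} directly from the explicit formulae of Theorem~\ref{thmmain}, specialized to the scaling $\alpha=(c-1)N$, and then to identify the resulting generating polynomials with those of monotone Hurwitz numbers. First I would reduce everything to the building blocks: by Theorem~\ref{thmmain} every connected correlator is a fixed rational expression in the entries of $R_+(x_i)$ and/or $R_-(x_i)$, so it suffices to control the large-$N$ behaviour of $A_\ell(N,N+\alpha)$ and $B_\ell(N,N+\alpha)$ with $N+\alpha=cN$. By Lemma~\ref{lemmalargeN} these are polynomials in $(N,cN)$ with integer coefficients; after dividing by the appropriate power of $N$ (resp., in the negative-power case, also by the Pochhammer factor $(\alpha-\ell)_{2\ell+1}$) they expand as polynomials in $c$ (resp. as power series in $N^{-2}$ with coefficients in $\mathbb{Z}[(c-1)^{-1}]$), which is exactly the structure already extracted in Proposition~\ref{prop13}. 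Collecting the $N^{-2g}$ coefficients produces the genus-$g$ terms, and the remaining task is to \emph{name} them.

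For the strictly monotone case I would start from the one-point identity $\langle\tr X^{\mu_1}\rangle=A_{\mu_1}(N,cN)$ in \eqref{onepoint+-}, expand the defining sum \eqref{ABlMN}, and recognize the coefficient of each monomial $N^{a}(cN)^{s}$ as the count of transposition factorizations satisfying the transitivity and strict-ordering conditions (iii)–(iv), i.e. as $\tfrac{z_\mu}{|\mu|!}\sum_{\nu\text{ of length }s}h_g^>(\mu;\nu)$; this matches the hypergeometric-orthogonal-polynomial interpretation of $A_\ell$ recalled from \cite{CMOS2019}. The range $1\le s\le 1-2g+|\mu|-\ell$ is then precisely the Riemann–Hurwitz constraint $r=\ell+s+2g-2\ge 0$ together with the degree bound on $A_{\mu_1}$. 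For general $\ell$, the $r$-cycle sum in \eqref{multipoint} has exactly the shape of the combinatorial recursion that assembles multi-part monotone Hurwitz numbers out of connected one-part data, so an induction on $\ell$ — peeling off one cycle $R_{\s_{i}}(x_i)$ at a time — upgrades the one-point identification to \eqref{hurwitzstrict}.

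For the negative-power case I would exploit the structural parallel between $R_-$ and $R_+$ in \eqref{R+}–\eqref{R-}: the same $A_\ell,B_\ell$ appear, but weighted by $x^{\ell}/(\alpha-\ell)_{2\ell+1}$ instead of $x^{-\ell-1}$. Under $\alpha=(c-1)N$ one has $(\alpha-\ell)_{2\ell+1}=\prod_{j=-\ell}^{\ell}\big((c-1)N+j\big)$, whose reciprocal expands as $\big((c-1)N\big)^{-2\ell-1}$ times a series in $N^{-2}$ with coefficients in $\mathbb{Q}[(c-1)^{-1}]$; this is the source both of the overall factor $(c-1)^{-(2g-2+|\mu|+\ell+s)}$ and of the passage from the strict ordering $b_1<\cdots<b_r$ to the weak ordering $b_1\le\cdots\le b_r$, since the geometric expansion of $1/(\alpha-\ell)_{2\ell+1}$ precisely generates the repeated indices allowed in a weakly monotone factorization. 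I would make this rigorous by checking that the generating polynomials so produced satisfy the cut-and-join recursion for weakly monotone double Hurwitz numbers, again bootstrapping from $r=1$ to arbitrary $\ell$ via \eqref{multipoint}, and pinning down the power of $(c-1)^{-1}$ by degree counting in $A_\ell,B_\ell$ and in $(\alpha-\ell)_{2\ell+1}^{-1}$.

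The main obstacle is this combinatorial matching in the last two steps: converting the algebraic expansion coefficients of $A_\ell$, $B_\ell$ and of $(\alpha-\ell)_{2\ell+1}^{-1}$ into counts of monotone factorizations with the correct transitivity and ordering constraints, while recovering the length $s$ of the second partition $\nu$ — which is invisible on the matrix-model side and must be read off from the degree in $c$ (resp. in $(c-1)^{-1}$). Rather than producing an explicit bijection, the cleanest route is probably the recursion-matching argument: verify that both sides of \eqref{hurwitzstrict}–\eqref{hurwitzweakly} obey the same cut-and-join relation with the same base case, so that the equality propagates from the one-point functions — for which \eqref{onepoint+-} already gives closed forms — to all connected correlators. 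Keeping the bookkeeping of powers of $N$ and the summation ranges over $s$ consistent throughout this induction is the delicate part.
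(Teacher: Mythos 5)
First, a point of orientation: the paper does not prove Theorem \ref{thmhurwitz} at all --- it is imported verbatim from \cite{CDO2018}, so there is no internal proof to compare against. Your proposal is therefore an attempt at an independent derivation via Theorem \ref{thmmain}, which is a legitimately different route from the one in \cite{CDO2018} (where the moments of Wishart matrices are expanded over the symmetric group algebra and the Jucys--Murphy identity converts powers of $N$ and $M=cN$ into monotone factorizations).

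However, as written the proposal has a genuine gap: the entire content of the theorem is the identification of the coefficients of the $N^{-2g}c^s$ (resp.\ $N^{-2g}(c-1)^{-\bullet}$) expansion with the weighted counts $H_g^>(\mu;s)$ and $H_g^\geq(\mu;s)$, and this is precisely the step you do not carry out. Saying that one should ``recognize the coefficient of each monomial $N^a(cN)^s$ as the count of transposition factorizations satisfying (iii)--(iv)'' begs the question; nothing in \eqref{ABlMN} or Lemma \ref{lemmalargeN} makes such a count visible without an additional algebraic input. The standard such input is the Jucys--Murphy identity $(1+\xi\mathcal{J}_2)\cdots(1+\xi\mathcal{J}_d)=\sum_{|\nu|=d}\xi^{d-\ell(\nu)}\mathcal{C}_\nu$ --- the very identity the paper invokes in Remark \ref{remcombinatorial} to prove the symmetry \eqref{symmetryhurwitz} --- together with a character or Weingarten expansion of the correlators over $\mathfrak{S}_d$; you never introduce it or an equivalent. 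Likewise, the assertion that the geometric expansion of $1/(\alpha-\ell)_{2\ell+1}$ ``precisely generates the repeated indices allowed in a weakly monotone factorization'' is an appealing heuristic but is unsupported: no bijection, no generating-function identity, and no verification that the resulting coefficients satisfy any characterizing recursion is given. The fallback of ``matching cut-and-join recursions'' is announced but not executed --- you would need to (a) state the cut-and-join relation for strictly and weakly monotone double Hurwitz numbers, (b) derive the corresponding relation for the matrix-model side from \eqref{multipoint} or from the Virasoro/Toda structure, and (c) check the base cases; none of these is done, and step (b) in particular is nontrivial. What your outline does correctly establish (essentially reproducing Proposition \ref{prop13}) is the \emph{shape} of the expansions --- a series in $N^{-2}$ with coefficients polynomial in $c$, resp.\ in $(c-1)^{-1}$, with the stated degree bounds --- but not the identification of those coefficients with Hurwitz numbers, which is the theorem.
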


From the structure of the formula \eqref{hurwitzweakly} it is clear that when $c=2$ (equivalently, $\a=N$) the coefficients in this expansion are all positive integers.

\begin{remark}
The type of Hurwitz numbers apparing in the expansions \eqref{hurwitzstrict} and \eqref{hurwitzweakly} can also be expressed in terms of the \emph{(connected) multiparametric weighted Hurwitz numbers} $\wt {H}_G^d(\mu)$, introduced and studied in \cite{HO2015,GPH2015,ACEH2020,BHR2020}, which depend on a single partition $\mu$ and are parametrized by a positive integer $d$ and by a sequence $g_1,g_2,\dots$ of complex numbers, the latter being encoded in the series $G(z)=1+\sum_{i\geq 1}g_iz^i$. To make the comparison precise, one has to identify
\be
d=2g-2-|\mu|-\ell(\mu)
\ee
and then we have
\begin{align}
\sum_{s=1}^{1-2g+|\mu|-\ell}H_g^{>}(\mu;s)c^s&=z_\mu c^{|\mu|-d}\wt H_G^d(\mu),& G(z)&=(1+cz)(1+z),
\\
\sum_{s\geq 1}\frac{H_g^{\geq}(\mu;s)}{(c-1)^s}&=\frac{z_\mu}{(c-1)^{|\mu|+d}}\wt H_G^d(\mu),& G(z)&=\frac{1+(c-1)z}{1-z},
\end{align}
where $z_\mu:=\prod_{i\geq 1}\left(i^{m_i}\right)m_i!$, $m_i$ being the multiplicity of $i$ in the partition $\mu$, as above.
\end{remark}

\subsection{Laguerre and modified GUE partition functions and Hodge integrals} 

Our arguments in the proof of Theorem \ref{thmmain} mainly revolve around the following generating function for correlators
\be
\label{Z}
Z_N(\a;\t_+,\t_-)=\int_{\H_N^+}{\det}^\a X\exp\tr\left(-X+\sum_{k\not=0}t_kX^k\right) \d X
\ee
which we call \emph{LUE partition function}. Here $\t_+=(t_{1},t_{2},\dots)$ and $\t_-=(t_{-1},t_{-2},\dots)$ are two independent infinite vectors of times and $\a$ is a complex parameter. For precise analytic details about the definition \eqref{Z} we refer to the beginning of Section \ref{secproof}.
Eventually we are interested in the formal expansion as $t_j\to 0$; more precisely, logarithmic derivatives of the LUE partition function at $\t_+=\t_-=\0$ recover the connected correlators \eqref{connectedcorrelators} as
\be
\label{connectedmoments}
\left.\frac{\pa^\ell\log Z_N(\a;\t_+,\t_-)}{\pa t_{k_1}\cdots \pa t_{k_\ell}}\right|_{\t_+=\t_-=\0}=\left\langle\tr X^{k_1}\cdots\tr X^{k_\ell}\right\rangle_{\mathsf{c}}.
\ee

It is known that $Z_N(\a;\t_+,\t_-)$ is a \emph{Toda lattice} tau function \cite{KM1996,AvM2001} separately in the times $\t_+$ and $\t_-$; this point is briefly reviewed in Section~\ref{paragraphtoda}. Our second main result is the identification (Theorem \ref{thm2} below) of the LUE partition function \eqref{Z} restricted to $\t_-=\0$ with another type of tau function, the \emph{modified Gaussian Unitary Ensemble (mGUE) partition function}, which has been introduced in \cite{DLYZ2016} as a generating function for Hodge integrals (see below), within the context of the \emph{Hodge-GUE correspondence} \cite{DY2017b,DLYZ2016,DY2019,Z2019,Z2019b,LYZZ2019}.

The mGUE partition function $\wt Z_N(\ss)$ is defined in \cite{DLYZ2016} starting from the \emph{even GUE} partition function
\be
\label{evenGUE}
Z_N^{\mathsf{even}}(\ss):=\int_{\H_N}\exp\tr \left(-\frac 12 X^2+\sum_{k\geq 1}s_k X^{2k}\right)\d X
,\qquad
\ss=(s_1,s_2,\dots)
\ee
which is the classical GUE partition function with couplings to odd powers set to zero. It is well known \cite{IKF1990} that \eqref{evenGUE} is a tau function of the \emph{discrete KdV} (also known as \emph{Volterra lattice}) hierarchy, which is a reduction of the Toda lattice hierarchy (see Section \ref{paragraphtoda} for a brief discussion of the Toda lattice hierarchy).
As far as only formal dependence on $N$ and on the times $\ss$ is concerned (see Section \ref{secformal} for more details) it is then argued in \cite{DLYZ2016} that the identity
\be
\label{factorization2}
\frac{Z_{N}^{\mathsf{even}}(\ss)}{(2\pi)^{N}\Vol(N)}=\wt Z_{N-\frac 12}(\ss)\wt Z_{N+\frac 12}(\ss),\qquad
\Vol(N):=\frac{\pi^{\frac{N(N-1)}2}}{\barnes(N+1)},
\ee
uniquely defines a function $\wt Z_N(\ss)$, termed mGUE partition function; in \eqref{factorization2} and throughout this paper, $\barnes(z)$ is the Barnes G-function, with the particular evaluation
\be
\barnes(N+1)=1!2!\cdots (N-1)!
\ee
for any integer $N>0$. With respect to the normalizations in \cite{DLYZ2016} we are setting $\epsilon\equiv 1$ for simplicity;  the dependence on $\epsilon$ can be restored by the scaling $N=x\epsilon$. In \cite{DY2019} a new type of tau function for the discrete KdV hierarchy is introduced and the mGUE partition function is identified with a particular tau function of this kind.

We have the following interpretation for the mGUE partition function.

\begin{shaded}
\begin{theorem}
\label{thm2}
The modified GUE partition function $\wt Z_N(\ss)$ in \eqref{factorization2}  is identified with the Laguerre partition function $ Z_N\left(\a;\t_+,\t_-\right)$ in  \eqref{Z} by the relation 
\be
\label{mGUEevenvsLaguerre}
\wt Z_{2N-\frac 12}(\ss)=C_N Z_N\left(\a=-\frac 12;\t_+,\t_-=\0\right)
\ee
where $\t_+,\ss$ are related by
\be
t_k=2^ks_k
\ee
and $C_N$ is an explicit constant depending on $N$ only\footnotemark;
\be
\label{constantmGUEevenvsLaguerre}
C_N=\frac{2^{N^2-\frac{3}{2}N+\frac{1}{4}}}{\pi^{\frac{N(N+1)}{2}}}\barnes(N+1).
\ee
\end{theorem}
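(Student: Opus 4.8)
The plan is to reduce both sides of \eqref{mGUEevenvsLaguerre} to Hankel determinants and to combine this with a classical \emph{folding} of the even GUE partition function into a product of two Laguerre partition functions at $\a=\pm\tfrac12$. By Weyl integration on $\H_N$, resp.\ on the cone $\H_N^+$, together with the Andr\'eief/Heine identity, one has
\be
Z_N^{\mathsf{even}}(\ss)=\Vol(N)\det\left[m_{i+j}\right]_{i,j=0}^{N-1},
\qquad
Z_N(\a;\t_+,\0)=\Vol(N)\det\left[M^{(\a)}_{i+j}\right]_{i,j=0}^{N-1},
\ee
where $m_k:=\int_{\R}\lambda^k\e^{-\lambda^2/2+\sum_\ell s_\ell\lambda^{2\ell}}\,\d\lambda$ and $M^{(\a)}_k:=\int_0^{\infty}\mu^{k+\a}\e^{-\mu+\sum_\ell t_\ell\mu^\ell}\,\d\mu$; the second identity is the standard Hankel-determinant form of the Laguerre partition function at $\t_-=\0$, consistent with \eqref{normalizinglaguerre} when $\t_+=\0$. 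Since $m_k=0$ for $k$ odd, the Hankel matrix $\left[m_{i+j}\right]$ has a checkerboard structure, so reordering rows and columns to bring the even indices first (the same permutation on rows and on columns) block-diagonalizes it and gives $\det\left[m_{i+j}\right]_{i,j=0}^{N-1}=\det\left[m_{2a+2b}\right]_{a,b=0}^{\lceil N/2\rceil-1}\det\left[m_{2a+2b+2}\right]_{a,b=0}^{\lfloor N/2\rfloor-1}$.

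The next step is the substitution $\mu=\lambda^2$ followed by the rescaling $\mu\mapsto 2\mu$: using $\int_{\R}\lambda^{2k}g(\lambda^2)\,\d\lambda=\int_0^{\infty}\mu^{k-1/2}g(\mu)\,\d\mu$ one finds $m_{2k}=2^{k+1/2}M^{(-1/2)}_k$ and $m_{2k+2}=2^{k+3/2}M^{(1/2)}_k$ \emph{precisely when $t_\ell=2^\ell s_\ell$}, which is the matching of times asserted in the statement. Pulling the resulting powers of $2$ out of the two determinants yields the folding identities
\begin{align}
Z^{\mathsf{even}}_{2N}(\ss)&=\frac{2^{2N^2}\,\Vol(2N)}{\Vol(N)^2}\,Z_N\!\left(-\tfrac12;\t_+,\0\right)Z_N\!\left(\tfrac12;\t_+,\0\right),
\\
Z^{\mathsf{even}}_{2N-1}(\ss)&=\frac{2^{2N^2-2N+\frac12}\,\Vol(2N-1)}{\Vol(N)\Vol(N-1)}\,Z_N\!\left(-\tfrac12;\t_+,\0\right)Z_{N-1}\!\left(\tfrac12;\t_+,\0\right),
\end{align}
with $t_\ell=2^\ell s_\ell$ throughout; these may be sanity-checked directly for small $N$, e.g.\ at $\ss=\0$ for $N=1,2$.

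Substituting the folding identities into the defining relation \eqref{factorization2} — once with index $2N$ and once with index $2N-1$ — produces two families of identities linking the consecutive values $\wt Z_{2N-\frac12},\wt Z_{2N+\frac12},\wt Z_{2N+\frac32},\dots$ to $Z_N\!\left(\pm\tfrac12;\t_+,\0\right)$. One then posits
\be
\wt Z_{2N-\frac12}(\ss)=C_N\,Z_N\!\left(-\tfrac12;\t_+,\0\right),
\qquad
\wt Z_{2N+\frac12}(\ss)=D_N\,Z_N\!\left(\tfrac12;\t_+,\0\right)
\ee
with $\ss$-independent $C_N,D_N$; this ansatz is consistent with both families (the Laguerre factors cancel), and the families collapse to the single recursion $C_N/C_{N-1}=2^{2N-5/2}\pi^{-N}(N-1)!$, with $C_ND_N$ and $C_ND_{N-1}$ then also fixed. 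Using $\Vol(N-1)/\Vol(N)=\pi^{-(N-1)}(N-1)!$ and $\barnes(N+1)/\barnes(N)=(N-1)!$ one checks that \eqref{constantmGUEevenvsLaguerre} solves this recursion and that $C_0=2^{1/4}$.

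The remaining freedom is a single overall constant: \eqref{factorization2} is invariant under $\wt Z_{(2j-1)/2}\mapsto\lambda^{(-1)^j}\wt Z_{(2j-1)/2}$, so the recursion determines $\wt Z_{2N-\frac12}$ only up to the value of $C_0$, equivalently $\wt Z_{-\frac12}$; this is pinned down by comparison at $\ss=\0$, using \eqref{normalizinglaguerre} together with the normalization of $\wt Z_N$ recorded in \cite{DLYZ2016}. I expect the main obstacles to be, first, the careful bookkeeping of the explicit constants — powers of $2$ alongside the $\pi^{N(N-1)/2}$ and Barnes-$G$ factors hidden in $\Vol$ — through the folding and through \eqref{factorization2}; and second, the purely formal nature of $\wt Z_N$, both as a power series in $\ss$ and as a quantity ``formal in $N$'' (see Section~\ref{secformal}), which requires checking that the $\mu=\lambda^2$ substitution and the determinantal manipulations are legitimate order by order in $\ss$ and that the characterization of $\wt Z_N$ through \eqref{factorization2} is applied correctly — precisely the point at which the single normalization input is genuinely needed. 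A more structural alternative would identify both sides as tau functions of the discrete KdV (Volterra) reduction of the Toda hierarchy, but the Hankel-determinant folding is the most self-contained and is in the spirit of the proof of Theorem~\ref{thmmain}.
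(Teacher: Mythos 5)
Your proposal is correct, and the constants check out: I verified your folding identities (the powers of $2$ arising from $\mu=\lambda^2$ followed by $\mu\mapsto 2\mu$ are right), the recursion $C_N/C_{N-1}=2^{2N-5/2}\pi^{-N}(N-1)!$ obtained by dividing the index-$2N$ and index-$(2N-1)$ relations, and the fact that \eqref{constantmGUEevenvsLaguerre} solves it with $C_0=2^{1/4}$. However, your route is genuinely different from the paper's in one structural respect. The paper uses \emph{only} the even-index factorization \eqref{factorization1} (derived from the norm identities of Lemma \ref{lemmaelementary}, equivalent to your Hankel-determinant folding), and explicitly declines to use the odd-index factorization; to close the argument it instead invokes the symmetry $Z_{N+\frac12}\left(-\frac12;\t_+\right)\propto Z_N\left(\frac12;\t_+\right)$ of \eqref{symmetry12}, which rests on Lemma \ref{lemmasymmetry} — the invariance of positive LUE correlators under $(N,\a)\mapsto(N+\a,-\a)$, itself a consequence of Theorem \ref{thmmain}. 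This lets the paper posit a single family $\wt Z_{2N-\frac12}=C_NZ_N\left(-\frac12;\t_+\right)$ valid for $N$ and $N+\frac12$ simultaneously, and the consistency condition becomes the functional equation \eqref{eq:eqforCN} for $C_NC_{N+\frac12}$. You avoid the symmetry lemma entirely by introducing the second folding $Z_{2N-1}^{\mathsf{even}}\propto Z_N\left(-\frac12\right)Z_{N-1}\left(\frac12\right)$ and a second constant sequence $D_N$; since the two ansätze cover disjoint residue classes of half-integer indices, no compatibility between them is needed beyond the product relations, and the Laguerre factors indeed cancel in both families. What each approach buys: the paper's reuses the machinery of Theorem \ref{thmmain} and produces along the way the identity \eqref{symmetry12}, equivalent to the combinatorially meaningful Hurwitz symmetry \eqref{symmetryhurwitz}; yours is more elementary and self-contained (independent of Theorem \ref{thmmain}), at the price of one more folding identity and the auxiliary sequence $D_N$. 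Both arguments ultimately rest on the uniqueness of the decomposition \eqref{factorization2} asserted in \cite{DLYZ2016}, and your explicit identification of the residual alternating-constant gauge freedom $\wt Z_{(2j-1)/2}\mapsto\lambda^{(-1)^j}\wt Z_{(2j-1)/2}$ is, if anything, a more careful account of the normalization issue than the paper's; just be aware that a priori $\lambda$ could depend on $\ss$, and it is the formal-in-$N$ structure of $\wt Z$ (Section \ref{secformal}) that excludes this.
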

\end{shaded}
\footnotetext{In the published version of this preprint the value of the constant $C_N$ is incorrect.}

The proof is given in Section \ref{secproofthm2}. 
Identity \eqref{mGUEevenvsLaguerre} can be recast as the following explicit relation;
\be
\wt Z_{2N-\frac 12}(\ss)=\frac{2^{-N+\frac{1}{4}}}{ \pi ^{\frac{N(N+1)}{2}}}\barnes(N+1)
\int_{H_N^+}\exp\tr\left(-\frac X2+\sum_{k\geq 1}s_kX^k\right)\frac{\d X}{\sqrt{\det X}}\;,
\ee
which is obtained from \eqref{mGUEevenvsLaguerre} by a change of variable $X\mapsto \frac X2$ in the LUE partition function.

Theorem \ref{thm2} provides a direct and new link (Corollary \ref{corollary2} below) between the monotone Hurwitz numbers in the expansion \eqref{hurwitzstrict} and \emph{special cubic Hodge integrals}. To state this result, let us denote $\Mgn$ the Deligne-Mumford moduli space of stable nodal Riemann surfaces, $\psi_1,\dots,\psi_n\in H^2\left(\Mgn,\Q\right)$ and $\kappa_j\in H^{2j}\left(\Mgn,\Q\right)$ ($j=1,2,\dots$) the Mumford-Morita-Miller classes, and $\L(\xi):=1+\lambda_1\xi+\cdots+\lambda_g\xi^g$ the Chern polynomial of the Hodge bundle, $\lambda_i\in H^{2i}\left(\Mgn,\Q\right)$. For the definition of these objects we refer to the literature, see e.g. \cite{Z2012} and references therein.

\begin{shaded}
\begin{corollary} \label{corollary2}
For any partition $\mu=(\mu_1,\dots,\mu_\ell)$ of length $\ell$ we have 
\be
\label{hodgemonotonehurwitzformula}
\sum_{g\geq 0}\epsilon^{2g-2} \mathscr{H}_{g,\mu}=2^\ell\sum_{\g\geq 0}\left(2\epsilon\right)^{2\g-2}\sum_{s=1}^{1-2\g+|\mu|-\ell}\left(\lambda+\frac \epsilon 2\right)^{2-2\g+|\mu|-\ell-s}\left(\lambda-\frac \epsilon 2\right)^s H_\g^>(\mu;s)
\ee
where
\begin{align}
\nonumber
\mathscr{H}_{g,\mu}:=\ & 2^{g-1} \sum_{m \geq 0} \frac{(\lambda - 1)^m}{m!} \int_{\overline{\mathcal{M}}_{g,\ell+m}} \Lambda^2(-1)\Lambda\left(\frac 12\right)\exp\left(-\sum_{d\geq 1}\frac{\kappa_d}d\right)\prod_{a=1}^\ell \frac{\mu_a\binom{2\mu_a}{\mu_a}}{1 - \mu_a\psi_a}
\\
\label{Hodgegmu}
&+\frac{\delta_{g,0}\delta_{\ell,1}}2 \left(\lambda-\frac {\mu_1}{\mu_1+1}\right){{2\mu_1}\choose \mu_1}+\frac{\delta_{g,0}\delta_{\ell,2}}2 \frac{\mu_1\mu_2}{\mu_1+\mu_2}{{2\mu_1}\choose \mu_1}{{2\mu_2}\choose \mu_2}.
\end{align}
\end{corollary}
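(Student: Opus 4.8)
\emph{Proof strategy.} I would obtain \eqref{hodgemonotonehurwitzformula} by writing one and the same quantity --- a connected correlator of the Laguerre ensemble at $\a=-\tfrac12$ --- in two ways, once through the Hodge--GUE correspondence and once through the monotone Hurwitz expansion of Theorem~\ref{thmhurwitz}. The three inputs to be chained are: (i) Theorem~\ref{thm2} together with \eqref{connectedmoments}: the prefactor $C_N$ in \eqref{mGUEevenvsLaguerre} is time-independent and is killed by logarithmic differentiation, while the substitution $t_k=2^ks_k$ produces a factor $\prod_a 2^{\mu_a}=2^{|\mu|}$ through the chain rule, so that
\[
2^{-|\mu|}\,\pa_{s_{\mu_1}}\!\cdots\pa_{s_{\mu_\ell}}\log\wt Z_{2N-\frac12}(\ss)\big|_{\ss=\0}=\left\langle\tr X^{\mu_1}\cdots\tr X^{\mu_\ell}\right\rangle_{\mathsf c}
\]
is the connected LUE correlator at $\a=-\tfrac12$ and size $N$; (ii) the Hodge--GUE correspondence of \cite{DLYZ2016} (in the form of \cite{DY2019}, see also \cite{DY2017b}), which after unwinding the normalization in \eqref{factorization2} reads $\sum_{g\ge0}\epsilon^{2g-2}\mathscr{H}_{g,\mu}=\epsilon^{|\mu|-\ell}\,\pa_{s_{\mu_1}}\!\cdots\pa_{s_{\mu_\ell}}\log\wt Z_{\lambda/\epsilon}(\ss)\big|_{\ss=\0}$, with $\lambda$ a rescaled mGUE spatial variable and $\epsilon$ the (free) genus parameter; (iii) Theorem~\ref{thmhurwitz}, eq.~\eqref{hurwitzstrict}, which expresses the LUE correlator itself in strictly monotone double Hurwitz numbers. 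Composing (i)--(iii) and eliminating the auxiliary variables yields \eqref{hodgemonotonehurwitzformula}.

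\emph{The variable dictionary.} The core computation is the passage from \eqref{hurwitzstrict} to the right-hand side of \eqref{hodgemonotonehurwitzformula}. I would introduce free formal parameters $\epsilon,\lambda$ and set the LUE size and parameter to be
\[
N=\frac{\lambda+\epsilon/2}{2\epsilon},\qquad M=N+\a=\frac{\lambda-\epsilon/2}{2\epsilon},
\]
so that automatically $\a=M-N=-\tfrac12$ (as demanded by Theorem~\ref{thm2}), the Wishart ratio is $c=M/N=\dfrac{\lambda-\epsilon/2}{\lambda+\epsilon/2}$, $N^{-2}=\dfrac{(2\epsilon)^2}{(\lambda+\epsilon/2)^2}$, and the index in Theorem~\ref{thm2} becomes $2N-\tfrac12=\lambda/\epsilon$; one checks $c>1-\tfrac1N$ (formally, $\epsilon>0$), so that the \emph{strictly} monotone expansion \eqref{hurwitzstrict} applies --- which is why only $H_\g^>$ occurs, consistently with $\t_-=\0$. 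Feeding $\left\langle\tr X^{\mu_1}\cdots\right\rangle_{\mathsf c}=N^{|\mu|+2-\ell}\sum_g N^{-2g}\sum_s H^>_g(\mu;s)c^s$ into the combination $\epsilon^{|\mu|-\ell}2^{|\mu|}\langle\cdots\rangle_{\mathsf c}$ coming from (i)--(ii), the overall power collapses, $\epsilon^{|\mu|-\ell}2^{|\mu|}N^{|\mu|+2-\ell}=2^{\ell-2}\epsilon^{-2}(\lambda+\epsilon/2)^{|\mu|+2-\ell}$, and one lands exactly on the right-hand side of \eqref{hodgemonotonehurwitzformula}: $(\lambda-\epsilon/2)^s$ comes from $c^s$, $(\lambda+\epsilon/2)^{2-2\g+|\mu|-\ell-s}$ is the residual power of $\lambda+\epsilon/2$, $(2\epsilon)^{2\g-2}$ from $N^{-2\g}$, and the prefactors recombine into $2^\ell$ and the $2^{g-1}$ inside $\mathscr{H}_{g,\mu}$. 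Since the LUE correlators are polynomials in $(N,M)$ (Theorem~\ref{thmmain} and Lemma~\ref{lemmalargeN}) and \eqref{hurwitzstrict} is itself a polynomial identity in $(N,M)$, this manipulation is a genuine identity of rational functions of $(\lambda,\epsilon)$, not an asymptotic matching.

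\emph{The Hodge side.} The remaining input, (ii), I would import from \cite{DLYZ2016,DY2017b,DY2019}: the mGUE partition function is there shown, in the genus expansion, to be a generating function for the special cubic Hodge integrals with integrand $\Lambda^2(-1)\Lambda(\tfrac12)$ --- the Calabi--Yau combination $\tfrac1{-1}+\tfrac1{-1}+\tfrac1{1/2}=0$ --- twisted by $\exp(-\sum_{d\ge1}\kappa_d/d)$, with $\frac{\mu_a\binom{2\mu_a}{\mu_a}}{1-\mu_a\psi_a}$ playing the role of $\pa_{s_{\mu_a}}$ and the mGUE spatial variable entering through the exponential-of-a-shift that builds the sum $\sum_{m\ge0}\frac{(\lambda-1)^m}{m!}\int_{\overline{\mathcal{M}}_{g,\ell+m}}(\cdots)$ of \eqref{Hodgegmu}; the special twist $\exp(-\sum_d\kappa_d/d)$, informally $\exp(\log(1-\bullet))$, is precisely what makes the relevant genus-$0$ integrals vanish for enough marked points, so that the sum over $m$ truncates and $\mathscr{H}_{g,\mu}$ is a Laurent polynomial in $\lambda$. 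The unstable $\delta_{g,0}\delta_{\ell,1}$ and $\delta_{g,0}\delta_{\ell,2}$ terms in \eqref{Hodgegmu} record the genus-$0$ one- and two-point anomalies, the planar parts of $C_{1,0},C_{0,1},C_{2,0}$ of Theorem~\ref{thmmain}, which I would verify directly against \eqref{onepoint+-}--\eqref{2pointCorr} specialized to $\a=-\tfrac12$, $c=\frac{\lambda-\epsilon/2}{\lambda+\epsilon/2}$.

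\emph{Main obstacle.} The real work is step (ii): extracting from \cite{DLYZ2016} the precise generating-function statement for the $\ss$-derivatives of $\log\wt Z_x$ and reconciling it, with all of its normalizations, with $\sum_g\epsilon^{2g-2}\mathscr{H}_{g,\mu}$ --- in particular pinning down the identification of the mGUE spatial/dispersion variables with the rescaled $(\lambda,\epsilon)$ so that, after the powers of $2$ and of $N$ cancel pairwise, the two shifted factors $\lambda\pm\epsilon/2$ emerge with the correct signs (this is the step where the $2^ks_k$ rescaling, the constant $C_N$, and the $\epsilon$-normalization of \cite{DLYZ2016} all have to be tracked simultaneously). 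A secondary point, needed for the formal identity to make sense, is that both sides are Laurent polynomials in $\epsilon$ with polynomial coefficients in $\lambda$ of degree $\le 2-2g+|\mu|-\ell$ in each genus slice; this forces the vanishing of $\mathscr{H}_{g,\mu}$ for $g$ large, which can be read off a posteriori or obtained directly from the vanishing properties of Calabi--Yau cubic Hodge integrals.
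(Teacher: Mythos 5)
Your overall strategy coincides with the paper's: differentiate the Hodge--GUE correspondence \eqref{hodgegueformula} $\ell$ times in the $s_{\mu_a}$ at $\ss=\0$, convert the mGUE side into an LUE correlator via Theorem~\ref{thm2}, and expand that correlator by \eqref{hurwitzstrict}. Your variable dictionary $N=\frac{\lambda+\epsilon/2}{2\epsilon}$, $c=\frac{\lambda-\epsilon/2}{\lambda+\epsilon/2}$ and the power count $\epsilon^{|\mu|-\ell}2^{|\mu|}N^{|\mu|+2-\ell}=2^{\ell-2}\epsilon^{-2}(\lambda+\epsilon/2)^{|\mu|+2-\ell}$ are exactly right and reproduce the right-hand side of \eqref{hodgemonotonehurwitzformula}.

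The gap is in your step (ii). The Hodge--GUE correspondence of \cite{DLYZ2016}, as recorded in Theorem~\ref{thmhodgegue}, is stated for the generating function $\mathcal{H}(\mathbf{p};\epsilon)$ of special cubic Hodge integrals with insertions of $\psi$ classes only, evaluated along the affine substitution $\mathbf{p}(\lambda,\ss)$ of \eqref{changep}; the $\kappa$-twisted quantity $\mathscr{H}_{g,\mu}$ of \eqref{Hodgegmu} is not available in the cited references and cannot simply be ``imported''. Deriving it is the actual content of the paper's proof: after applying the chain rule (which produces the factors $\frac{\mu_a\binom{2\mu_a}{\mu_a}}{1-\mu_a\psi_a}$), one evaluates at $p_k=\delta_{k,1}+\lambda\delta_{k,0}-1$, splits the remaining $n=m+r$ marked points into $m$ carrying $\psi^0$ (giving $(\lambda-1)^m/m!$) and $r$ carrying $\psi^{k_a}$ with $k_a\geq 2$ (giving $(-1)^r/r!$), pushes forward along the map forgetting those $r$ points via an iterated dilaton equation --- which converts $\prod_{a=1}^r\psi_a^{d_a+1}$ into sums over permutations of products of $\kappa$ classes --- and finally resums over $r$ using a symmetric-function identity together with $\sum_{r\geq1}\frac{(-1)^r}{r}\binom{d-1}{r-1}=-\frac1d$ to produce the factor $\exp\left(-\sum_{d\geq1}\kappa_d/d\right)$. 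None of this appears in your proposal, and you yourself flag it as the ``main obstacle'' without resolving it; as it stands the left-hand side of \eqref{hodgemonotonehurwitzformula} is not connected to anything you have proved. A minor additional point: the unstable $\delta_{g,0}\delta_{\ell,1}$ and $\delta_{g,0}\delta_{\ell,2}$ terms of \eqref{Hodgegmu} are simply the first and second $s$-derivatives of the explicit quadratic polynomial $A(\lambda,\ss)$ in \eqref{A(x,s)}; no separate comparison with planar LUE correlators is needed.
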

\end{shaded}

The proof is given in Section \ref{secproofcorollaries}.
Note that $\mathscr{H}_{g,\mu}$ in \eqref{Hodgegmu} is a well defined formal power series in $\C[[\lambda-1]]$, as for dimensional reasons each coefficient of $(\lambda-1)^m$ in \eqref{Hodgegmu} is a finite sum of intersection numbers of Mumford-Morita-Miller and Hodge classes on the moduli spaces of curves.

Matching coefficients in \eqref{hodgemonotonehurwitzformula}, we obtain the following \emph{partial monotone ELSV-like formul\ae}, valid for all partitions $\mu=(\mu_1,\dots,\mu_\ell)$ of length $\ell$;
\begin{align}
\nonumber
\sum_{s\geq 1}H_0^>(\mu;s)={}&\frac 1{2^{\ell-1}}\int_{\overline{\mathcal{M}}_{0,\ell}}\exp\left(-\sum_{d \geq 1} \frac{\kappa_d}d\right)\prod_{a = 1}^{\ell} \frac{\mu_a{{2\mu_a}\choose{\mu_a}}}{1 - \mu_a\psi_a}
\\
\label{eqhurwitzg0} 
&+\delta_{\ell,1} \frac{1}{\mu_1+1}{{2\mu_1}\choose \mu_1}+\delta_{\ell,2} \frac{ \mu_1\mu_2}{\mu_1+\mu_2}{{2\mu_1}\choose \mu_1}{{2\mu_2}\choose \mu_2}
\end{align}
in genus zero (see also Example \ref{examplehodgeguegenus0}) and
\begin{align}
\nonumber
&\sum_{\g=0}^g 2^{4\g}\sum_{s\geq 1}\left[\sum_{p\geq 0}(-1)^p {{2-2\g+|\mu|-\ell-s} \choose p} {s \choose 2g-2\g-p}\right] H_\g^>(\mu;s)
\\
\label{eqhurwitzg1}
&\qquad
=2^{3g+1-\ell}
\int_{\overline{\mathcal{M}}_{g,\ell}} \Lambda^2(-1)\L\left(\frac 12\right)\exp\left(-\sum_{d \geq 1} \frac{\kappa_d}d\right)\prod_{a = 1}^{\ell} \frac{\mu_a{{2\mu_a}\choose{\mu_a}}}{1 - \mu_a\psi_a}
\end{align}
in higher genus $g\geq 1$. Note that the left  sides of \eqref{eqhurwitzg0} and \eqref{eqhurwitzg1} are finite sums.
The connection between Hurwitz numbers  and Hodge integrals, the so called ELSV formula,  was  introduced in  \cite{ELSV2001}, by T.Ekedahl, S. Lando, M. Shapiro, A. Vainshtein. 
Insertion of $\kappa$ classes in ELSV-type formul\ae\ for \emph{monotone} Hurwitz numbers have already been considered in the literature, e.g. in \cite{ALS2016} for single monotone Hurwitz numbers and in \cite{BGF2017} for \emph{orbifold} monotone Hurwitz numbers.

The relation between Hodge integrals and Hurwitz numbers expressed by Corollary \ref{corollary2} is obtained from Theorem~\ref{thm2} by re-expanding the topological expansion \eqref{hurwitzstrict}.
Indeed fixing $\a=-\frac 12$  implies that the parameter $c$  in \eqref{hurwitzstrict} is no longer independent of $N$ (\emph{soft-edge} limit) but actually scales as $c=1-\frac 1{2N}$ (\emph{hard-edge} limit). 
This explains why we cannot derive from the Hodge-GUE correspondence an expression in terms of Hodge integrals for each Hurwitz number in \eqref{hurwitzstrict}, but only an expression for a combination of Hurwitz numbers in different genera.

In particular, to obtain the formul\ae\ of Corollary \ref{corollary2} one has to re-expand the topological expansion \eqref{hurwitzstrict} in $N$ after the substitution $c=1-\frac 1{2N}$; that the result of this re-expansion, namely the right side of \eqref{hodgemonotonehurwitzformula}, involves only even powers of $\epsilon$ is a consequence of the invariance of positive LUE correlators under the involution $(N,\a)\mapsto(N+\a,-\a)$; this symmetry will be described below in Lemma \ref{lemmasymmetry}.
More concretely, this symmetry implies the symmetry of the positive LUE correlators under the involution $(N,c)\mapsto (Nc,c^{-1})$ which in view of \eqref{hurwitzstrict} is equivalent to the identity
\be
\label{symmetryhurwitz}
H_g^>(\mu;s)=H_g^>(\mu;2-2g+|\mu|-\ell(\mu)-s).
\ee
The above identity  implies that the small $\epsilon$ expansion on the right side of \eqref{hodgemonotonehurwitzformula} contains only even powers of $\epsilon$.
It is also possible to check the symmetry \eqref{symmetryhurwitz} by purely combinatorial arguments, see Rem.\ \ref{remcombinatorial}.

\begin{remark}
It is known that special cubic Hodge integrals are related to a $q$-deformation of the representation theory of the symmetric group \cite{OP2004}; it would be interesting to directly provide a link to the monotone Hurwitz numbers under consideration here.
\end{remark}

\subsection*{Organization of the paper}

In Section \ref{secproof} we prove Theorem \ref{thmmain}; a summary of the proof is given in the beginning of that section.
In Section \ref{secproofprop13} we analyze the formul\ae\ of Theorem \ref{thmmain} to prove Proposition \ref{prop13}.
In Section \ref{secLUE} we prove the identification of the mGUE and LUE partition functions, namely Theorem \ref{thm2}; then we recall the Hodge-GUE correspondence \cite{DLYZ2016} and we deduce Corollary \ref{corollary2}.
Finally, in the tables of App.\ \ref{apptable} we collect several connected correlators and weighted monotone double Hurwitz numbers, computed applying the formul\ae\ of Theorem \ref{thmmain}.

\section{Proof of Theorem \ref{thmmain}}\label{secproof}

In this section we prove our first main result, Theorem \ref{thmmain}. The proof combines two main ingredients; on one side the interpretation of the matrix integral \eqref{Z} as an \emph{isomonodromic tau function} \cite{BEH2006} and on the other side some algebraic manipulations of residue formul\ae\ introduced in \cite{BDY2016}. More in detail, we first introduce the relevant family of monic orthogonal polynomials and derive a compatible system of (\emph{monodromy-preserving}) ODEs in the parameters $\t$ (Proposition \ref{propdeformation}); throughout this section, in the interest of lighter notations, we set
\be
\t:=(\t_+,\t_-)=(\dots,t_{-2},t_{-1},t_1,t_2,\dots).
\ee
Such orthogonal polynomials reduce to monic Laguerre polynomials for $\t=\0$. With the aid of this system of deformations we then compute arbitrary derivatives of the LUE partition function \eqref{Z} in terms of formal residues of expressions that do not contain any derivative in $\t$ (Propositions \ref{propfirst}, \ref{propsecond} and \ref{propn}). Finally, the formul\ae\ of Theorem \ref{thmmain} are found by evaluation of these residues at $\t=\0$; the latter task is then to compute the asymptotic expansions of Cauchy transforms of Laguerre polynomials at zero and infinity (Propositions \ref{propexpansion+} and \ref{propexpansion-}). It is worth stressing at this point that the two formal series $R_\pm$ of \eqref{R+}-\eqref{R-} in Theorem \ref{thmmain} are actually asymptotic expansions of the \emph{same} analytic function at two different points.

As a preliminary to the proof, let us comment on the definition \eqref{Z} of the LUE partition function. Even though a formal approach is sufficient to make sense of the LUE partition function as a generating function, we shall also regard it as genuine analytic function of the times $\t$. In this respect let us point out that to make strict non-formal sense of \eqref{Z} one can assume that the vector of times is finite, namely that
\be
\label{truncate}
t_k\not=0\iff K_-\leq k\leq K_+,
\ee
and then, to ensure convergence of the matrix integral, that $\Re t_{K_-}<0$ for $K_-<0$ and $\Re t_{K_+}<\delta_{K_+,1}$ for $K_+>0$.

Though we have to assume in our computations that we have chosen such an arbitrary truncation of the times, this is inconsequential in establishing the formul\ae\ of Theorem \ref{thmmain}. More precisely, such truncation implies that \eqref{connectedmoments} holds true only as long as $K_+,K_-$ are large enough, and the formal generating functions $C_{r_+,r_-}$ (as it follows from our arguments, see Section \ref{subsecproof}) are manifestly independent of $K_\pm$ and are therefore obtained by a well-defined inductive limit $K_+\to\infty,K_-\to-\infty$.

Moreover, in \eqref{Z} the parameter $\a$ has to satisfy $\Re\a>-1$; even worse, in \eqref{connectedmoments} we have to assume that $\Re\a>-\sum_{i=1}^rk_i-1$ to enforce convergence of the matrix integral at $X=0$. This restriction can be lifted, if $\a$ is not an integer, by taking a suitable deformation of the contour of integration.
This caveat is crucial to us, as we shall need the formal expansion of the matrix $R(x)$ at all orders near $x=0$, compare with \eqref{R-}; the coefficients of this expansion are in general ill-defined for integer $\a$ (although truncated expansions are well defined if $\a$ is confined to suitable right half-planes). It is clear how to overcome these issues by the aforementioned analytic continuation, hence we do not dwell further on this point.

\subsection{Orthogonal polynomials and deformation equations}\label{secdeformation}

\subsubsection{Orthogonal polynomials}

Let $\pi_\ell^{(\a)}(x;\t)=x^\ell+\dots$ ($\ell\geq 0$) be the family of \emph{monic} orthogonal polynomials, uniquely defined by the property
\be
\label{orthogonality}
\int_0^{+\infty}\pi_\ell^{(\a)}(x;\t)\pi_{\ell'}^{(\a)}(x;\t)\e^{-V_\alpha(x;\t)}\d x=\delta_{\ell,\ell'}h_\ell(\t),\qquad \ell,\ell'\geq 0,
\ee
where 
\be
\label{Valpha}
V_\alpha(x;\t):=x-\alpha\log(x)-\sum_{k\not=0}t_k x^k,\qquad x>0.
\ee
For $\t=\0$ they essentially reduce to the \emph{generalized Laguerre polynomials} $L_\ell^{(\a)}(x)$; more precisely, denoting $\pi_{\ell}^{(\a)}(x):=\pi_{\ell}^{(\a)}(x;\t=\0)$ we have the identity
\be
\label{laguerrepolynomial}
\pi_\ell^{(\a)}(x):=(-1)^\ell\ell!L_\ell^{(\a)}(x)=\sum_{j=0}^\ell\frac{(-1)^{\ell-j}(\ell-j+1)_j(j+1+\a)_{\ell-j}}{j!}x^j,\qquad \ell\geq 0.
\ee
Using \emph{Rodrigues formula}
\be
\label{rodrigues}
\pi_\ell^{(\a)}(x)=(-1)^\ell x^{-\a}\e^x\left(\frac{\d^\ell}{\d x^\ell}\left(\e^{-x}x^{\a+\ell}\right)\right)
\ee
and integration by parts we obtain
\be
\int_0^{+\infty}x^k\pi_\ell^{(\a)}(x)\e^{-x}x^\a\d x=\int_0^{+\infty}\left(\frac{\d^\ell}{\d x^\ell}x^k\right)\e^{-x}x^{\a+\ell}\d x=
\begin{cases}
0, & k<\ell,
\\
\ell!\, \G(\a+\ell+1), & k=\ell.
\end{cases}
\ee
Hence the orthogonality property \eqref{orthogonality} for $\t=\0$ reads as
\be
\label{h0}
\int_0^{+\infty}\pi_\ell^{(\a)}(x)\pi_{\ell'}^{(\a)}(x)x^{\a}\e^{-x}\d x=h_\ell\delta_{\ell,\ell'},
\qquad h_\ell=\ell!\, \G(\a + \ell+1),
\ee
where $h_\ell=h_\ell(\t=\0)$.
For general $\t$ instead, the monic orthogonal polynomials $\pi_0^{(\a)}(\t)$, $\pi_1^{(\a)}(\t)$,  $\dots$, $\pi_{L-1}^{(\a)}(\t)$ exist whenever the moment matrix
\be
\left(m_{i+j}\right)_{i,j=0}^{L-1}, \qquad m_{\ell}:=\int_0^{+\infty}x^{\ell}\e^{-V_\alpha(x;\t)}\d x,
\ee
is non-degenerate. In the present case, their existence is ensured for real $\t$ by the fact that the moment matrix $\left(m_{i+j}\right)_{i,j=0}^{L-1}$ is positive definite.

By standard computations we have the following identity
\be
\label{h}
Z_N(\a;\t)=\frac{\pi^{\frac {N(N-1)}2}}{\barnes(N+1)} \prod_{\ell=0}^{N-1}h_\ell(\t)
\ee
where $h_\ell(\t)$ are defined by \eqref{orthogonality}.

\subsubsection{Connection with Toda lattice hierarchy}\label{paragraphtoda}
It is well known that the monic orthogonal polynomials $\pi_\ell^{(\a)}(x;\t)$ satisfy a three term recurrence relation
\be
x\pi_\ell^{(\a)}(x;\t)=\pi_{\ell+1}^{(\a)}(x;\t)+v_\ell^{\alpha}(\t)\pi_{\ell}^{(\a)}(x;\t)+w_\ell^\alpha(\t) \pi_{\ell-1}^{(\a)}(x;\t).
\ee
That is, the orthogonal polynomials are eigenvectors of the second order difference operator
\be
\left( L\, \psi\right)_\ell = \psi_{\ell+1} +v^{\alpha}_\ell \psi_\ell + w^{\alpha}_\ell \psi_{\ell-1}.
\ee
The corresponding half-infinite tri-diagonal matrix, also denoted $L=\left( L_{ij}\right)$, $i,j\geq 0$, takes the form 
\be
L=\left(
\begin{array}{ccccc}
v^\a_0 & 1 & 0 & 0 &\cdots
\\
w^\a_1 & v^\a_1 & 1 & 0 & \cdots
\\
0 & w^\a_2 & v^\a_2 & 1 &\cdots
\\
0 & 0 & w^\a_3 & v^\a_3 &\cdots
\\
\vdots & \vdots & \vdots & \vdots & \ddots
\end{array}
\right).
\ee
It is a standard fact that $L$, and therefore the coefficients $v^{\alpha}_n(\t)$ and $w^{\alpha}_n(\t)$ evolve with respect to positive times $\t_+=(t_1,t_2,\dots)$, for any fixed $\t_-=(t_{-1},t_{-2},\dots)$, according to the \emph{Toda lattice hierarchy} \cite{DLT1989,KM1996,AvM2001,DY2017,CdlI2018} 
\be
\frac{\pa L}{\pa t_k}=\left[\left(L^k\right)_+,L\right]
\ee
where for any matrix $P$, $P_+$ denotes the lower triangular part of $P$, i.e.\ the matrix with entries
\be
(P_+)_{ij}:=\begin{cases}P_{ij} & {\rm if}~i\geq j \\ 0 & \text{\rm if}~i<j \end{cases}
\ee
where $P_{ij}$ are the entries of $P$. Setting $\t_-=\0$, we can also write the initial data of the Toda hierarchy as
\be
v^{\alpha}_\ell(\t_+=\t_-=\0)=2\ell+1+\alpha,\quad w^{\alpha}_\ell(\t_+=\t_-=\0)=\ell(\ell+\alpha)
\ee
that are the recurrence coefficients for the monic generalized Laguerre polynomials \eqref{laguerrepolynomial}. Moreover, it is well known, see loc.\ cit., that $Z_N(\a;\t_+,\t_-=\0)$ is the Toda lattice tau function corresponding to this solution.

It can be observed that the evolution with respect to the negative times $\t_-=(t_{-1},t_{-2},\dots)$ is also described by a Toda lattice hierarchy and a simple shift in $\a$. More precisely, we claim that $Z_N(\a-2N,\t_+=\0,\t_-)$ is also a Toda lattice tau function, with a \emph{different} initial datum; namely, it is associated with the tri-diagonal matrix $\wt L$ satisfying the Toda hierarchy
\be
\label{todainverse}
\frac{\pa \wt L}{\pa t_{-k}}=\left[\left(\wt L^k\right)_+,\wt L\right]
\ee
constructed as above from the three term recurrence of monic orthogonal polynomials, this time with respect to the measure
\be
\label{measureinverse}
\exp\left(-\frac 1x+\sum_{k>0}t_{-k}x^k\right)\frac{\d x}{x^\a}
\ee
on $(0,+\infty)$. To see it, let us rewrite
\begin{align}
\nonumber
Z_N(\a;\t_+=\0,\t_-)&=\int_{\H_N^+}{\det}^\a X\exp\tr\left(-X+\sum_{k<0}t_kX^k\right) \d X
\\
&=\int_{\H_N^+}{\det}^{-\a} \wt X\exp\tr\left(-\wt X^{-1}+\sum_{k>0}t_{-k}\wt X^{k}\right) \d\left(\wt X^{-1}\right)
\end{align}
where we perform the change of variable $\wt X=X^{-1}$, which is a diffeomorphism of $\H_N^+$.
The Lebesgue measure \eqref{lebesgue} can be rewritten (on the full-measure set of semisimple matrices) as 
\be
\d X=\d U\prod_{i<j}(x_i-x_j)^2\d x_1\cdots\d x_N
\ee
where $\d U$ is a suitably normalized Haar measure on $\mathrm{U}(N)/(\mathrm{U}(1))^N$ and $x_1,\dots,x_N$ are the eigenvalues of $X$. Therefore the measure transforms as
\begin{align}
\nonumber
\d \wt X&=\d U \prod_{i<j}\left(\frac 1{x_i}-\frac 1{x_j}\right)^2\d\left(\frac 1{x_1}\right)\cdots\d\left(\frac 1{x_N}\right)
\\
&
=\frac {\d U}{(x_1\cdots x_N)^{2N}}\prod_{i<j}(x_i-x_j)^2\d x_1\cdots\d x_N=\frac{\d X}{{\det}^{2N} X},
\end{align}
yielding
\be
\d X=\frac{\d \wt X}{{\det}^{2N}\wt X}.
\ee
Summarizing, we have
\be
\label{finalinverse}
Z_N(\a;\t_+=\0,\t_-)=\int_{\H_N^+}\frac{\exp\tr\left(-\wt X^{-1}+\sum_{k>0}t_{-k}\wt X^{k}\right)}{{\det}^{\a+2N} \wt X} \d \wt X
\ee
and the standard arguments of loc.\ cit.\ now apply to the matrix integral $Z_N(\a-2N;\t_+=\0,\t_-)$ to show that it is indeed the Toda lattice tau function associated with the solution $\wt L$.

For our purposes, we need to describe the simultaneous dependence on $\t_+$ and $\t_-$; this is achieved by the zero-curvature condition \eqref{zerocurvatureA} of the system of compatible ODEs \eqref{deformation} which we now turn our attention to.

\subsubsection{Cauchy transform and deformation equations}
Let us denote by
\be
\label{cauchytransform}
\wh\pi_\ell^{(\a)}(x;\t):=\frac 1{2\pi\i}\int_0^{+\infty}\pi_\ell^{(\a)}(\xi;\t)\e^{-V_{\alpha}(\xi;\t)}\frac{\d \xi}{\xi-x}
\ee
the \emph{Cauchy transforms} of the orthogonal polynomials $\pi_\ell^{(\a)}(x;\t)$.  
Then, for fixed $N$ introduce the following $2\times 2$ matrix
\be
\label{Y}
Y(x;\t):=
\begin{pmatrix}
\pi^{(\a)}_N(x;\t) & \wh\pi^{(\a)}_N(x;\t)\\
-\frac{2\pi\i}{h_{N-1}(\t)}\pi_{N-1}^{(\a)}(x;\t) & -\frac{2\pi\i}{h_{N-1}(\t)}\wh\pi_{N-1}^{(\a)}(x;\t)
\end{pmatrix}
\ee
where, for the interest of clarity, we drop the dependence on $N,\a$. The matrix $Y(x;\t)$ was introduced in the seminal paper \cite{IKF1990} to study  the general connection between orthogonal polynomials and random matrix models. The rest of this section follows from \cite{IKF1990}. The matrix \eqref{Y} solves the following Riemann-Hilbert problem for orthogonal polynomials; it is analytic for $x\in\C\setminus[0,\infty)$ and continuous up to the boundary $(0,\infty)$ where it satisfies the  \emph{jump condition}
\be
\label{jumpY}
Y_+(x;\t)=Y_-(x;\t)
\begin{pmatrix}
1 & \e^{-V_\a(x;\t) }
\\
0 & 1
\end{pmatrix}, \qquad x\in (0,\infty)
\ee
where $Y_\pm(x;\t)=\lim_{\epsilon\to 0_+}Y(x\pm \i\epsilon;\t)$. Moreover, at the endpoints $x=\infty,0$ we have
\begin{align}
\label{expinfty}
Y(x;\t)&\sim\left(\1+\O(x^{-1})\right)x^{N \s_3},&& \!\!\!\!\!\!\!\!\!\!\!\!\!\!\!\!\!\!\!\! x\to\infty,
\\
\label{exp0}
Y(x;\t)&\sim G_0(\t)\left(\1+\O(x)\right),	  	&& \!\!\!\!\!\!\!\!\!\!\!\!\!\!\!\!\!\!\!\! x\to 0,
\end{align}
within the sector $0<\arg x<2\pi$; the matrix $G_0(\t)$ in \eqref{exp0} is independent of $x$ and it is invertible (actually it has unit determinant, as we now explain).

The jump matrix in \eqref{jumpY} has unit determinant, hence $\det Y(x;\t)$ is analytic for all complex $x$ but possibly for isolated singularities at $x=0,\infty$; however, $\det Y(x;\t)\sim 1$ when $x\to \infty$, see \eqref{expinfty}, and is bounded as $x\to 0$, see \eqref{exp0}; therefore we conclude by the Liouville theorem that $\det Y(x;\t)\equiv 1$ identically.

Introduce the $2\times 2$ matrix
\be
\label{Psi}
\Psi(x;\t):=Y(x;\t)\exp\left( -V_{\alpha}(x;\t)\frac{\s_3}2\right).
\ee
Here we choose the branch of the logarithm appearing in $V_\a(x;\t)$, see \eqref{Valpha}, analytic for $x\in\C\setminus[0,\infty)$ satisfying $\lim_{\epsilon\to 0_+}\log(x+\i\epsilon)\in\R$; to be consistent with \eqref{Valpha}, we shall identify $V_\a(x;\t)$, without further mention, with $V_{\a,+}(x;\t)=\lim_{\epsilon\to 0_+}V_\a(x+\i\epsilon;\t)$ whenever $x>0$.

Accordingly, $\Psi(x;\t)$ is analytic for $x\in\C\setminus[0,\infty)$.

\begin{proposition}\label{propdeformation}
The matrix $\Psi$ in \eqref{Psi} satisfies a compatible system of linear $2\times 2$ matrix ODEs with rational coefficients;
\be
\label{deformation}
\frac{\pa\Psi(x;\t)}{\pa x}=\mathcal{A}(x;\t)\Psi(x;\t),\qquad \frac{\pa\Psi(x;\t)}{\pa t_k}=\Omega_k(x;\t)\Psi(x;\t),\quad k\not=0.
\ee
In particular, for $k>0$, the matrices $\Omega_k(x;\t)$ are polynomials in $x$ of degree $k$, whilst for $k<0$ they are polynomials in $x^{-1}$ of degree $|k|$ without constant term; more precisely, they admit the representations
\be
\label{Omega}
\Omega_k(x;\t)=\res\xi \left(Y(\xi;\t)\frac{\s_3}2 Y^{-1}(\xi;\t)\frac{\xi^k\d\xi}{x-\xi}\right)
\ee
where $\res\xi$ denotes $\res{\xi=\infty}$ when $k>0$ and $\res{\xi=0}$ when $k<0$. On the other hand, $\mathcal{A}(x;\t)$ is a Laurent polynomial in $x$, provided times are truncated according to \eqref{truncate}.
\end{proposition}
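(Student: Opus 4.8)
The plan is to pass from $Y$ to the gauge-transformed matrix $\Psi=Y\,\e^{-V_\alpha\s_3/2}$ and exploit the fact that, although the jump \eqref{jumpY} of $Y$ depends on $x$ and on the times $\t$, the jump of $\Psi$ does \emph{not}. First I would record this: since the only branch-dependent term in $V_\alpha$ is $-\alpha\log x$, one has $V_{\alpha,+}(x;\t)-V_{\alpha,-}(x;\t)=2\pi\i\alpha$ for $x>0$, and combining this with \eqref{jumpY} and \eqref{Psi} a one-line matrix computation gives the jump relation
\[
\Psi_+(x;\t)=\Psi_-(x;\t)\begin{pmatrix}\e^{-\pi\i\alpha} & \e^{-\pi\i\alpha}\\ 0 & \e^{\pi\i\alpha}\end{pmatrix},\qquad x\in(0,\infty),
\]
in which the jump matrix is constant in both $x$ and $\t$. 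As $\det\Psi\equiv 1$ (because $\det Y\equiv 1$ and $\det\e^{-V_\alpha\s_3/2}=1$), the logarithmic derivatives $\mathcal{A}:=(\pa_x\Psi)\Psi^{-1}$ and $\Omega_k:=(\pa_{t_k}\Psi)\Psi^{-1}$, $k\neq 0$, are analytic on $\C\setminus[0,\infty)$; and since the jump matrix of $\Psi$ is constant, its $x$- and $t_k$-derivatives vanish, so $\mathcal{A}$ and $\Omega_k$ have no jump across $(0,\infty)$ at all. They therefore extend to single-valued functions on $\C$ analytic except possibly at $x=0$ and $x=\infty$, hence rational by Liouville's theorem, i.e.\ Laurent polynomials in $x$. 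The compatibility of the system \eqref{deformation} is then automatic, since $\mathcal{A}$ and all the $\Omega_k$ are logarithmic derivatives of one and the same invertible matrix $\Psi$, so mixed partials commute; this is the zero-curvature (compatibility) condition for \eqref{deformation}.

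Next I would read off the precise polar structure at the two singular points from the local forms \eqref{expinfty}--\eqref{exp0}, which, as asymptotic expansions of solutions of a Riemann--Hilbert problem, may be differentiated termwise in $x$ and in $\t$. From $\Psi=Y\,\e^{-V_\alpha\s_3/2}$ one computes $\mathcal{A}=(\pa_xY)Y^{-1}-\tfrac12(\pa_xV_\alpha)\,Y\s_3Y^{-1}$; here $\pa_xV_\alpha=1-\alpha x^{-1}-\sum_{k\neq 0}k\,t_k\,x^{k-1}$ is a Laurent polynomial once the times are truncated as in \eqref{truncate}, the conjugate $Y\s_3Y^{-1}$ equals $\s_3+\O(x^{-1})$ at $x=\infty$ and $G_0\s_3G_0^{-1}+\O(x)$ at $x=0$ (hence is bounded at both points), while $(\pa_xY)Y^{-1}$ has the expansion $N\s_3x^{-1}+\O(x^{-2})$ at $\infty$ and at worst a finite-order pole at $0$; this confirms $\mathcal{A}$ is a Laurent polynomial. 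For the deformation matrices, since $\pa_{t_k}(-V_\alpha\s_3/2)=\tfrac12x^k\s_3$ commutes with $\e^{\pm V_\alpha\s_3/2}$, the same manipulation yields $\Omega_k=(\pa_{t_k}Y)Y^{-1}+\tfrac12x^kY\s_3Y^{-1}$; the $\t$-independence of the factor $x^{N\s_3}$ in \eqref{expinfty} forces $(\pa_{t_k}Y)Y^{-1}=\O(x^{-1})$ at $\infty$, and \eqref{exp0} shows $\Omega_k$ is bounded at $0$ for $k>0$, and vanishes at $\infty$ with at worst an order-$|k|$ pole at $0$ for $k<0$. Together with the Laurent-polynomial property this gives exactly the asserted shapes: a degree-$k$ polynomial in $x$ for $k>0$, and a degree-$|k|$ polynomial in $x^{-1}$ without constant term for $k<0$, the leading coefficients $\tfrac12\s_3$ resp.\ $\tfrac12G_0\s_3G_0^{-1}$ being nonzero.

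Finally, the residue representation \eqref{Omega} drops out of elementary bookkeeping: Laurent-expand $Y(\xi)\tfrac{\s_3}{2}Y(\xi)^{-1}$ at $\xi=\infty$ when $k>0$ (resp.\ at $\xi=0$ when $k<0$), multiply by the geometric-series expansion of $\xi^k/(x-\xi)$, and extract the coefficient of $\xi^{-1}$; matching the outcome with the polynomial part of $\tfrac12x^kY\s_3Y^{-1}$ identifies $\res\xi(\cdots)$ with $\Omega_k$ and, as a by-product, pins down $(\pa_{t_k}Y)Y^{-1}$ as minus the remaining tail of that expansion. The hard part will be the local analysis at $x=0$: one must verify that the $-\alpha\log x$ term contributes exactly the constant factor $\e^{\mp\pi\i\alpha}$ to the jump of $\Psi$ with no residual $x$-dependence, and handle the generically non-integer exponent $\alpha$ together with the $\t$-differentiability of the expansion \eqref{exp0}, for which one invokes the analytic-continuation caveat recorded after \eqref{truncate}. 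Everything else is the residue bookkeeping above together with standard properties of the Riemann--Hilbert problem of \cite{IKF1990}.
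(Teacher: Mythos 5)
Your proposal is correct and follows essentially the same route as the paper: the constant jump matrix for $\Psi$ (from $V_{\a,+}-V_{\a,-}=2\pi\i\a$), Liouville's theorem applied to $\mathcal{A}=(\pa_x\Psi)\Psi^{-1}$ and $\Omega_k=(\pa_{t_k}\Psi)\Psi^{-1}$, the decomposition $\Omega_k=(\pa_{t_k}Y)Y^{-1}+\tfrac12 x^k Y\s_3Y^{-1}$ with $(\pa_{t_k}Y)Y^{-1}=\O(x^{-1})$ at infinity, and identification of the polynomial part with the residue formula \eqref{Omega}. Your treatment of the $k<0$ case (vanishing at $\infty$ forcing no constant term) makes explicit a point the paper leaves to "similar lines," but the argument is the same.
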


\begin{proof}
We note that \eqref{jumpY} implies the following jump condition for the matrix $\Psi$, with a \emph{constant} jump matrix;
\be
\Psi_+(x;\t)=\Psi_-(x;\t)\begin{pmatrix}
\e^{-\i\pi\a} & \e^{-\i\pi\a}
\\
0 & \e^{\i\pi\a}
\end{pmatrix}
,\qquad x\in (0,\infty).
\ee
Here $\Psi_\pm(x;\t)=\lim_{\epsilon\to 0_+}\Psi(x\pm \i\epsilon;\t)$; to prove this relation we observe that the branch of the logarithm we are using satisfies $\log_+(x)=\log_-(x)-2\pi\i$ for $x\in(0,\infty)$ and so $V_{\a,+}(x;\t)=V_{\a,-}(x;\t)+2\i\pi\a$, with a similar notation for the $\pm$-boundary values along $(0,\infty)$.
Hence all derivatives of $\Psi$ satisfy the same jump condition, with the same jump matrix. 
It follows that the ratios $\mathcal{A}:=\frac{\pa\Psi}{\pa x}\Psi^{-1}$ and $\Omega_k:=\frac{\pa\Psi}{\pa t_k}\Psi^{-1}$ (for all $k\not=0$) are regular along the positive real axis; however they may have isolated singularities at $x=0$ and at $x=\infty$.
Let us start from $\Omega_k$ for $k>0$. In such case, it follows from \eqref{expinfty} and \eqref{exp0} that $\Omega_k$ has a polynomial growth at $x=\infty$ and it is regular at $x=0$:
\be
\Omega_k=\frac{\pa Y(x;\t)}{\pa t_k}Y^{-1}(x;\t)+Y(x;\t)\frac{\s_3}2 Y^{-1}(x;\t)x^k\sim
\begin{cases}
\frac{\s_3}2x^k+\O\left(x^{k-1}\right),& x\to\infty\\
\O(1),& x\to 0.
\end{cases}
\ee
From the Liouville theorem we conclude that $\Omega_k$ for $k>0$ is a polynomial, which therefore equals the polynomial part of its expansion at $x=\infty$, which is computed as in \eqref{Omega}, since at $x=\infty$ the term $\frac{\pa Y}{\pa t_k}Y^{-1}=\O(x^{-1})$ does not contribute to the polynomial part of the expansion. The statement for $\Omega_k$ for $k<0$ follows along similar lines. 
Likewise, $\mathcal{A}(x;\t)$ in \eqref{deformation} has a polynomial growth at $x=\infty$ and a pole at $x=0$ and therefore it is a Laurent polynomial.
\end{proof}

The compatibility of \eqref{deformation} is ensured by the existence of the solution $\Psi(x;\t)$. In particular this implies the \emph{zero curvature equations}
\be
\label{zerocurvatureA}
\frac{\pa\mathcal{A}}{\pa t_k}-\frac{\pa \Omega_k}{\pa x}=[\Omega_k,\mathcal{A}],\qquad k\not=0.
\ee

\begin{remark}\label{remtraceless}
Since the determinants of $Y(x;\t)$ and $\Psi(x;\t)$ are identically equal to $1$, it follows that $\Omega_k(x;\t)$ and $\mathcal{A}(x;\t)$, introduced in \eqref{deformation}, are traceless.
\end{remark}

We end this paragraph by considering the restriction $\t=\0$.
The matrix $\Psi(x):=\Psi(x;\t=\0)$ is obtained from the Laguerre polynomials \eqref{laguerrepolynomial}.
The matrix $\mathcal{A}(x)=\frac{\pa\Psi(x)}{\pa x}\Psi(x)^{-1}$ takes the form
\be
\label{Amatrix}
\mathcal{A}(x):=\mathcal{A}(x;\t=\0)=-\frac 12 \s_3+
\frac 1x
\begin{pmatrix}
N+\frac\a 2 & -\frac{h_N}{2\pi\i}
\\ 
\frac{2\pi\i}{h_{N-1}} & -N-\frac\a 2
\end{pmatrix}
\ee
which has a \emph{Fuchsian} singularity at $x=0$ and an \emph{irregular} singularity of \emph{Poincar\'{e} rank} $1$ at $x=\infty$. 

\begin{remark}\label{remfuchsian}
The \emph{Frobenius indices} of \eqref{Amatrix} at $x=0$ are $\pm\frac{\a}2$, and so the Fuchsian singularity $x=0$ is \emph{non-resonant} if and only if $\a$ is not an integer. It is worth pointing out that the monodromy matrix $\frac{\a}2\s_3$ at $x=0$ is preserved under the $\t$-deformation \eqref{deformation}.
\end{remark}

\subsection{Residue formul\ae\ for correlators}\label{sectioncorrelations}

\subsubsection{One-point correlators}
The general type of formul\ae\ of Proposition \ref{propfirst} below first appeared in \cite{BEH2006}, where the authors consider a very general case. Such formul\ae\ identify the LUE partition function with the \emph{isomonodromic tau function} \cite{JMU1981} of the \emph{monodromy-preserving} deformation system \eqref{deformation}.
The starting point for the following considerations is the representation \eqref{h} for the LUE partition function \eqref{Z}.

\begin{proposition}
\label{propfirst}
Logarithmic derivatives of the LUE partition function admit the following expression in terms of \emph{formal} residues;
\be
\frac{\pa\log Z_N(\a;\t)}{\pa t_k}=-\res{x}\tr\left(Y^{-1}(x;\t) \frac{\pa Y(x;\t)}{\pa x} \frac{\s_3}2\right)x^k\d x
\ee
where the symbol $\res{x}$ denotes $\res{x=\infty}$ when $k>0$ and $\res{x=0}$ when $k<0$.
\end{proposition}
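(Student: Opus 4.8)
The plan is to differentiate the product formula \eqref{h} for $Z_N(\a;\t)$ in terms of the norming constants $h_\ell(\t)$, and then express each $\pa_{t_k}\log h_\ell(\t)$ as a residue built from the matrix $Y$. First I would record that \eqref{h} gives immediately
\be
\frac{\pa\log Z_N(\a;\t)}{\pa t_k}=\sum_{\ell=0}^{N-1}\frac{\pa\log h_\ell(\t)}{\pa t_k}.
\ee
Next, differentiating the orthogonality relation \eqref{orthogonality} with $\ell=\ell'$ under the integral sign, and using that $\pa_{t_k}V_\a(x;\t)=-x^k$ together with the fact that $\pa_{t_k}\pi_\ell^{(\a)}$ is a polynomial of degree $<\ell$ (since $\pi_\ell$ is monic) which is therefore orthogonal to $\pi_\ell$, one gets the standard identity
\be
\frac{\pa\log h_\ell(\t)}{\pa t_k}=\frac 1{h_\ell(\t)}\int_0^{+\infty}x^k\left(\pi_\ell^{(\a)}(x;\t)\right)^2\e^{-V_\a(x;\t)}\d x.
\ee
So the heart of the matter is to identify this telescoping-ready quantity with the relevant residue of $\tr\left(Y^{-1}\pa_x Y\,\tfrac{\s_3}2\right)x^k$.

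The key step is a residue/contour manipulation. I would start from the representation \eqref{Omega} of $\Omega_k$, or rather from its scalar shadow: the claim is that
\be
-\res{x}\tr\left(Y^{-1}(x;\t)\,\pa_x Y(x;\t)\,\frac{\s_3}2\right)x^k\d x
\ee
equals $\sum_{\ell=0}^{N-1}\pa_{t_k}\log h_\ell$. To see this, recall from Proposition \ref{propdeformation} that $\Omega_k=\pa_{t_k}Y\cdot Y^{-1}+Y\tfrac{\s_3}2 Y^{-1}x^k$ up to the explicit polynomial/Laurent truncation, and that $\Omega_k=\pa_{t_k}\Psi\cdot\Psi^{-1}$. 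The cleanest route is probably to use the known fact (from \cite{BEH2006}, or provable directly here) that
\be
\pa_{t_k}\log\tau=\res{x}\tr\left(\Psi^{-1}\pa_x\Psi\,\pa_{t_k}\left(V_\a\tfrac{\s_3}2\right)\right)\d x
\ee
for the isomonodromic tau function $\tau$, combined with the identification $\tau=Z_N(\a;\t)$ up to the $\t$-independent prefactor in \eqref{h}; then substitute $\pa_{t_k}V_\a=-x^k$ and pass from $\Psi$ to $Y$ via \eqref{Psi}, noting that the extra scalar factor $\e^{-V_\a\s_3/2}$ contributes only a term proportional to $\tr\s_3=0$ inside the trace. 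Alternatively, and more self-containedly, I would expand $Y$ near $x=\infty$ (using \eqref{expinfty} and the subleading coefficients, which encode $h_{N-1}$ and $\pi_N,\pi_{N-1}$) and near $x=0$ (using \eqref{exp0}), plug these expansions into $\tr\left(Y^{-1}\pa_x Y\tfrac{\s_3}2\right)x^k$, and extract the residue termwise; matching against the integral formula for $\pa_{t_k}\log h_\ell$ after summing over $\ell$ is then a direct computation using the three-term recurrence.

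The main obstacle I anticipate is the bookkeeping in the second, self-contained route: one must carefully track which coefficients of the asymptotic expansions of $Y$ at $x=\infty$ (for $k>0$) or $x=0$ (for $k<0$) survive in the residue, and show that their trace against $\tfrac{\s_3}2$ telescopes correctly to $\sum_{\ell<N}\pa_{t_k}\log h_\ell$ rather than to some other combination — in particular getting the overall sign and the role of $h_{N-1}$ (which appears explicitly in \eqref{Y}) right. The cleaner route via the general isomonodromic tau-function formula of \cite{BEH2006} sidesteps this, at the cost of invoking that external result; since the paragraph preceding the statement explicitly says these formulæ "first appeared in \cite{BEH2006}", I expect the intended proof is essentially to quote that identification and then perform the elementary substitution $\pa_{t_k}V_\a=-x^k$ and the passage $\Psi\rightsquigarrow Y$, checking only that the discarded pieces are traceless.
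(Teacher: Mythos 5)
Your reduction to $\sum_{\ell=0}^{N-1}\pa_{t_k}\log h_\ell$ and the integral formula $\pa_{t_k}\log h_\ell=h_\ell^{-1}\int_0^{+\infty}x^k\big(\pi_\ell^{(\a)}\big)^2\e^{-V_\a}\d x$ match the paper's opening steps exactly. But the passage from this sum of weighted integrals to the residue of $\tr\big(Y^{-1}\pa_xY\,\tfrac{\s_3}2\big)x^k$ is precisely where the content of the proof lies, and neither of your two proposed routes supplies it. The paper's mechanism consists of two specific ideas you do not mention: (i) the \emph{confluent Christoffel--Darboux formula}, which collapses the sum $\sum_{\ell=0}^{N-1}\big(\pi_\ell^{(\a)}\big)^2/h_\ell$ into the single matrix entry $\frac 1{2\pi\i}\big(Y^{-1}\pa_xY\big)_{21}$; and (ii) the observation that the jump condition \eqref{jumpY} forces $\tr\big(Y^{-1}\pa_xY\,\tfrac{\s_3}2\big)$ to jump across $(0,\infty)$ by exactly $-\big(Y^{-1}\pa_xY\big)_{21}\e^{-V_\a}$, so that the weighted integral is the integral of a jump and collapses, by contour deformation, to the two formal residues at $0$ and $\infty$, one of which vanishes according to the sign of $k$. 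Your ``self-contained'' alternative --- expand $Y$ at $x=\infty$ and $x=0$, extract the residue termwise, and match ``using the three-term recurrence'' --- is a hope rather than an argument: the residue involves subleading coefficients of the expansion of $Y$, i.e.\ moments of $\pi_N,\pi_{N-1}$ and their Cauchy transforms, and there is no visible mechanism by which these telescope to $\sum_{\ell<N}\pa_{t_k}\log h_\ell$ without (i) and (ii).

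Your first route (quoting \cite{BEH2006}) is legitimate in principle --- the paper itself credits that reference --- but as written it contains an error: in passing from $\Psi$ to $Y$ via \eqref{Psi}, the $x$-derivative of the factor $\e^{-V_\a\s_3/2}$ contributes $-\tfrac 12(\pa_xV_\a)\s_3$ to $\Psi^{-1}\pa_x\Psi$, whose trace against $\tfrac{\s_3}2$ equals $-\tfrac 12\pa_xV_\a$, which is not zero (compare Corollary \ref{corollaryfirst}, where exactly this term appears). For general $\t$ this term has a nonvanishing formal residue, so it cannot be discarded as ``proportional to $\tr\s_3$''; whether the identity survives depends on the precise normalization of the tau-function formula you are importing, which you would still need to verify.
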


\begin{proof}
For the proof we follow the lines of \cite{CGM2015}. First, differentiate the orthogonality relation \eqref{orthogonality}
\be
\frac{\pa h_\ell(\t)}{\pa t_k}=\int_0^{+\infty}\left(\pi_\ell^{(\a)}(x;\t)\right)^2x^k\e^{-V_{\alpha}(x;\t)}\d x
\ee
and recall the \emph{confluent Christoffel-Darboux} formula for orthogonal polynomials
\begin{align}
\nonumber
\sum_{\ell=0}^{N-1}\frac{\left(\pi_\ell^{(\a)}(x;\t)\right)^2}{h_\ell}&=
\frac 1{h_{N-1}}\left(\pi_{N-1}^{(\alpha)} (x;\t)\frac{\pa\pi_{N}^{(\a)}(x;\t)}{\pa x}-\frac{\pa\pi^{(\a)}_{N-1}(x;\t)}{\pa x}\pi_{N}^{(\a)}(x;\t)\right)
\\
&=\frac 1{2\pi\i}\left(Y^{-1}(x;\t)\frac{\pa Y(x;\t)}{\pa x}\right)_{21}
\end{align}
where in the last step one uses $\det Y(x;\t)\equiv 1$.
Now, omitting the dependence on $x,\t$ in the rest of the proof for the sake of brevity, it can be checked that the jump relation \eqref{jumpY} implies 
\be
\tr\left(Y_{+}^{-1}\frac{\pa Y_{+}}{\pa x}\frac{\s_3}2 \right)=\tr\left(Y_{-}^{-1}\frac{\pa Y_{-}}{\pa x}\frac{\s_3}2\right)
-\left(Y^{-1}\frac{\pa Y}{\pa x}\right)_{21}\e^{-V_\a}.
\ee
Therefore, starting from \eqref{h}, we compute
\begin{align}
\nonumber
\frac{\pa\log Z_N(\a;\t)}{\pa t_k}=\sum_{\ell=0}^{N-1}\frac{1}{h_\ell}\frac{\pa h_\ell}{\pa t_k}
&=\sum_{\ell=0}^{N-1}\int_0^{+\infty}\frac{\left(\pi_\ell^{(\a)}\right)^2}{h_\ell}x^k\e^{-V_{\alpha}}\d x
\\
\label{2}
&=\frac 1{2\pi\i}\int_0^{+\infty}\tr\left[\left(Y_{-}^{-1}\frac{\pa Y_{-}}{\pa x}-Y_{+}^{-1}\frac{\pa Y_{+}}{\pa x}\right)\frac{\s_3}2 \right]x^{k}\d x.
\end{align}
Such an integral of a jump can be performed by a residue computation. First of all, note that despite $x^k\tr\left(Y^{-1}\frac{\pa Y}{\pa x}\frac{\sigma_3}{2}\right)$ is not analytic at $x=\infty$, it has a large $x$ asymptotic expansion given by
\be
x^k\tr\left(Y^{-1}\frac{\pa Y}{\pa x}\frac{\sigma_3}{2}\right)=\sum_{j=-1}^{k-2}c_jx^j+\O\left(\frac{1}{x^2}\right)
\ee
for any $k\in \mathbb{N}$, where $-c_{-1}$ is, by definition, the \emph{formal} residue at infinity of $ x^k\tr\left(Y^{-1}\frac{\pa Y}{\pa x}\frac{\sigma_3}{2}\right)$. Then, recalling our choice for the branch of the logarithm and using contour deformation, we can express \eqref{2} as
\be
\frac{\pa\log Z_N(\a;\t)}{\pa t_k}
=-\res{x=0}\tr \left(Y^{-1}\frac{\pa Y}{\pa x}\frac{\s_3}2\right)  x^{k}\d x-\res{x=\infty}\tr \left(Y^{-1}\frac{\pa Y}{\pa x}\frac{\s_3}2\right) x^{k}\d x,
\ee
the residues being intended in the formal sense explained above. Finally, the proof is complete by noting that for $k>0$ (resp. $k<0$) the formal residue at $x=0$ (resp. $x=\infty$) vanishes. 
\end{proof}

For later convenience let us slightly rewrite the result of the above proposition. To this end introduce the matrix
\be
\label{R}
R(x;\t):=Y(x;\t)\E_{11}Y^{-1}(x;\t),
\ee
denoting $\E_{11}:=\begin{pmatrix} 1 & 0 \\ 0 & 0 \end{pmatrix}$ from now on.

\begin{corollary}\label{corollaryfirst}
We have
\be
\label{312}
\frac{\pa\log Z_N(\a;\t)}{\pa t_k}=-\res{x}\left(\tr\left(\mathcal{A}(x;\t)R(x;\t)\right)+\frac 12\dfrac{\partial}{\partial x} V_{\alpha}(x;\t)\right)x^k\d x,
\ee
where $R(x;\t)$ is introduced in \eqref{R} and again $\res{x}$ denotes $\res{x=\infty}$ when $k>0$ and $\res{x=0}$ when $k<0$.
\end{corollary}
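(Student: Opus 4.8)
The plan is to obtain \eqref{312} from Proposition \ref{propfirst} by a purely algebraic rewriting of the integrand, with no new analytic input. All the manipulations below are identities of functions analytic on $\C\setminus[0,\infty)$ with the usual asymptotic behaviour at $0$ and $\infty$, so the formal residues appearing in Proposition \ref{propfirst} transform termwise, and it is enough to establish the pointwise identity
\[
\tr\!\left(Y^{-1}(x;\t)\frac{\pa Y(x;\t)}{\pa x}\,\frac{\s_3}2\right)=\tr\big(\mathcal A(x;\t)R(x;\t)\big)+\frac12\,\frac{\pa V_\a(x;\t)}{\pa x}.
\]

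First I would invert the gauge relation \eqref{Psi} into $Y=\Psi\,\e^{V_\a\s_3/2}$ (writing $V_\a$ for $V_\a(x;\t)$) and differentiate in $x$; since $\s_3$ commutes with $\e^{V_\a\s_3/2}$ this gives $\pa_xY=\big(\pa_x\Psi+\Psi\,\tfrac{\s_3}2\,\pa_xV_\a\big)\e^{V_\a\s_3/2}$, hence
\[
Y^{-1}\pa_xY=\e^{-V_\a\s_3/2}\left(\Psi^{-1}\pa_x\Psi+\frac{\s_3}2\,\pa_xV_\a\right)\e^{V_\a\s_3/2}.
\]
Now multiply on the right by $\tfrac{\s_3}2$, take the trace, and use that $\s_3$ (and therefore $\e^{\pm V_\a\s_3/2}$) commutes with $\tfrac{\s_3}2$, together with cyclicity of the trace: the conjugating exponentials cancel and one is left with
\[
\tr\!\left(Y^{-1}\pa_xY\,\frac{\s_3}2\right)=\tr\!\left(\Psi^{-1}\pa_x\Psi\,\frac{\s_3}2\right)+\frac12\,\pa_xV_\a,
\]
where I used $\tr\!\big(\tfrac{\s_3}2\,\pa_xV_\a\,\tfrac{\s_3}2\big)=\tfrac14\,\pa_xV_\a\,\tr(\s_3^2)=\tfrac12\,\pa_xV_\a$.

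Next I would insert $\pa_x\Psi=\mathcal A\Psi$ from \eqref{deformation} and use cyclicity once more to rewrite $\tr\!\big(\Psi^{-1}\pa_x\Psi\,\tfrac{\s_3}2\big)=\tr\!\big(\mathcal A\,\Psi\tfrac{\s_3}2\Psi^{-1}\big)$. Writing $\tfrac{\s_3}2=\E_{11}-\tfrac12\1$ and using that $\E_{11}$, being diagonal, commutes with $\e^{V_\a\s_3/2}$, one gets $\Psi\E_{11}\Psi^{-1}=Y\E_{11}Y^{-1}=R$, so $\Psi\tfrac{\s_3}2\Psi^{-1}=R-\tfrac12\1$; since $\mathcal A$ is traceless by Remark \ref{remtraceless} we conclude $\tr\!\big(\mathcal A(R-\tfrac12\1)\big)=\tr(\mathcal A R)$. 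Chaining the three displays gives the pointwise identity above, and feeding it into Proposition \ref{propfirst} yields \eqref{312}. There is no substantial obstacle here; the only point that deserves a word is that $\pa_xV_\a=1-\frac\a x-\sum_{k\neq 0}kt_kx^{k-1}$ is an honest Laurent polynomial once the times are truncated as in \eqref{truncate}, so its formal residues at $0$ and $\infty$ are unambiguous and the termwise transfer of the formal residues of Proposition \ref{propfirst} is legitimate.
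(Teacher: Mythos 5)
Your proof is correct and follows essentially the same route as the paper's: both reduce the statement to the pointwise identity $\tr\bigl(Y^{-1}\pa_xY\,\tfrac{\s_3}2\bigr)=\tr(\mathcal A R)+\tfrac12\pa_xV_\a$ by combining the gauge relation \eqref{Psi} with $\pa_x\Psi=\mathcal A\Psi$, and then trade $\tfrac{\s_3}2$ for $\E_{11}$ using the tracelessness of $\mathcal A$ and cyclicity of the trace. The only difference is cosmetic — you conjugate by $\e^{V_\a\s_3/2}$ explicitly where the paper writes $\pa_xY=\mathcal AY+Y\tfrac{\s_3}2\pa_xV_\a$ directly.
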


\begin{proof}
We have  from \eqref{Psi} and \eqref{deformation}
\be
\frac{\pa}{\pa x}Y(x;\t)=\mathcal{A}(x;\t)Y(x;\t)+Y(x;\t)\frac {\s_3}2\frac{\pa}{\pa x}V_{\alpha}(x;\t)
\ee
so that
\begin{align}
\nonumber
\tr\left(Y^{-1}(x;\t) \frac{\pa Y(x;\t)}{\pa x}\frac{\s_3}2\right)&=\tr\left(Y^{-1}(x;\t)\mathcal{A}(x;\t)Y(x;\t)\frac{\s_3}2\right)+\frac 12 \frac{\pa}{\pa x}V_{\alpha}(x;\t)
\\
&=\tr\left(\mathcal{A}(x;\t)R(x;\t)\right)+\frac 12\frac{\pa}{\pa x}V_{\alpha}(x;\t),
\end{align}
where in the last step we have used that
\be
\tr\left(Y^{-1}(x;\t)\mathcal{A}(x;\t)Y(x;\t)\frac{\s_3}2\right)=\tr\left(Y^{-1}(x;\t)\mathcal{A}(x;\t)Y(x;\t)\E_{11}\right)
=\tr\left(\mathcal{A}(x;\t)R(x;\t)\right),
\ee
where the first equality follows from $\tr\mathcal{A}(x;\t)=0$ and the second one from the cyclic property of the trace and the definition \eqref{R}.
\end{proof}

\subsubsection{Multipoint connected correlators}

We first consider two-point connected correlators.

\begin{proposition}
\label{propsecond}
For every nonzero integers $k_1,k_2$ we have
\be
\frac{\pa^2\log Z_N(\a;\t)}{\pa t_{k_2}\pa t_{k_1}}=\res{x_1}\res{x_2}\frac{\tr(R(x_1;\t)R(x_2;\t))-1}{(x_1-x_2)^2}x_1^{k_1}x_2^{k_2}\d x_1\d x_2,
\ee
where the symbol $\res{x_i}$ denotes $\res{x_i=\infty}$ (resp. $\res{x_i=0}$) if $k_i>0$ (resp. $k_i<0$).
\end{proposition}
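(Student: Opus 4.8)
The plan is to follow the same strategy as the proof of Proposition~\ref{propfirst}, but now differentiating the \emph{one-point} residue formula of Corollary~\ref{corollaryfirst} once more in the times. Concretely, I would start from
\be
\frac{\pa\log Z_N(\a;\t)}{\pa t_{k_1}}=-\res{x_1}\left(\tr\left(\mathcal{A}(x_1;\t)R(x_1;\t)\right)+\frac 12\frac{\pa V_\a(x_1;\t)}{\pa x_1}\right)x_1^{k_1}\d x_1
\ee
and apply $\pa/\pa t_{k_2}$. The term $\frac 12\pa_{x_1}V_\a(x_1;\t)=\frac 12\pa_{x_1}\bigl(x_1-\a\log x_1-\sum_k t_kx_1^k\bigr)$ depends on $\t$ only linearly, so its $t_{k_2}$-derivative is $-\frac 12 k_2 x_1^{k_2-1}$, and the corresponding contribution to the residue in $x_1$ is the formal residue of $-\frac 12 k_2 x_1^{k_1+k_2-1}\d x_1$, which vanishes identically (it has no $x_1^{-1}$ term). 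Hence only the $\tr(\mathcal A R)$ term survives, and I need to compute $\pa_{t_{k_2}}\tr(\mathcal{A}(x_1;\t)R(x_1;\t))$.

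For this, the key input is the deformation system \eqref{deformation}: from $\Psi=Y\exp(-V_\a\s_3/2)$ one has $\pa_{t_{k_2}}Y\cdot Y^{-1}=\Omega_{k_2}-Y\frac{\s_3}2 Y^{-1}\,\pa_{t_{k_2}}V_\a=\Omega_{k_2}-R x_1^{k_2}$ (using $Y\frac{\s_3}2Y^{-1}=R-\frac12\1$, $\pa_{t_{k_2}}V_\a=-x_1^{k_2}$, and that $\1$ commutes with everything and is killed by the trace together with $\mathcal A$ traceless). Then $\pa_{t_{k_2}}R=[\,\pa_{t_{k_2}}Y\cdot Y^{-1},R\,]$, and $\pa_{t_{k_2}}\mathcal A$ is given by the zero-curvature relation \eqref{zerocurvatureA}, $\pa_{t_{k_2}}\mathcal A=\pa_{x_1}\Omega_{k_2}+[\Omega_{k_2},\mathcal A]$. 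Substituting and using the cyclic invariance of the trace, the commutator terms involving $\Omega_{k_2}$ should cancel, leaving an expression of the shape $\tr\bigl(\pa_{x_1}\Omega_{k_2}\cdot R\bigr)+\text{(terms with $R$ and $\mathcal A$)}$. The standard manipulation here — this is where the ``algebraic manipulations of residue formul\ae\ introduced in \cite{BDY2016}'' enter — is to rewrite $\Omega_{k_2}(x_1;\t)$ via its integral representation \eqref{Omega}, $\Omega_{k_2}(x_1)=\res{\xi}\bigl(Y(\xi)\frac{\s_3}2Y^{-1}(\xi)\frac{\xi^{k_2}\d\xi}{x_1-\xi}\bigr)=\res{\xi}\bigl(R(\xi)\frac{\xi^{k_2}\d\xi}{x_1-\xi}\bigr)$ (again the $\frac12\1$ part drops out inside the residue against $\frac{\d\xi}{x_1-\xi}$ after pairing with $R(x_1)$ and tracing, or can be tracked and shown harmless), and then to recognize, after using $\tr(R(x_1)R(\xi))$ is symmetric and $R^2=R$, that everything collapses to a double residue of $\dfrac{\tr(R(x_1)R(x_2))-1}{(x_1-x_2)^2}$.

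The main obstacle I expect is \emph{bookkeeping}: carefully tracking the scalar ($\propto\1$) pieces that are discarded in each step, making sure the $\mathcal A$-dependent pieces genuinely reorganize into the $x_1$-derivative of $\frac{1}{x_1-x_2}$ (so that the $(x_1-x_2)^{-2}$ and the symmetric trace $\tr(R(x_1)R(x_2))$ appear together), and verifying that the subtraction of the constant $1$ in the numerator is exactly what is needed to make the double formal residue well-defined and to kill the spurious diagonal pole at $x_1=x_2$ coming from $R(x_1)^2=R(x_1)$. A clean way to organize this is to first prove the ``pre-residue'' identity
\be
\pa_{t_{k_2}}\Bigl(\tr(\mathcal A(x_1)R(x_1))+\tfrac12\pa_{x_1}V_\a(x_1)\Bigr)=-\pa_{x_1}\res{x_2}\left(\frac{\tr(R(x_1;\t)R(x_2;\t))}{x_1-x_2}\,x_2^{k_2}\d x_2\right)+(\text{formal-residue-free in }x_1),
\ee
and then take $-\res{x_1}(\cdots)x_1^{k_1}\d x_1$, integrating by parts in $x_1$ under the formal residue (which is legitimate since $\res{x_1}\pa_{x_1}(\text{anything})\,x_1^{k_1}\d x_1=-\res{x_1}(\text{anything})\,k_1x_1^{k_1-1}\d x_1$, shifting the power), to land on $\res{x_1}\res{x_2}\frac{\tr(R(x_1)R(x_2))-1}{(x_1-x_2)^2}x_1^{k_1}x_2^{k_2}\d x_1\d x_2$; the insertion of $-1$ in the numerator at the final stage is harmless because $\res{x_1}\res{x_2}\frac{x_1^{k_1}x_2^{k_2}}{(x_1-x_2)^2}\d x_1\d x_2=0$ for $k_1,k_2$ of the relevant signs. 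I would also double-check the symmetry $\pa_{t_{k_2}}\pa_{t_{k_1}}=\pa_{t_{k_1}}\pa_{t_{k_2}}$ is manifest in the final formula, as a consistency check.
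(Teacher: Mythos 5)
Your strategy is the same as the paper's: differentiate the one-point residue formula of Corollary \ref{corollaryfirst} in $t_{k_2}$, use the zero-curvature relation \eqref{zerocurvatureA} together with $\pa_{t_{k_2}}R=[\Omega_{k_2},R]$ so that the commutator terms cancel under the trace, and then insert the residue representation \eqref{Omega} of $\Omega_{k_2}$. The gap is in your treatment of the scalar pieces, and it is not mere bookkeeping: two of your concrete claims are false, and they fail exactly in the mixed-sign case $k_2=-k_1$, which is one of the cases this proposition is needed for. First, the contribution of $\pa_{t_{k_2}}\bigl(\tfrac 12\pa_{x_1}V_\a\bigr)=-\tfrac12 k_2x_1^{k_2-1}$ to the outer residue is $\tfrac12 k_2\res{x_1}x_1^{k_1+k_2-1}\d x_1$, and when $k_1+k_2=0$ the exponent is $-1$, so this equals $\tfrac12|k_1|\neq 0$; it does not vanish. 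Second, using the identity $-k_2x_1^{k_2-1}=\res{x_2}\frac{x_2^{k_2}}{(x_1-x_2)^2}\d x_2$ (valid for either sign of $k_2$, see \eqref{equazione3}),
\be
\res{x_1}\res{x_2}\frac{x_1^{k_1}x_2^{k_2}}{(x_1-x_2)^2}\,\d x_1\d x_2=-k_2\res{x_1}x_1^{k_1+k_2-1}\d x_1=-|k_1|\,\delta_{k_1+k_2,0},
\ee
so the ``$-1$'' in the numerator is \emph{not} a free insertion. For the same reason the $\tfrac12\1$ part of $\Omega_{k_2}$ does not ``drop out'' after tracing against $R(x_1)$, since $\tr R\equiv 1$.

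The correct accounting, which is what the paper does, is that these scalar pieces are precisely what produce the $-1$: the $V_\a$-term and the $\tfrac12\1$ part of $\pa_{x_1}\Omega_{k_2}$, rewritten via the displayed identity and combined with $\tr R(x_1)\equiv 1$, convert $\tr\bigl(R(x_1;\t)R(x_2;\t)\bigr)$ into $\tr\bigl(R(x_1;\t)R(x_2;\t)\bigr)-1$ inside the double residue. In your write-up the two errors happen to cancel: you drop two terms totalling $|k_1|\delta_{k_1+k_2,0}$ and then add back exactly that amount by declaring the $-1$ harmless. The final formula is therefore right, but neither intermediate assertion is, and as written your argument does not establish the identity when $k_2=-k_1$. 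Note also that the $-1$ is structurally necessary for a second reason: since $\tr R^2\equiv 1$, it is what cancels the double pole of $\tr(R(x_1)R(x_2))/(x_1-x_2)^2$ on the diagonal, cf.\ Remark \ref{Regularity}, so any argument that treats it as an optional decoration should be regarded with suspicion.
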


\begin{proof}
From \eqref{312} we have
\be
\frac{\pa\log Z_N(\a;\t)}{\pa t_{k_1}}=-\res{x_1}\left(\tr\left(\mathcal{A}(x_1;\t)R(x_1;\t)\right)+\frac 12\dfrac{\partial}{\partial x_1} V_{\alpha}(x_1;\t)\right)x_1^k\d x_1.
\ee
Let us take one more time-derivative
\be
\label{equazionezero}
\frac{\pa^2\log Z_N(\a;\t)}{\pa t_{k_2}\pa t_{k_1}}=-\res{x_1}\left(\mathrm{tr}\left(\frac{\pa\mathcal{A}(x_1;\t)}{\pa t_{k_2}}R(x_1;\t)+\frac{\pa R(x_1;\t)}{\pa t_{k_2}}\mathcal{A}(x_1;\t)\right)-\frac 12 k_2x^{k_2-1}_1\right)x_1^{k_1}\d x_1
\ee
and note that, using \eqref{zerocurvatureA} and $\pa_{t_k}R(x;\t)=[\Omega_k(x;\t),R(x;\t)]$
\be
\label{equazione}
\tr\left(\frac{\pa \mathcal{A}(x_1;\t)}{\pa t_{k_2}}R(x_1;\t)+\frac{\pa R(x_1;\t)}{\pa t_{k_2}}\mathcal{A}(x_1;\t)\right)
=\tr\left(\frac{\pa \Omega_{k_2}(x_1;\t)}{\pa x_1}R(x_1;\t)\right).
\ee
Now let us write $\Omega_{k_2}$ from \eqref{Omega} as
\be
\Omega_{k_2}(x_1;\t)=\res{x_2}\left(Y(x_2;\t)\frac{\s_3}2Y^{-1}(x_2;\t)\frac{x_2^{k_2}}{x_1-x_2}\right)\d x_2=\res{x_2}R(x_2;\t)\frac{x_2^{k_2}\d x_2}{x_1-x_2}-\frac \1 2 \res{x_2}\frac{x_2^{k_2}\d x_2}{x_1-x_2}
\ee
yielding
\be
\label{equazione2}
\frac{\pa \Omega_{k_2}(x_1;\t)}{\pa x_1}=-\res{x_2}R(x_2;\t)\frac{x_2^{k_2}\d x_2}{(x_1-x_2)^2}+\frac \1 2 \res{x_2}\frac{x_2^{k_2}\d x_2}{(x_1-x_2)^2}.
\ee
Finally, the identity
\be
\label{equazione3}
-k_2x_1^{k_2-1}=\res{x_2}\frac{x_2^{k_2}}{(x_1-x_2)^2}\d x_2
\ee
holds true irrespectively of the sign of $k_2$, and the proof is completed by inserting \eqref{equazione}, \eqref{equazione2} and \eqref{equazione3} in \eqref{equazionezero}, along with $\tr R(x;\t)\equiv 1$.
\end{proof}

To compute higher order logarithmic derivatives of the LUE partition function, let us introduce the functions
\be
\label{S}
S_r(x_1,\dots,x_r;\t):=-\sum_{(i_1,\dots,i_r)\in\mathcal{C}_r}\frac{\tr\left(R(x_{i_1};\t)\cdots R(x_{i_r};\t)\right)}{(x_{i_1}-x_{i_2})\cdots(x_{i_{r-1}}-x_{i_r})(x_{i_r}-x_{i_1})}-\frac{\delta_{r,2}}{(x_1-x_2)^2},
\ee
where, as explained in the statement of Theorem \ref{thmmain}, the sum extends over cyclic permutations of $\{1,\dots,r\}$. Due to the cyclic invariance of the trace and of the polynomial $(x_1-x_2)\cdots(x_r-x_1)$, it follows that $S_r(x_1,\dots,x_r;\t)$ is symmetric in $x_1,\dots,x_r$.

The following proof is reported for the sake of completeness; it has appeared in the literature several times, e.g. see \cite{BDY2016,DY2017,BDY2018,BR2019}. The only slight difference here is that we consider two different set of times and correspondingly the residues are taken at two different points.

\begin{proposition}
\label{propn}
For every $r\geq 2$ we have
\be
\label{residuenpoint}
\frac{\pa^r\log Z_N(\a;\t)}{\pa t_{k_r}\cdots\pa t_{k_1}}=(-1)^{r}\res{x_1}\cdots\res{x_r}S_r(x_1,\dots,x_r;\t)x_1^{k_1}\cdots x_r^{k_r}\d x_1\cdots\d x_r
\ee
where, as above, the symbol $\res{x_i}$ denotes $\res{x_i=\infty}$ (resp. $\res{x_i=0}$) if $k_i>0$ (resp. $k_i<0$).
\end{proposition}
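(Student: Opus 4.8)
The plan is to prove \eqref{residuenpoint} by induction on $r\geq 2$, the base case $r=2$ being exactly Proposition~\ref{propsecond}. Throughout, one uses that the formal residues $\res{x_i}$ are coefficient extractions in Laurent expansions at $x_i=0$ or $x_i=\infty$ that do not involve the times, so they commute with $\pa_{t_k}$; consequently, applying $\pa_{t_{k_{r+1}}}$ to the inductive hypothesis reduces everything to computing $\pa_{t_{k_{r+1}}}S_r(x_1,\dots,x_r;\t)$. Recall also, as in the proof of Proposition~\ref{propsecond}, that $\pa_{t_k}R(x;\t)=[\Omega_k(x;\t),R(x;\t)]$, and that by \eqref{Omega}, \eqref{R} and $\tfrac{\s_3}2=\E_{11}-\tfrac{\1}2$ one has $\Omega_k(x;\t)=\res{w}\bigl(R(w;\t)\tfrac{w^k}{x-w}\bigr)\d w-\tfrac{\1}2\res{w}\tfrac{w^k}{x-w}\d w$, the scalar part dropping out of every commutator, so that $[\Omega_k(x_a;\t),R(x_a;\t)]=\res{w}\tfrac{[R(w;\t),R(x_a;\t)]}{x_a-w}\,w^k\,\d w$, with $\res{w}$ taken at $\infty$ if $k>0$ and at $0$ if $k<0$.

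The heart of the argument is the algebraic identity
\be
\pa_{t_k}S_r(x_1,\dots,x_r;\t)=-\res{x_{r+1}}S_{r+1}(x_1,\dots,x_{r+1};\t)\,x_{r+1}^{k}\,\d x_{r+1},\qquad r\geq 2,
\ee
with $\res{x_{r+1}}$ at $\infty$ (resp.\ $0$) for $k>0$ (resp.\ $k<0$); note that the term $-\delta_{r,2}/(x_1-x_2)^2$ in \eqref{S} is $\t$-independent and that $S_{r+1}$ carries no such term for $r+1\geq 3$, so it suffices to work with the cyclic sum defining $S_r$. Granting this identity, the inductive step is immediate: the factor $(-1)^r$ from the inductive hypothesis times the $(-1)$ above gives $(-1)^{r+1}$, and pulling the new residue out past the old ones yields \eqref{residuenpoint} with $r$ replaced by $r+1$ (consistency with the symmetry of $S_{r+1}$ in $x_1,\dots,x_{r+1}$ matches the symmetry of mixed partial derivatives).

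To prove the identity I would differentiate a single summand $\tr\bigl(R(x_{i_1};\t)\cdots R(x_{i_r};\t)\bigr)$ by the Leibniz rule, substitute $\pa_{t_k}R(x_{i_a};\t)=\res{w}\tfrac{[R(w;\t),R(x_{i_a};\t)]}{x_{i_a}-w}w^k\d w$, and use cyclicity of the trace to rewrite each commutator term, under $\res{w}$, as the difference of the two $(r+1)$-letter words obtained by inserting $R(w;\t)$ immediately before, respectively immediately after, the factor $R(x_{i_a};\t)$, weighted by $(x_{i_a}-w)^{-1}$. Summing over the position $a$ and over all $r$-cycles, the two contributions attached to any fixed $(r+1)$-letter cyclic word recombine through the partial-fraction identity $\tfrac{1}{x_{i_a}-w}-\tfrac{1}{x_{i_{a+1}}-w}=\tfrac{x_{i_{a+1}}-x_{i_a}}{(x_{i_a}-w)(x_{i_{a+1}}-w)}$: the numerator cancels the matching factor $(x_{i_a}-x_{i_{a+1}})$ in the $r$-cycle denominator and, together with the two new factors involving $w$, reproduces precisely the denominator of the corresponding $(r+1)$-cycle. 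Since deleting the label $r+1$ from an $(r+1)$-cycle returns an $r$-cycle on $\{1,\dots,r\}$ plus one of its $r$ insertion slots, and $(r-1)!\cdot r=r!$, this sets up a bijection between $(r\text{-cycle},\ \text{slot})$ pairs and $(r+1)$-cycles; summing over all of them reassembles $-\res{w}S_{r+1}(x_1,\dots,x_r,w;\t)\,w^k\,\d w$ after relabeling $w=x_{r+1}$.

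I expect the only real obstacle to be this last combinatorial bookkeeping — keeping track of signs and of which denominator factor cancels as one passes from $r$-cycles-with-an-insertion to $(r+1)$-cycles — but it is a purely formal manipulation with no new analytic content beyond Proposition~\ref{propdeformation}, and it has appeared in the literature in essentially this form (e.g.\ \cite{BDY2016,DY2017,BDY2018,BR2019}); the only novelty here is bookkeeping the two residue points corresponding to positive and negative $k_i$.
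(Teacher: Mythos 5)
Your proposal is correct and follows essentially the same route as the paper: induction on $r$ with base case Proposition~\ref{propsecond}, reduced to the key identity $\pa_{t_k}S_r(x_1,\dots,x_r;\t)=-\res{\xi}S_{r+1}(x_1,\dots,x_r,\xi)\xi^k\d\xi$, proved by substituting $\pa_{t_k}R=[\Omega_k,R]$ in residue form, expanding the commutators, and recombining the two occurrences of each $(r+1)$-letter cyclic word via the partial-fraction identity. The combinatorial bookkeeping you describe (matching insertion slots to $(r+1)$-cycles) is exactly the paper's collection-of-terms step, and your handling of the two residue points for positive and negative $k$ matches the paper's treatment.
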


\begin{proof}
We have
\be
\frac{\pa}{\pa t_k}R(x;\t)=[\Omega_k(x;\t),R(x;\t)]=\res{\xi}\frac{[R(\xi;\t),R(x;\t)]}{x-\xi}\xi^k\d\xi
\ee
where we have used \eqref{Omega} and $\res{\xi}$ denotes the formal residue at $\xi=\infty$ if $k>0$ or the formal residue at $\xi=0$ if $k<0$. Hence we compute
\be
\frac{\pa S_r(x_1,\dots,x_r;\t)}{\pa t_k}=-\res{\xi}\sum_{(i_1,\dots,i_r)\in\mathcal{C}_r}\sum_{j=1}^r\frac{\tr\left(R(x_{i_1};\t)\cdots[R(\xi;\t),R(x_{i_j};\t)]\cdots R(x_{i_r};\t)\right)}{(x_{i_1}-x_{i_2})\cdots(x_{i_r}-x_{i_1})(x_{i_j}-\xi)}\xi^k\d\xi.
\ee
Expanding the commutator $[R(\xi;\t),R(x_{i_j};\t)]=R(\xi;\t)R(x_{i_j};\t)-R(x_{i_j};\t)R(\xi;\t)$, we note that each term involving the expression
\be
\tr\left(R(x_{i_1};\t)\cdots R(\xi;\t)R(x_{i_j};\t)\cdots R(x_{i_r};\t)\right)
\ee
appears twice, but with different denominators; collecting these terms gives
\begin{align}
\nonumber
&-\res{\xi}\sum_{(i_1,\dots,i_r)\in\mathcal{C}_r}\sum_{j=1}^r\frac{\tr\left(R(x_{i_1};\t)\cdots R(\xi;\t)R(x_{i_j};\t)\cdots R(x_{i_r};\t)\right)}{(x_{i_1}-x_{i_2})\cdots(x_{i_r}-x_{i_1})}\left(\frac 1{x_{i_j}-\xi}-\frac 1{x_{i_{j-1}}-\xi}\right)\xi^k\d\xi
\\
\nonumber
&\qquad=\res{\xi}\sum_{(i_1,\dots,i_r)\in\mathcal{C}_r}\sum_{j=1}^r\frac{\tr\left(R(x_{i_1};\t)\cdots R(\xi;\t)R(x_{i_j};\t)\cdots R(x_{i_r};\t)\right)}{(x_{i_1}-x_{i_2})\cdots(x_{i_{j-1}}-\xi)(\xi-x_{i_j}) \cdots (x_{i_r}-x_{i_1})}\xi^k\d \xi
\\
&\qquad=-\res{\xi}S_{r+1}(x_1,\dots,x_r,\xi)\xi^k\d\xi,
\end{align}
where the index $j$ in the internal summation is taken ${\rm mod}\, r$, namely $i_0:=i_r$.

Summarizing, we have shown that for all $r\geq 2$
\be
\frac{\pa S_r(x_1,\dots,x_r;\t)}{\pa t_k}=-\res{\xi}S_{r+1}(x_1,\dots,x_r,\xi)\xi^k\d\xi
\ee
and the proof now follows by induction on $r\geq 2$, the base $r=2$ being established in Proposition \ref{propsecond}.
\end{proof}

\begin{remark} \label{Regularity}
The functions $S_r(x_1,\dots,x_r)$ are regular along the diagonals $x_i=x_j$. In the case $r=2$ this can be seen from the fact that
\be
\tr(R^2(x;\t))\equiv 1,
\ee
hence the function $\tr\left(R(x_1;\t)R(x_2;\t)\right)-1$ is symmetric in $x_1$ and $x_2$  and vanishes for $x_1=x_2$. Therefore  the zero on the diagonal  $x_1=x_2$ is of order at least $2$ and so $S_2(x_1,x_2)$ is regular at $x_1=x_2$. For $r\geq 3$ instead we can reason as follows; since $S_r$ is symmetric, we can focus on the case $x_{r-1}=x_r$, and the only addends in $S_r$ which are singular at $x_{r-1}=x_r$ are those coming from the $r$-cycles $(i_1,\dots,i_{r-2},r-1,r)$ and $(i_1,\dots,i_{r-2},r,r-1)$, namely the terms
\begin{align}
\nonumber
\sum_{(i_1,\dots,i_{r-2},r-1,r)}&\frac{\tr\left(R(x_{i_1};\t)\cdots R(x_{i_{r-2}};\t)R(x_{r-1};\t)R(x_{r};\t)\right)}{(x_{i_1}-x_{i_2})\cdots(x_{i_{r-2}}-x_{r-1})(x_{r-1}-x_r)(x_r-x_{i_1})}
\\
&\qquad
+\sum_{(i_1,\dots,i_{r-2},r,r-1)}\frac{\tr\left(R(x_{i_1};\t)\cdots R(x_{i_{r-2}};\t)R(x_r;\t)R(x_{r-1};\t)\right)}{(x_{i_1}-x_{i_2})\cdots(x_{i_{r-2}}-x_{r})(x_r-x_{r-1})(x_{r-1}-x_{i_1})}
\end{align}
and this expression is manifestly regular at $x_{r-1}=x_r$.

In particular, the order in which residues are carried out in \eqref{residuenpoint} is immaterial.
\end{remark}

Finally we remark that it would be interesting to extend the above formulation to other matrix ensembles like the GOE, see e.g. \cite{DG2009}.

\subsection{Asymptotic expansions and proof of Theorem \ref{thmmain}}\label{subsecproof}

To compute LUE correlators we have to set $\t=\0$ in the residue formul\ae\ of Corollary \ref{corollaryfirst}, and of Propositions \ref{propsecond} and \ref{propn}. To this end we now consider
\be
R(x):=R(x;\t=\0)
\ee
where $R(x;\t)$ is introduced in \eqref{R} and compute explicitly series expansions as $x\to\infty,0$. We start with the expansion as $x\to\infty$.

\begin{proposition}\label{propexpansion+}
The matrix $R(x)$ admits the asymptotic expansion
\be
\label{expR+}
TR(x)T^{-1}\sim R_+(x),\qquad x\to\infty
\ee
uniformly within the sector $0<\arg x<2\pi$. Here $R_+$ is the formal series introduced in the beginning of this paper, see \eqref{R+}, and $T$ is defined as
\be
\label{T}
T:=
\begin{pmatrix}
1 & 0\\ 0& \frac{h_N}{2 \pi\i}
\end{pmatrix}
\ee
where $h_N=N!\G(N+\a+1)$ as in \eqref{h0}. 
\end{proposition}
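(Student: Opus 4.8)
\emph{Strategy.} The plan is to read off the existence of the asymptotic expansion of $R(x)=Y(x)\E_{11}Y(x)^{-1}$ from \eqref{expinfty}, and then to compute its coefficients by expanding the Laguerre polynomials and their Cauchy transforms at $x=\infty$. Writing the full expansion \eqref{expinfty} as $Y(x)\sim\wh F(x)\,x^{N\s_3}$ with $\wh F(x)=\1+\O(x^{-1})$ a formal power series in $x^{-1}$, and using that the diagonal matrix $x^{N\s_3}$ commutes with $\E_{11}$, one gets $R(x)\sim\wh F(x)\,\E_{11}\,\wh F(x)^{-1}$; hence $R(x)$, and therefore $TR(x)T^{-1}$, does admit an asymptotic expansion in powers of $x^{-1}$, uniformly within the sector $0<\arg x<2\pi$, and everything reduces to identifying its coefficients with those of the series $R_+$.

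\emph{Entries of $R$ and Cauchy transforms.} Since $\det Y\equiv1$ we have $Y^{-1}=\left(\begin{smallmatrix}Y_{22}&-Y_{12}\\-Y_{21}&Y_{11}\end{smallmatrix}\right)$, so the four entries of $R=Y\E_{11}Y^{-1}$ are the products $R_{11}=Y_{11}Y_{22}$, $R_{12}=-Y_{11}Y_{12}$, $R_{21}=Y_{21}Y_{22}$, $R_{22}=-Y_{21}Y_{12}$. Substituting the entries of $Y$ at $\t=\0$ from \eqref{Y} --- in which $\pi_N^{(\a)},\pi_{N-1}^{(\a)}$ are the Laguerre polynomials \eqref{laguerrepolynomial} and $\wh\pi_N^{(\a)},\wh\pi_{N-1}^{(\a)}$ their Cauchy transforms \eqref{cauchytransform} --- and conjugating by the diagonal matrix $T$, whose only effect is to rescale the $(1,2)$ and $(2,1)$ entries by $2\pi\i/h_N$ and $h_N/(2\pi\i)$, one rewrites $TR(x)T^{-1}$ as a $2\times2$ matrix whose entries are scalar multiples of $\pi_N^{(\a)}(x)\wh\pi_{N-1}^{(\a)}(x)$, $\pi_N^{(\a)}(x)\wh\pi_N^{(\a)}(x)$, $\pi_{N-1}^{(\a)}(x)\wh\pi_{N-1}^{(\a)}(x)$ and $\pi_{N-1}^{(\a)}(x)\wh\pi_N^{(\a)}(x)$, the constant prefactors simplifying through $h_N/h_{N-1}=N(N+\a)$. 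For $x\to\infty$ away from $[0,+\infty)$ one expands the Cauchy kernel $\tfrac1{\xi-x}=-\sum_{m\geq0}\xi^m x^{-m-1}$ and computes the moments by Rodrigues' formula \eqref{rodrigues} and integration by parts --- the computation already done around \eqref{h0}, now for all $m\geq\ell$ --- getting $\int_0^{+\infty}\xi^m\pi_\ell^{(\a)}(\xi)\xi^\a\e^{-\xi}\d\xi=\frac{m!}{(m-\ell)!}\G(m+\a+1)$ for $m\geq\ell$ and $0$ otherwise, so that
\be
\wh\pi_\ell^{(\a)}(x)\sim-\frac1{2\pi\i}\sum_{m\geq\ell}\frac{m!\,\G(m+\a+1)}{(m-\ell)!}\,\frac1{x^{m+1}},\qquad x\to\infty.
\ee
Multiplying the polynomials $\pi_N^{(\a)}(x)$, $\pi_{N-1}^{(\a)}(x)$ (finite sums, see \eqref{laguerrepolynomial}) by these series, no positive powers of $x$ appear --- a consequence of orthogonality, which forces $\wh\pi_{N-1}^{(\a)}=\O(x^{-N})$ and $\wh\pi_{N}^{(\a)}=\O(x^{-N-1})$ against $\deg\pi_N^{(\a)}=N$ --- and the coefficient of $x^{-\ell-1}$ in each entry becomes a finite convolution sum of two rising factorials (after reversing the order of summation, simplifying with $\G$-function identities, and using $1/(-p)!=0$ for positive integers $p$ to truncate).

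\emph{Identification with $R_+$, and the main obstacle.} What is then left --- and is the only real content --- is to recognise these finite convolution sums as $\ell A_\ell(N,N+\a)$ for the $(1,1)$ entry, $-\ell A_\ell(N,N+\a)$ for the $(2,2)$ entry, $B_\ell(N+1,N+\a+1)$ for the $(1,2)$ entry and $-N(N+\a)B_\ell(N,N+\a)$ for the $(2,1)$ entry, matching \eqref{R+}. This is an elementary but somewhat opaque combinatorial identity; I would prove it either by a direct Chu--Vandermonde / Pfaff--Saalsch\"utz type evaluation, or --- more in the spirit of the paper --- by checking that both sides obey the same three-term recursion in $\ell$, namely the recursion for $A_\ell,B_\ell$ recorded in Lemma \ref{lemmarecursion}, which on the side of $R(x)$ is forced by the linear ODE $R'(x)=[\mathcal{A}(x),R(x)]$ with $\mathcal{A}(x)$ as in \eqref{Amatrix} (this ODE following from \eqref{deformation}--\eqref{Psi} and from $\E_{11}$ commuting with $\e^{-V_\a\s_3/2}$), together with matching a couple of initial coefficients. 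Equivalently one may organise the argument as a uniqueness statement: $\wt R:=TR(x)T^{-1}$ is a rank-one projector ($\tr\wt R=1$, $\wt R^2=\wt R$) solving $\wt R'=[\wh{\mathcal{A}},\wt R]$ with $\wh{\mathcal{A}}=T\mathcal{A}(x)T^{-1}=-\tfrac12\s_3+\tfrac1x\left(\begin{smallmatrix}N+\a/2&-1\\ N(N+\a)&-N-\a/2\end{smallmatrix}\right)$ and leading term $\E_{11}$ at $x=\infty$; since $-\tfrac12\s_3$ has distinct eigenvalues such a formal power-series solution is unique, so it suffices to check that $R_+$ has these properties, which again comes down to Lemma \ref{lemmarecursion}. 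In either route the bottleneck is this identification of the explicitly-computed coefficients with $A_\ell$ and $B_\ell$; the preceding steps are routine bookkeeping.
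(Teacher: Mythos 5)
Your proposal follows essentially the same route as the paper's proof: expand the Cauchy transforms at $x=\infty$ via Rodrigues' formula and orthogonality (your moment formula $\frac{m!}{(m-\ell)!}\G(m+\a+1)$ agrees with the paper's $(j+1)_\ell\,\G(j+\ell+1+\a)$ under $m=j+\ell$), use $\det Y\equiv 1$ to write the entries of $TR(x)T^{-1}$ as products of entries of $Y$, and identify the resulting convolution sums with the coefficients of $R_+$. The one point where you diverge is the final identification, which you flag as the main obstacle: no Chu--Vandermonde evaluation or recursion argument is actually needed there, because after the elementary telescoping of rising factorials \eqref{ID1} the convolution sums are \emph{literally} the defining expressions \eqref{ABlMN} for $\ell A_\ell$ and $B_\ell$; your fallback via the ODE $\pa_x R=[\mathcal{A},R]$ and the three-term recursion of Lemma \ref{lemmarecursion} is nonetheless sound, and is in fact the argument the paper deploys for the companion expansion at $x=0$ in Proposition \ref{propexpansion-}.
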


\begin{remark}
The matrix $T$ is \emph{independent of $x$} and is introduced for convenience as it simplifies the coefficients in the expansions. This simplification does not affect the residue formul\ae\ of the previous paragraph, as it involves a \emph{constant} conjugation of $R(x)$.
\end{remark}

\begin{proof}
First off, we recall that
\be
Y(x):=Y(x;\t=\0)=
\renewcommand*{\arraystretch}{1.4}
\left(
\begin{array}{cc}
\pi_N^{(\a)}(x) & \wh\pi_N^{(\a)}(x)
\\
-\frac{2\pi\i}{h_{N-1}}\pi_{N-1}^{(\a)}(x) & -\frac{2\pi\i}{h_{N-1}}\wh\pi_{N-1}^{(\a)}(x)
\end{array}
\right)
\ee
where the polynomials $\pi_\ell^{(\a)}(x)$ and their Cauchy transforms $\wh\pi_\ell^{(\a)}(x)$ have been given in \eqref{laguerrepolynomial} and \eqref{cauchytransform} respectively, while $h_\ell$ is in \eqref{h0}. We can expand $\wh\pi_\ell^{(\a)}$ as $x\to\infty$ as
\begin{align}
\nonumber
\wh\pi_\ell^{(\a)}(x)&=\frac{1}{2\pi\i}\int_0^{+\infty}\pi_\ell^{(\a)}(\xi)\xi^\a\e^{-\xi}\frac{\d\xi}{\xi-x}
\\
\nonumber
&\sim-\frac{1}{2\pi\i }\sum_{j\geq 0}\frac 1{x^{j+1}}\int_0^{+\infty}\pi_\ell^{(\a)}(\xi)\xi^{\a+j}\e^{-\xi}\d\xi
\\
\nonumber
&=-\frac{1}{2\pi\i } \sum_{j\geq 0}\frac 1{x^{j+\ell+1}}\int_0^{+\infty}\pi_\ell^{(\a)}(\xi)\xi^{\a+j+\ell}\e^{-\xi}\d\xi
\\
\nonumber
&=-\frac{1}{2\pi\i } \sum_{j\geq 0}\frac 1{x^{j+\ell+1}}\int_0^{+\infty} (-1)^\ell \left(\frac{\d^\ell}{\d\xi^\ell}(\e^{-\xi}\xi^{\a+\ell})\right)\xi^{j+\ell}\d\xi
\\
\nonumber
&=-\frac{1}{2\pi\i }\sum_{j\geq 0}\frac 1{x^{j+\ell+1}}\int_0^{+\infty} \left(\frac{\d^\ell}{\d \xi^\ell}\xi^{j+\ell}\right)\xi^{\a+\ell}\e^{-\xi}\d\xi
\\
\label{formalexpcauchytransf}
&=-\frac{1}{2\pi\i } \sum_{j\geq 0} \frac{(j+1)_\ell\G(j+\ell+1+\a)}{x^{\ell+j+1}},
\end{align}
where we have used the orthogonality property to shift the sum in the first place, then Rodrigues formula \eqref{rodrigues} and integration by parts.
The expansion \eqref{formalexpcauchytransf} is formal; however, it has an analytic meaning of asymptotic expansion as $x\to\infty$. Indeed, for any $J\geq 0$ the difference between the Cauchy transform and its truncated formal expansion is
\be
\wh\pi^{(\a)}_\ell(x)+\frac 1 {2\pi\i} \sum_{j=0}^{J-1}\frac 1{x^{j+1}}\int_0^{+\infty}\!\pi_\ell^{(\a)}(\xi)\xi^{\a+j}\e^{-\xi}\d \xi=\frac 1{2\pi\i x^J}\int_0^{+\infty}\!\pi_\ell^{(\a)}(\xi)\xi^{\a+J}\e^{-\xi}\frac{\d \xi}{\xi-x}=\O\left(\frac 1 {x^{J+1}}\right)
\ee
where the last step holds as $x\to\infty$, uniformly in $\C\setminus[0,+\infty)$.
Hence, using \eqref{laguerrepolynomial} and \eqref{formalexpcauchytransf},
\be
\label{expansionY}
Y(x)\sim
\sum_{j\geq 0}\frac 1{j!x^j}
\begin{pmatrix}
(-1)^j(N-j+1+\a)_j(N-j+1)_j
&
-\frac {h_{N-1}}{2\pi\i x}(N+\a)_{j+1} (N)_{j+1}
\\
-\frac{2\pi\i}{h_{N-1}x}(-1)^j(N-j+\a)_j(N-j)_j
&
(N+\a)_j(N)_j
\end{pmatrix}x^{N\s_3}
\ee
as $x\to \infty$ within the sector $0<\arg x<2\pi$. Since $\det Y(x)\equiv 1$ we have
\be
\label{TRTinverse}
TR(x)T^{-1}=TY(x)\E_{11}Y^{-1}(x)T^{-1}=\left(
\begin{array}{cc}
1+Y_{21}(x)Y_{12}(x) & -\frac{2\pi\i}{h_N}Y_{11}(x)Y_{12}(x)
\\
\frac{h_N}{2\pi\i}Y_{21}(x)Y_{22}(x) & -Y_{21}(x)Y_{12}(x)
\end{array}
\right)
\ee
from which the expansion at $x=\infty$ can be computed as follows. For the $(1,1)$-entry we have
\be
Y_{21}(x)Y_{12}(x)\sim\sum_{\ell\geq 0}\frac 1{x^{\ell+2}}\sum_{j=0}^\ell\frac{(-1)^j(N+\a)_{\ell-j+1} (N)_{\ell-j+1}(N-j+\a)_{j}(N-j)_{j}}{j!(\ell-j)!}
\ee
and noting a trivial simplification of rising factorials
\be
\label{ID1}
(N+\a)_{\ell-j+1}(N-j+\a)_{j}=(N-j+\a)_{\ell+1},
\
(N)_{\ell-j+1}(N-j)_{j}=(N-j)_{\ell+1}
\ee
it follows that as $x\to \infty$
\be
(TR(x)T^{-1})_{11}\sim 1+\sum_{\ell\geq 0}\frac 1{x^{\ell+2}}(\ell+1)A_{\ell+1}(N,N+\alpha)= 1+\sum_{\ell\geq 0}\frac 1{x^{\ell+1}}\ell A_{\ell}(N,N+\alpha)=(R_+)_{11}(x)
\ee
with $A_{\ell}(N,M)$ as in \eqref{ABlMN}.
In a similar way we compute the $(1,2)$-entry
\begin{align}
\nonumber
&\text{\small{$-\frac{2\pi\i}{h_N}Y_{11}(x)Y_{12}(x)\sim\frac{1}{N(N+\alpha)}\sum_{\ell\geq 0}\frac{1}{x^{\ell+1}}\sum_{j=0}^\ell (-1)^j\frac{(N-j+1+\alpha)_j(N-j+1)_j(N+\alpha)_{\ell-j+1}(N)_{\ell-j+1}}{j!(\ell-j)!}$}}
\\
&\qquad \qquad \qquad \quad \ 
=\sum_{\ell\geq 0}\frac{1}{x^{\ell+1}}\sum_{j=0}^\ell (-1)^j\frac{(N-j+1+\alpha)_\ell (N-j+1)_\ell}{j!(\ell-j)!},
\end{align}
where in the second relation we use a similar version of \eqref{ID1}, and therefore from the above relation and \eqref{TRTinverse} we conclude that
\be
(TR(x)T^{-1})_{12}\sim\sum_{\ell\geq 0}\frac 1{x^{\ell+1}} B_{\ell}(N+1,N+1+\alpha)=(R_+)_{12}(x) 
\ee
with $B_{\ell}(N,M)$ as in \eqref{ABlMN}. 
Finally, the $(2,1)$-entry of the expansion of $TR(x)T^{-1}$ is computed in a similar way as
\begin{align}
\nonumber
\frac{2\pi\i}{h_N}Y_{21}(x)Y_{22}(x)&\sim
-N(N+\a)\sum_{\ell\geq 0}\frac{1}{x^{\ell+1}}\sum_{j=0}^\ell(-1)^j\frac{(N-j+\alpha)_j(N-j)_j(N+\alpha)_{\ell-j}(N)_{\ell-j}}{j!(\ell-j)!}
\\
\nonumber
&=-N(N+\a)\sum_{\ell\geq 0}\frac{1}{x^{\ell+1}}\sum_{j=0}^\ell(-1)^j\frac{(N-j+\alpha)_\ell (N-j)_\ell}{j!(\ell-j)!}
\\
&=-N(N+\a)\sum_{\ell\geq 0}\frac{1}{x^{\ell+1}}B_\ell(N,N+\a),
\end{align}
and the proof is complete.
\end{proof}

Let us note a recurrence property of the coefficients $A_\ell(N,M)$ and $B_\ell(N,M)$ defined in \eqref{ABlMN} entering the expansion \eqref{expR+}.

\begin{lemma}
\label{lemmarecursion}
The entries $A_\ell(N,M),B_\ell(N,M)$ ($\ell\geq 0$) defined in \eqref{ABlMN} satisfy the following three term recursions
\begin{align} \nonumber
\!\!\!\!\!\!\!\!\!\!\!\!(\ell+2)A_{\ell+1}(N,M)&=(2\ell+1)(N+M)A_\ell(N,M)+(\ell-1)(\ell^2-(M-N)^2)A_{\ell-1}(N,M),
\\  \label{RecursionAB}
\!\!\!\!\!\!\!\!\!\!\!\!(\ell+1)B_{\ell+1}(N,M)&=(2\ell+1)(N+M-1)B_\ell(N,M)+\ell(\ell^2-(M-N)^2)B_{\ell-1}(N,M),
\end{align}
for $\ell\geq 1$, with initial data given as
\be
A_0(N,M)=N,\quad A_1(N,M)=NM,\qquad B_0(N,M)=1,\quad B_1(N,M)=N+M-1.
\ee
\end{lemma}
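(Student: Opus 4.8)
The plan is to avoid manipulating the explicit sums in \eqref{ABlMN} directly and instead read the recursions off the linear ODE satisfied by $R(x):=R(x;\t=\0)$. First I would record that, by \eqref{R}, \eqref{Psi} and \eqref{deformation}, the matrix $Y$ obeys $\pa_xY=\mathcal{A}Y+Y\frac{\s_3}{2}\pa_xV_\a$, so that $R=Y\E_{11}Y^{-1}$ satisfies the \emph{isospectral} equation $\pa_xR(x)=[\mathcal{A}(x),R(x)]$ (the scalar term proportional to $\pa_xV_\a$ cancels because $\E_{11}$ commutes with $\s_3$). At $\t=\0$ the coefficient $\mathcal{A}(x)$ is the explicit rational matrix \eqref{Amatrix}; conjugating by the constant matrix $T$ of \eqref{T} and using $h_N/h_{N-1}=N(N+\a)$ (from \eqref{h0}) this becomes $\pa_x\wt R=[\wt{\mathcal A}(x),\wt R(x)]$ with $\wt R:=TRT^{-1}$ and $\wt{\mathcal A}(x)=-\frac{1}{2}\s_3+\frac{1}{x}\wt{\mathcal A}_0$, where $\wt{\mathcal A}_0:=\begin{pmatrix}N+\frac{\a}{2} & -1\\ N(N+\a) & -N-\frac{\a}{2}\end{pmatrix}$.

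The second step is to substitute the asymptotic expansion $\wt R(x)\sim R_+(x)=\E_{11}+\sum_{\ell\geq0}x^{-\ell-1}R_+^{[\ell]}$ of Proposition \ref{propexpansion+} into this ODE and match coefficients of powers of $x$: the coefficient of $x^{-m-1}$ (for $m\geq1$) gives $-mR_+^{[m-1]}=[-\frac{1}{2}\s_3,R_+^{[m]}]+[\wt{\mathcal A}_0,R_+^{[m-1]}]$. Since $[-\frac{1}{2}\s_3,\,\cdot\,]$ is purely off-diagonal, its diagonal and off-diagonal parts decouple; writing out the entries of $R_+^{[\ell]}$ (diagonal $\pm\ell A_\ell(N,N+\a)$, off-diagonal entries $B_\ell(N+1,N+\a+1)$ and $-N(N+\a)B_\ell(N,N+\a)$) and setting $M:=N+\a$, which is a free parameter, I would obtain three identities valid for general $N,M$:
\begin{align*}
B_m(N,M)&=(N+M-m)B_{m-1}(N,M)+2(m-1)A_{m-1}(N,M),\\
B_m(N+1,M+1)&=(N+M+m)B_{m-1}(N+1,M+1)+2(m-1)A_{m-1}(N,M),\\
NM\bigl[B_{m-1}(N,M)-B_{m-1}(N+1,M+1)\bigr]&=-m(m-1)A_{m-1}(N,M).
\end{align*}

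Third, I would use the last identity to eliminate $B_\ell(N+1,M+1)$ in favour of $B_\ell(N,M)$ and $A_\ell(N,M)$ and substitute into the first two; after simplification this produces the mixed relation $(m+1)A_m(N,M)=2NM\,B_{m-1}(N,M)+(m-1)(m+N+M)A_{m-1}(N,M)$, which in particular exhibits $B_\ell(N,M)$ as a fixed ($\ell$-dependent) linear combination of $A_{\ell+1}(N,M)$ and $A_\ell(N,M)$. Hence the two recursions in \eqref{RecursionAB} are equivalent, and each follows by inserting these expressions back into the first identity and collecting terms, the only nontrivial simplifications being $(N+M-m)(N+M+m)=(N+M)^2-m^2$ and $4NM-(N+M)^2+m^2=m^2-(M-N)^2$. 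The initial data are read off directly from \eqref{ABlMN}: $A_0=N$, $A_1=NM$, $B_0=1$, $B_1=NM-(N-1)(M-1)=N+M-1$; and the $\ell=1$ instance of each recursion — not covered by the generic elimination, which divides by $m-1$ — is verified by hand.

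I expect the main obstacle to be purely the bookkeeping of the second and third steps: tracking the exact entries of $R_+^{[\ell]}$ and of the commutators with $\wt{\mathcal A}_0$, and performing the elimination of $B_\ell(N+1,M+1)$ without sign or index errors, together with the minor care needed for the low-index edge cases. A more computational self-contained alternative would be to establish \eqref{RecursionAB} directly from the double sums in \eqref{ABlMN} by creative telescoping / Pascal-type manipulations of the summands; and, as hinted in the text, one could instead identify $A_\ell$ and $B_\ell$ with suitably normalized (dual) Hahn polynomials and invoke their classical three-term recurrence. I would favour the ODE argument because it recycles the Riemann--Hilbert machinery already in place.
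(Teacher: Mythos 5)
Your proposal is correct and follows the same core strategy as the paper: both exploit the Lax-type equation $\pa_xR(x)=[\mathcal A(x),R(x)]$ at $\t=\0$, with the explicit rational coefficient \eqref{Amatrix} conjugated by $T$, and read the recursions off the coefficients of the expansion of $TR(x)T^{-1}$ at $x=\infty$. The difference is only in the order of operations: the paper first decouples the $\mathfrak{sl}_2$-components $r_3,r_\pm$ of $TRT^{-1}$ into the scalar third-order ODEs \eqref{equation3}--\eqref{equatioN+-}, so that substituting the series immediately yields a separate three-term recursion for each family, whereas you substitute the series into the first-order coupled system and then eliminate $B_\ell(N+1,M+1)$ algebraically. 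Your three coupled identities check out (the diagonal one is exactly \eqref{AvsB}, which the paper derives separately from the first equation of \eqref{syseq}), and the elimination does produce \eqref{RecursionAB} with precisely the simplifications you indicate; the bookkeeping is somewhat heavier than in the paper's decoupled version, but you obtain the mixed relation $(m+1)A_m=2NM\,B_{m-1}+(m-1)(m+N+M)A_{m-1}$ and \eqref{AvsB} as useful byproducts (the latter is what the paper invokes in Example \ref{2pointExp}).
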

\begin{proof}
Introduce the matrices
\be
\label{sl2basis}
\s_3=\left(
\begin{array}{cc}
1 & 0 \\ 0 &-1
\end{array}
\right),\qquad
\s_+=\left(
\begin{array}{cc}
0 & 1 \\ 0 &0
\end{array}
\right),\qquad
\s_-=\left(
\begin{array}{cc}
0 & 0 \\ 1 &0
\end{array}
\right),
\ee 
and write
\begin{equation}
\label{R_decom}
TR(x)T^{-1}=\frac 12 \1+r_3\s_3+r_+\s_++r_-\s_-,
\end{equation} where we use that $\tr R\equiv 1$; hereafter we omit the dependence on $x$ for brevity. Recalling the first equation in \eqref{deformation} we infer that
\be
\label{commutator}
\frac{\pa}{\pa x}R(x)=[\mathcal{A}(x),R(x)]\Rightarrow \frac{\pa}{\pa x}\left(TR(x)T^{-1}\right)=[T\mathcal{A}(x)T^{-1},TR(x)T^{-1}]
\ee
and writing
\be
T\mathcal{A}(x)T^{-1}=
-\frac 12\s_3+\frac 1x
\left(
\begin{array}{cc}
N+\frac \a 2 & -1
\\
N(N+\a) & -N-\frac \a 2
\end{array}
\right)
=a_3\s_3+a_+\s_++a_-\s_-
\ee
using \eqref{Amatrix}, we deduce from \eqref{commutator} the system of linear ODEs
\be \label{syseq}
\pa_x r_3=a_+r_--a_-r_+,\qquad
\pa_x r_+=2(a_3r_+-a_+r_3),\qquad
\pa_x r_-=2(a_-r_3-a_3r_-)
\ee
which in turn implies the following \emph{decoupled} third order equations for $\pa_xr_3,r_+,r_-$,
\begin{align}
\label{equation3}
\!\!\!\!\!\!\!\!\!\!\!\!3 (2N+\a-x)\pa_x r_3+(4-\a^2+2(2N+\a)x-x^2)\pa_x^2 r_3+5 x \pa_x^3 r_3+x^2 \pa_x^4 r_3&=0,
\\
\label{equatioN+-}
\!\!\!\!\!\!\!\!\!\!\!\!(2N+\a\pm 1-x)r_\pm+(1-\a^2+2 (2N+\a\pm 1)x- x^2)\pa_xr_\pm+3x\pa_x^2r_\pm+x^2\pa_x^3r_\pm&=0.
\end{align}
Finally, using the Wishart parameter $M=N+\a$, we substitute the expansion at $x=\infty$ given by \eqref{R+} into the ODEs \eqref{equation3} and \eqref{equatioN+-} to obtain the claimed recursion relations.
\end{proof}

\begin{remark}
\label{remarkhg}
Let us remark that the recursion for $A_\ell(N,M)$ in Lemma \ref{lemmarecursion} is also deduced, by different means, in \cite{HT2003}.
In \cite{CMOS2019} it is pointed out that such three term recursion is a manifestation of the fact that $A_\ell(N,M)$ is expressible in terms of \emph{hypergeometric orthogonal polynomials}; this property extends to the entries $B_\ell(N,M)$, as we now show.
Introducing the  \emph{generalized hypergeometric function} $_3F_2$
\be
\pFq{3}{2}{p_1,p_2,p_3}{q_1,q_2}{\zeta}:=\sum_{j\geq 0}\frac{(p_1)_j(p_2)_j(p_3)_j}{(q_1)_j(q_2)_j}\frac{\zeta^j}{j!}
\ee
we can rewrite the coefficients $A_\ell(N,M)$ and $B_\ell(N,M)$ in the form
\begin{align}
A_\ell(N,M)&:=\frac{(N)_{\ell}(M)_{\ell}}{\ell!}\pFq{3}{2}{1-N,1-M,1-\ell}{1-N-\ell,1-M-\ell}{1},
\\
B_\ell(N,M)&:=\frac{(N)_{\ell}(M)_{\ell}}{\ell!}\pFq{3}{2}{1-N,1-M,-\ell}{1-N-\ell,1-M-\ell}{1}.
\end{align}
Alternatively, introducing the \emph{Hahn} and \emph{dual Hahn polynomials} \cite{KS1994,CMOS2019}
\begin{align}
Q_j(x;\mu,\nu,k) &:= 
\pFq{3}{2}{-x, j+\mu+\nu+1,-j}{-k, \mu+1}{1},
\\
R_j(\lambda(x);\gamma,\delta,k) &:= \pFq{3}{2}{-j, x+\gamma+\delta+1, -x}{-k, \gamma+1}{1}, \qquad  \lambda(x)=x(x+\gamma+\delta+1)
\end{align}
the coefficients $A_\ell(N,M)$ and $B_\ell(N,M)$ can be rewritten in the form
\begin{align}
\frac{\ell!A_\ell(N,M)}{(N)_{\ell}(M)_{\ell}} &=  Q_{\ell-1}(N-1;-M-\ell,1,N+\ell-1) = R_{N-1}(\ell-1;-M-\ell,1,N+\ell-1),
\\
\frac{\ell!B_\ell(N,M)}{(N)_\ell(M)_\ell} &= Q_\ell(N-1;-M-\ell, 0, N+\ell-1) = R_{N-1}(\ell;-M-\ell,0,N+\ell-1).
\end{align}

Let us note that the first differential equation in \eqref{syseq} implies, at the level of the coefficients of the power series $r_3, r_-, r_+$, the following relation
\be \label{AvsB}
\ell(\ell+1)A_\ell(N,M)= NM \left ( B_\ell(N+1,M+1)-B_\ell(N,M) \right),
\ee
which will be used in Example \ref{2pointExp} to prove formul\ae \eqref{2pointCorr}.
\end{remark}

Let us now consider the asymptotic expansion as $x\to 0$.

\begin{proposition}\label{propexpansion-}
The matrix $R(x)$ admits the asymptotic expansion
\be
\label{expR-}
TR(x)T^{-1}\sim R_-(x),\qquad x\to 0
\ee
uniformly in $\C\setminus[0,+\infty)$. Here $R_-$ is the formal series introduced in the beginning of this paper, see \eqref{R-}, and $T$ is defined in \eqref{T}. 
\end{proposition}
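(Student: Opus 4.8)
The plan is to mirror the proof of Proposition \ref{propexpansion+}, replacing the expansion point $x=\infty$ by $x=0$; the Rodrigues-formula evaluation of the moments at infinity is now replaced by an evaluation of the \emph{analytically continued} moments $\int_0^{+\infty}\pi_\ell^{(\a)}(\xi)\xi^{\a-m-1}\e^{-\xi}\d\xi$. I would again start from the representation \eqref{TRTinverse} of $TR(x)T^{-1}$ in terms of $Y_{11}=\pi_N^{(\a)}$, $Y_{21}=-\frac{2\pi\i}{h_{N-1}}\pi_{N-1}^{(\a)}$ (entire) and $Y_{12}=\wh\pi_N^{(\a)}$, $Y_{22}=-\frac{2\pi\i}{h_{N-1}}\wh\pi_{N-1}^{(\a)}$ (the Cauchy transforms \eqref{cauchytransform} at $\t=\0$): since the polynomial entries are entire, the behaviour of $TR(x)T^{-1}$ near $x=0$ is governed entirely by the expansions of $\wh\pi_N^{(\a)}$ and $\wh\pi_{N-1}^{(\a)}$ there.

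Next I would expand the Cauchy transforms at the origin. Inserting $\frac1{\xi-x}=\sum_{m\geq0}\frac{x^m}{\xi^{m+1}}$ into \eqref{cauchytransform} at $\t=\0$ and integrating term by term gives the formal series
\be
\wh\pi_\ell^{(\a)}(x)\ \sim\ \frac1{2\pi\i}\sum_{m\geq0}x^m\,\mu_{\ell,m},\qquad \mu_{\ell,m}:=\int_0^{+\infty}\pi_\ell^{(\a)}(\xi)\,\xi^{\a-m-1}\e^{-\xi}\,\d\xi,
\ee
where for $m\geq\Re\a$ the integral diverges at $\xi=0$ and is defined by analytic continuation in $\a$ (equivalently, by the contour deformation discussed before the proof). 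As for \eqref{formalexpcauchytransf}, truncating this series after $J$ terms leaves a remainder which is the sum of an $x^{\a}$-branch contribution --- invisible to the formal residues at $0$ used in Propositions \ref{propsecond} and \ref{propn} --- and a genuine $\O(x^{J})$ tail, uniformly in $\C\setminus[0,+\infty)$; this is the precise sense in which \eqref{expR-} is meant, and the source of the remark that the coefficients of $R_-$ are ill-defined for integer $\a$.

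Then I would put the moments in closed form. Using the explicit coefficients \eqref{laguerrepolynomial} of $\pi_\ell^{(\a)}$, together with $(\ell-j+1)_j=\ell!/(\ell-j)!$, $(j+1+\a)_{\ell-j}=\G(\ell+1+\a)/\G(j+1+\a)$ and $\G(j+\a-m)/\G(j+1+\a)=1/(j+\a-m)_{m+1}$, one finds
\be
\mu_{\ell,m}=h_\ell\sum_{j=0}^{\ell}\frac{(-1)^{\ell-j}}{j!\,(\ell-j)!\,(j+\a-m)_{m+1}},\qquad h_\ell=\ell!\,\G(\ell+1+\a).
\ee
Substituting this, and the coefficients of $\pi_N^{(\a)},\pi_{N-1}^{(\a)}$, into the four entries of \eqref{TRTinverse} and performing the Cauchy products, one matches term by term with \eqref{R-}; the decisive simplifications are collapses of products of rising factorials of the same type as \eqref{ID1}, and the binomial summation $\sum_{j=0}^{L}\binom{L}{j}\frac{(-1)^j}{j+\b}=L!/(\b)_{L+1}$ together with its higher analogues, which is exactly what collapses the sums to the $N$-independent denominator $(\a-\ell)_{2\ell+1}$ of \eqref{R-}. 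As a cross-check --- and a way to lighten this bookkeeping --- one may note that $TR(x)T^{-1}$ solves $\pa_x\bigl(TR(x)T^{-1}\bigr)=\bigl[T\mathcal{A}(x)T^{-1},TR(x)T^{-1}\bigr]$, so that, as in the proof of Lemma \ref{lemmarecursion}, its components obey the decoupled ODEs \eqref{equation3}--\eqref{equatioN+-}; these have a regular singularity at $x=0$ with exponents $0,\pm\a$ (for $r_\pm$), hence the power-series part of $TR(x)T^{-1}$ is a power-series solution and is therefore determined by finitely many leading coefficients, which are fixed from the $m=0$ moments above --- e.g.\ the $(1,2)$ entry at $x=0$ is $-\mu_{N,0}\,\pi_N^{(\a)}(0)/h_N=-1/\a$, the $\ell=0$ term of the $(1,2)$ entry of $R_-$.

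The main obstacle is twofold. Conceptually, the term-by-term expansion of the Cauchy transforms at $0$ is only formal until the divergent moments $\mu_{\ell,m}$ with $m\geq\Re\a$ are made sense of by analytic continuation in $\a$ --- or the contour deformation --- and one must check, as flagged before the proof, that this spoils neither the asymptotic meaning of \eqref{expR-} nor the residue identities of Section \ref{sectioncorrelations} that rely on it. Computationally, the laborious step is the rising-factorial bookkeeping: showing that all $N$-dependence cancels from the denominators of the Cauchy products, leaving precisely $(\a-\ell)_{2\ell+1}$, and that the numerators assemble into $(\ell+1)A_\ell(N,N+\a)$, $-B_\ell(N+1,N+\a+1)$ and $N(N+\a)B_\ell(N,N+\a)$ as in \eqref{R-}. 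The ODE-based argument reduces this identification to matching the constant term, which is the cleanest route to the conclusion.
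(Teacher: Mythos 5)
Your proposal is correct, and the route you single out at the end as ``the cleanest'' is precisely the one the paper takes: the paper computes only the first two Taylor coefficients of $Y(x)$ at $x=0$ (from $\pi_\ell^{(\a)}(0),\,\wh\pi_\ell^{(\a)}(0)$ and the next order), writes the ansatz $r_3,r_\pm\sim\sum_\ell \wt A_\ell\,x^\ell/(\a-\ell)_{2\ell+1}$, etc., and then substitutes into the same decoupled ODEs \eqref{equation3}--\eqref{equatioN+-} used in Lemma \ref{lemmarecursion} to obtain three-term recursions for $\wt A_\ell,\wt B_\ell^\pm$ that, with the computed initial data, are solved by $A_\ell(N,M)$ and $\mp B_\ell$; no term-by-term Cauchy-product matching is ever carried out. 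Your primary route (closed-form regularized moments $\mu_{\ell,m}$ plus rising-factorial and Chu--Vandermonde collapses) would also work and is self-contained, but it is exactly the bookkeeping the paper's ODE argument is designed to avoid; conversely, the ODE argument requires the non-resonance of the exponents $0,\pm\a$ at the regular singular point, i.e.\ non-integer $\a$, which is consistent with the analytic-continuation caveat in Section \ref{secproof}. One small point in the paper's favor: it fixes the two leading coefficients $\ell=0,1$ as initial data for the recursions (stated for $\ell\geq 1$), whereas your ``constant term only'' shortcut relies on the $\ell=0$ instance of the recursion being nondegenerate ($\a\neq\pm 1$) — harmless here, but worth stating. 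Your explicit separation of the $x^\a$-branch contribution from the integer-power part of the expansion, and the observation that only the latter is seen by the formal residues at $0$, is actually more careful than the paper's one-line appeal to ``arguments entirely analogous'' to the $x\to\infty$ case.
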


\begin{proof}
First we observe that by arguments which are entirely analogous to those employed in the proof of Proposition \ref{propexpansion+}, the matrices $Y(x)$ and (consequently) $R(x)$ possess asymptotic expansions in integer powers of $x$ as $x\to 0$, which are uniform in $\C\setminus[0,+\infty)$.
The first coefficients of these expansions at $x=0$ can be computed from
\begin{align}
\pi_\ell^{(\a)}&=(-1)^\ell\left((\a+1)_\ell-\ell(\a+2)_{\ell-1}x+\O\left(x^2\right)\right)
\\
\wh\pi_\ell^{(\a)}&\sim\frac {(-1)^\ell}{2\pi\i}\left(\ell!\G(\a)+(\ell+1)!\G(\a-1)x+\O\left(x^2\right)\right)
\end{align}
where the former is found directly from \eqref{laguerrepolynomial} and the latter by a computation analogous to \eqref{formalexpcauchytransf}; hence recalling the definition \eqref{Y} we have
\begin{align}
\nonumber
Y(x)\sim&(-1)^N
\renewcommand*{\arraystretch}{1.4}
\begin{pmatrix}
(\a+1)_N & \frac{N!\G(\a)}{2\pi\i} \\
\frac{2\pi\i}{h_{N-1}}(\a+1)_{N-1} & \frac{(N-1)!\G(\a)}{h_{N-1}}
\end{pmatrix}
\\
&
+(-1)^{N}
\renewcommand*{\arraystretch}{1.4}
\begin{pmatrix}
-N(\a+2)_{N-1} & \frac{(N+1)!\G(\a-1)}{2\pi\i} \\
-\frac{2\pi\i}{h_{N-1}}(N-1)(\a+2)_{N-2} & \frac{N!}{h_{N-1}}\G(\a-1)
\end{pmatrix}x+\O\left(x^2\right)
\end{align}
as $x\to 0$ within $0<\arg x<2\pi$; this implies that in the same regime we have
\begin{align}
\nonumber
\!\!\!\!\!\!\!\!\!\!\!\!TR(x)T^{-1}\sim&\begin{pmatrix} 1 & 0 \\ 0 & 0 \end{pmatrix}+\frac 1\a\begin{pmatrix}
N & -1 \\
N(N+\a) & -N
\end{pmatrix}
\\
\label{firstterms}
&+\begin{pmatrix}
2N(N+\a) & -2N-\a-1 \\
N(N+\a)(2N+\a-1) & -2N(N+\a)
\end{pmatrix}\frac x{(\a-1)\a(\a+1)}+\O\left(x^2\right).
\end{align}
Therefore, our goal is just to show that the coefficients of the latter expansion are related to those of the expansion at $x=\infty$ as stated in the formul\ae\ \eqref{R+} and  \eqref{R-}. To this end let us write, in terms of the decomposition \eqref{R_decom},
\be
\label{ansatz}
r_3(x)\sim\frac 12+\sum_{\ell\geq 0}(\ell+1) \wt A_\ell(N,N+\a)\frac{x^\ell}{(\a-\ell)_{2\ell+1}}
,\qquad
r_\pm(x)\sim \sum_{\ell\geq 0}\wt B_\ell^\pm(N,N+\a)\frac{x^\ell}{(\a-\ell)_{2\ell+1}}
\ee
for some, yet undetermined coefficients $\wt A_\ell(N,M),\wt B_\ell^\pm(N,M)$. 
From \eqref{firstterms} we read the first coefficients $\wt A_{\ell}(N,M)$, $\wt B_\ell^\pm(N,M)$ in \eqref{ansatz} as
\begin{align}
\nonumber
\wt A_0(N,M)&=N=A_0(N,M),& \wt A_1(N,M)&=NM=A_1(N,M),\\
\nonumber
\wt B_0^+(N,M)&=-1=-B_0(N+1,M+1),& \wt B_1^+(N,M)&=-N-M-1=-B_1(N+1,M+1),\\
\nonumber
\wt B_0^-(N,M)&=NM=NMB_0(N,M),& \wt B_1^-(N,M)&=NM(N+M-1)=NMB_1(N,M).
\\
\label{initialdatainverse}
\end{align}
Finally, it can be checked that inserting \eqref{ansatz} in \eqref{equation3} and \eqref{equatioN+-} we obtain, again using $M=N+\a$, the recursions
\begin{align}
\nonumber
(\ell+2)\wt A_{\ell+1}(N,M)&=(2\ell+1)(N+M)\wt A_\ell(N,M)+(\ell-1)(\ell^2-(M-N)^2)\wt A_{\ell-1}(N,M),
\\
\nonumber
(\ell+1)\wt B_{\ell+1}^+(N,M)&=(2\ell+1)(N+M+1)\wt B_\ell^+(N,M)+\ell(\ell^2-(M-N)^2)\wt B_{\ell-1}^+(N,M),
\\
\nonumber
(\ell+1)\wt B_{\ell+1}^-(N,M)&=(2\ell+1)(N+M-1)\wt B_\ell^-(N,M)+\ell(\ell^2-(M-N)^2)\wt B_{\ell-1}^-(N,M),
\\
\label{recursioninverse}
\end{align}
for $\ell\geq 1$.
In view of Lemma \ref{lemmarecursion}, the linear recursions \eqref{recursioninverse} with initial data \eqref{initialdatainverse} are uniquely solved as
\be
\wt A_\ell(N,M)=A_\ell(N,M),\quad \wt B_\ell^+(N,M)=-B_\ell(N+1,M+1),\quad \wt B_\ell^-(N,M)=NMB_\ell(N,M).
\ee
Therefore from \eqref{R_decom}, \eqref{ansatz} and the above relation we obtain 
\be
TR(x)T^{-1}\sim\begin{pmatrix} 1 & 0 \\ 0 & 0 \end{pmatrix}+\sum_{\ell\geq 0}\frac{x^\ell}{(\a-\ell)_{2\ell+1}}
\begin{pmatrix}(\ell+1)A_\ell(N,N+\a)&-B_\ell(N+1,N+1+\a)
\\
N(N+\alpha)B_\ell(N,N+\a)&-(\ell+1) A_\ell(N,N+\a) 
\end{pmatrix}
\ee
with $\alpha=M-N$ and $A_\ell(N,M)$ and $B_{\ell}(N,M)$ as in \eqref{ABlMN}. The proof is complete
\end{proof}

We are finally ready to prove Theorem \ref{thmmain}.

\begin{proof}[{\bf Proof of Theorem \ref{thmmain}}]
Let us first consider the one-point generating functions $C_{1,0}(x)$ and $C_{0,1}(x)$. It is convenient to introduce 
the scalar function
\be
S_1(x):=\tr(\mathcal{A}(x)R(x)).
\ee
Indeed from \eqref{312} we see that for all $k\not =0$ we have
\begin{align}
\nonumber
\left.\frac{\pa Z_N(\a;\t)}{\pa t_k}\right|_{\t=\0}
&=-\res{x}\tr(\mathcal{A}(x)R(x))x^k\d x+\res{x}\left(\frac \a{2x}-\frac 12\right)x^k\d x
\\
\nonumber
&=-\res{x}
(xS_1(x))x^{k-1}\d x-\frac 12 \delta_{k,-1}
\\
\nonumber
&=\res{x}\frac{\pa (xS_1(x))}{\pa x}\frac{x^k}{k}\d x-\frac 12 \delta_{k,-1}
\\
\label{312bis}
&=\res{x}\left(\frac{\pa (xS_1(x))}{\pa x}+\frac 12\right)\frac{x^k}{k}\d x.
\end{align}
We now claim that
\be
\label{claim}
\frac{\pa}{\pa x} (xS_1(x))=\frac 12 - R_{11}(x).
\ee
Indeed we have
\be
\label{last}
\pa_x S_1(x)=\tr\left((\pa_x \mathcal{A}(x))R(x)\right)+\tr\left(\mathcal{A}(x)(\pa_x R(x))\right).
\ee
and noting the following identities
\be
\pa_x\mathcal{A}(x)=-\frac 1x \left(\mathcal{A}(x)+\frac 12 \s_3\right),\qquad \pa_x R(x)=[\mathcal{A}(x),R(x)]
\ee
we can rewrite \eqref{last} as
\be
\pa_x S_1(x)=-\frac 1x\tr(\mathcal{A}(x)R(x))-\frac{1}{2x} \tr(\s_3R(x))+\tr(\mathcal{A}(x)[\mathcal{A}(x),R(x)])
\ee
and \eqref{claim} follows noting $\tr(\mathcal{A}(x)[\mathcal{A}(x),R(x)])=\tr([\mathcal{A}(x),\mathcal{A}(x)R(x)])=0$ and 
\be
\frac 12 \tr(\s_3R(x))=\tr(\E_{11}R(x))-\frac 12\tr R(x)=R_{11}(x)-\frac 12
\ee
as $\tr R(x)\equiv 1$. Hence, inserting \eqref{claim} into \eqref{312bis} we obtain, irrespectively of the sign of $k$,
\be
\label{ooooonepoint}
\left\langle\tr X^k\right\rangle\stackrel{\eqref{connectedmoments}}{=}\left.\frac{\pa Z_N(\a;\t)}{\pa t_k}\right|_{\t=\0}
=-\frac 1k \res{x} \left(R_{11}(x)-1\right)x^k\d x.
\ee
At the level of generating functions, for $C_{1,0}(x)$ we have
\be
\pa_x(xC_{1,0}(x))\stackrel{\eqref{generatingonepoint}}{=}-\sum_{k\geq 1}\frac {k\left\langle\tr X^k\right\rangle}{x^{k+1}}\stackrel{\eqref{ooooonepoint}}{=}\sum_{k\geq 1}\frac{1}{x^{k+1}}\res{\xi=\infty}\left(R_{11}(\xi)-1\right)\xi^k\d \xi=-(R_+(x))_{11}+1
\ee
which, after integration, is the formula in the statement of Theorem \ref{thmmain}; in the last step of the last chain of equalities, we have to observe that $(R_+)_{11}=1+\O\left(x^{-2}\right)$ as $x\to\infty$, see \eqref{TRTinverse}.

Similarly, for $C_{0,1}(x)$ we have
\be
\pa_x(xC_{0,1}(x))\stackrel{\eqref{generatingonepoint}}{=}-\sum_{k\geq 1}kx^{k-1}\left\langle\tr X^{-k}\right\rangle\stackrel{\eqref{ooooonepoint}}{=}-\sum_{k\geq 1}x^{k-1}\res{\xi=0}\left(R_{11}(\xi)-1\right)\xi^{-k}\d \xi
=-(R_-(x))_{11}+1
\ee
which, after integration, is the formula in the statement of Theorem \ref{thmmain}. Here we have noted that $(R(x))_{11}=(TR(x)T^{-1})_{11}$ since $T$ is diagonal, see \eqref{T}.

The formul\ae\ for $r\geq 2$ are proven instead by the following computation;
\begin{align}
\nonumber
C_{r_+,r_-}(x_1,\dots,x_r)&\stackrel{\mathclap{\eqref{generatingmixed}}}{\ =\ }\sum_{k_1,\dots,k_{r}\geq 1}\frac {(-1)^{r_-}}{x_1^{\s_1k_1+1}\cdots x_r^{\s_rk_r+1}}\left\langle\tr X^{\s_1 k_1}\cdots \tr X^{\s_r k_r}\right\rangle_{\mathsf{c}}
\\
\nonumber
&\stackrel{\mathclap{\eqref{connectedmoments}}}{\ =\ }\sum_{k_1,\dots,k_{r}\geq 1}\frac {(-1)^{r_-}}{x_1^{\s_1k_1+1}\cdots x_r^{\s_rk_r+1}}\left.\frac{\pa^r\log Z_N(\a;\t)}{\pa t_{\s_1k_1}\cdots\pa t_{\s_rk_r}}\right|_{\t=\0}
\\
\nonumber
&\stackrel{\mathclap{\eqref{residuenpoint}}}{\ =\ }\sum_{k_1,\dots,k_{r}\geq 1}\frac{(-1)^{r_+}\res{\xi_1}\cdots\res{\xi_r}S_r(\xi_1,\dots,\xi_r;\t=\0)\xi_1^{\s_1k_1}\cdots \xi_r^{\s_rk_r}\d \xi_1\cdots\d \xi_r}{x_1^{\s_1k_1+1}\cdots x_r^{\s_rk_r+1}}
\\
&\stackrel{\mathclap{\eqref{S},\eqref{expR+},\eqref{expR-}}}{\ =\ }\ \ \ \quad - \sum_{(i_1,\dots,i_r)\in\mathcal{C}_r}\frac{\tr\left(R_{\s_{i_1}}(x_{i_1})\cdots R_{\s_{i_r}}(x_{i_r})\right)-\delta_{r,2}}{(x_{i_1}-x_{i_2})\cdots(x_{i_r}-x_{i_1})},
\end{align}
where we have noted that the transformation $R\mapsto TRT^{-1}$ leaves the expression $S_r$ invariant, and therefore we are free to use the expansions $R_\pm$ of Propositions \ref{propexpansion+} and \ref{propexpansion-}; the signs $\s_i$ are those defined in \eqref{sign}. The proof is complete.
\end{proof}

\begin{example}\label{2pointExp}
As an application of Theorem \ref{thmmain}, let us show how to prove formul\ae\ \eqref{2pointCorr}. Combining \eqref{2generating} and \eqref{2pointMain} gives
\be \label{eqq1}
\left\langle\tr X^{k} \tr X\right\rangle_{\mathsf{c}} =\res{x_1=\infty}\res{x_2=\infty}\frac{\tr(R_+(x_1)R_+(x_2))-1}{(x_1-x_2)^2}x_1^{k}x_2\ \d x_1\d x_2.
\ee
Let us write the matrix $R_+(x)$ as
\be
R_+(x)=E_{11}+\sum_{\ell\geq 0}\frac {R^+_{\ell}}{x^{\ell+1}}, \qquad R^+_{\ell}=
\begin{pmatrix}
\ell A_{\ell}(N,M) & B_\ell(N+1,M+1)
\\
-N M B_\ell(N,M) & -\ell A_{\ell}(N,M)
\end{pmatrix}
\ee
and expand the denominator in $1/(x_1-x_2)^2$ as a geometric series (the order we carry out the expansions in $x_1,x_2$ is irrelevant, as explained in Remark \ref{Regularity}) to rewrite the right side of \eqref{eqq1} as
\be
\frac 1{x_2^2}\sum_{h_1,h_2 \ge 0} \frac{x_1^{h_1+h_2}}{x_2^{h_1+h_2}}\left(\sum_{\ell_1 \ge 0} \frac{\tr(E_{11}R_{\ell_1}^+)}{x_1^{\ell_1+1}}+\sum_{\ell_2 \ge 0} \frac{\tr(E_{11}R_{\ell_2}^+)}{x_2^{\ell_2+1}}+\sum_{\ell_1, \ell_2 \ge 0}\frac{\tr(R_{\ell_1}^+R_{\ell_2}^+)}{x_1^{\ell_1+1}x_2^{\ell_2+1}}\right).
\ee
Finally, the residues extract the coefficient in front of $x_1^{-k-1}x_2^{-2}$, yielding
\be
\left\langle\tr X^{k} \tr X\right\rangle_{\mathsf{c}} = \tr(E_{11}R^+_{k}) = k A_k(N,M).
\ee

In a similar way, from  the relation 
\be \label{eqq2}
\left\langle\tr X^{-k} \tr X^{-1}\right\rangle_{\mathsf{c}} =\res{x_1=0}\res{x_2=0}\frac{\tr(R_-(x_1)R_-(x_2))-1}{(x_1-x_2)^2}x_1^{-k}x_2^{-1} \d x_1\d x_2
\ee
and 
\be
R_-(x)= \frac{1}{\a}\begin{pmatrix} M & -1 \\ N M & -N \end{pmatrix}+\sum_{\ell\geq 1}\frac{x^{\ell}}{(\a-\ell)_{2\ell+1}}R^-_{\ell}, \
R^-_{\ell}=
\begin{pmatrix}
(\ell+1)A_{\ell}(N,M) & -B_\ell(N+1,M+1)
\\
N M B_\ell(N,M) & -(\ell+1)A_{\ell}(N,M)
\end{pmatrix}
\ee
we obtain
\begin{align}
\nonumber
\left\langle\tr X^{-k} \tr X^{-1} \right\rangle_{\mathsf{c}} &= \frac{1}{\a}\tr\left[\begin{pmatrix} M & -1 \\ N M & -N \end{pmatrix} R^-_{k+1}\right] \\
\nonumber 
&= \frac{(k+2)(N+M)A_{k+1}(N,M) - NM \left( B_{k+1}(N+1,M+1)+B_{k+1}(N,M)\right)}{\a(\a-k-1)_{2k+3}} 
\\
&= \frac{k A_k(N,M)}{\a(\a-k)_{2k+1}}.
\end{align}
The last equality follows from the recursion relations \eqref{RecursionAB} and the formula \eqref{AvsB}. 
The computations of  $\left\langle\tr X^{k} \tr X^{-1} \right\rangle_{\mathsf{c}}$ and $\left\langle\tr X^{-k} \tr X^{1} \right\rangle_{\mathsf{c}}$ follow in a similar way.
\end{example}

\section{Proof of Proposition \ref{prop13}}\label{secproofprop13}

In this section we prove Proposition \ref{prop13} by means of the explicit formul\ae\ for the matrices $R_{\pm}(x)$ of Theorem \ref{thmmain}.
The proof follows from two main lemmas; the first one explains why rescaled correlators can be written as series in even powers of $N$ only.
We recall that we are working in the regime $\a=(c-1)N$, i.e. $M=cN$, with $c$ independent of $N$.
From \eqref{generatingmixed} we can write generating functions for the rescaled correlators appearing in \eqref{Texp+-} as
\be
\label{generatingmixedrescaled}
\sum_{k_1,\dots,k_{r}\geq 1}\frac{ N^{-\sum_{i=1}^r \s_ik_i}(-1)^{r_-}}{x_1^{\s_1k_1+1}\cdots x_r^{\s_rk_r+1}}\left\langle\tr X^{\s_1 k_1}\cdots \tr X^{\s_r k_r}\right\rangle_{\mathsf{c}}
=
N^rC_{r_+,r_-}\left(Nx_1,\dots,Nx_{r}\right),
\ee
where we use the signs in \eqref{sign}.

Let us preliminarily observe two properties of the formula \eqref{multipoint}, which are crucial to our proof of Proposition \ref{prop13}. First, such formula is invariant under replacing the matrices $R_\pm(x)$ with $G R_\pm(x) G^{-1}$ for some \emph{constant} non-degenerate matrix $G$, and second it is invariant (up to a simple modification for the two-point function) under replacing $R_\pm(x)$ with $R_\pm(x)+\gamma\1$ for any constant $\gamma\in\C$. While the first property is trivial, the second one requires few lines  of explanation. When $r=2$ one can exploit the fact that $\tr R_\pm(x)\equiv 1$ to write
\be
\frac{\tr \left(R_\pm(x_1)R_\pm(x_2)\right)-1}{(x_1-x_2)^2}=\frac{\tr \left((R_\pm(x_1)+\gamma\1)(R_\pm(x_2)+\gamma\1)\right)-(1+2\gamma+2\gamma^2)}{(x_1-x_2)^2}.
\ee
When $r\geq 3$ instead we reason as follows. Let us write every $r$-cycle $(i_1,\dots,i_r)$ with $i_r=r$, namely
\be
\sum_{(i_1,\dots,i_r)\in\mathcal{C}_r}\frac{\tr\left(R_{i_1}\cdots R_{i_r}\right)}{(x_{i_1}-x_{i_2})\cdots (x_{i_r}-x_{i_1})}
=
\sum_{(i_1,\dots,i_{r-1},r)\in\mathcal{C}_r}\frac{\tr\left(R_{i_1}\cdots R_{i_{r-1}}R_r\right)}{(x_{i_1}-x_{i_2})\cdots (x_{i_{r-1}}-x_r)(x_r-x_{i_1})},
\ee
where for the purpose of this explanation we adopt a short notation $R_i:=R_{\s(i)}(x_i)$; we point out that the role of the ``fixed" matrix $R_r$ is completely arbitrary, as the function \eqref{multipoint} is symmetric. Let us now show that this expression is invariant under the transformation $R_r\mapsto R_r+\gamma\1$; indeed the difference between the two expressions is computed from the last formula to be proportional to
\begin{align}
\nonumber
&\sum_{(i_1,\dots,i_{r-1},r)\in\mathcal{C}_r}\frac{\tr\left(R_{i_1}\cdots R_{i_{r-1}}\right)}{(x_{i_1}-x_{i_2})\cdots (x_{i_{r-1}}-x_r)(x_r-x_{i_1})}
\\ &\qquad
=\sum_{(i_1,\dots,i_{r-1},r)\in\mathcal{C}_r}\frac{\tr\left(R_{i_1}\cdots R_{i_{r-1}}\right)}{(x_{i_1}-x_{i_2})\cdots (x_{i_{r-1}}-x_{i_1})}\left(\frac 1{x_{i_{r-1}}-x_r}-\frac 1{x_{i_1}-x_r}\right)=0.
\end{align}
It follows that in \eqref{multipoint} one may inductively substitute all $R_i$'s by $R_i+\gamma\1$ (in principle, even with different $\gamma$'s for each $R_i$, but we do not need such freedom) without affecting the formula \eqref{multipoint}.

\begin{lemma} \label{Parity}
$C_{r_+,r_-}(N x_1,\dots,N x_r)$ is an even function of $N$ for every $r_+,r_-$.
\end{lemma}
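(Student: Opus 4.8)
The plan is to prove that the right‑hand side of \eqref{multipoint}, taken at the rescaled points $Nx_1,\dots,Nx_r$, is \emph{literally unchanged} under $N\mapsto-N$; this is exactly the assertion that it is even in $N$. The one‑point generating functions $C_{1,0},C_{0,1}$ are not covered by \eqref{multipoint}, but for these the statement follows by inspection of Theorem~\ref{thmmain}: one has $C_{1,0}(Nx)=\sum_{k\ge1}N^{-k-1}A_k(N,cN)\,x^{-k-1}$ and $C_{0,1}(Nx)=-\sum_{\ell\ge0}N^{\ell}(\alpha-\ell)_{2\ell+1}^{-1}A_\ell(N,cN)\,x^{\ell}$, and these are even in $N$ because $A_\ell(-N,-M)=(-1)^{\ell+1}A_\ell(N,M)$ while $(\alpha-\ell)_{2\ell+1}$ is an odd polynomial in $N$ in the scaling $\alpha=(c-1)N$ (being a product of $2\ell+1$ terms symmetric about $(c-1)N$).

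The heart of the matter is a symmetry of the formal series $R_\pm$. Writing $M=cN$, the map $(N,M)\mapsto(-N,-M)$ is an involution of the scaling family (it fixes $c$), and I set $G:=\mathrm{diag}(1,NM)$, which is itself fixed by this involution. The claim is that for $\sigma\in\{+,-\}$
\be
R_\sigma(x)\big|_{(N,M)\mapsto(-N,-M)}=G\Big(R_\sigma(-x)\big|_{(N,M)}\Big)^{\mathsf{T}}G^{-1}.
\ee
I would prove this through the matrix $\mathcal A$. Putting $a:=N+\tfrac\alpha2$ and $b:=N(N+\alpha)=NM$, formula \eqref{Amatrix} reads $T\mathcal A(x)T^{-1}=-\tfrac12\s_3+\tfrac1x\left(\begin{smallmatrix}a&-1\\ b&-a\end{smallmatrix}\right)$ with $T$ as in \eqref{T}; since $a\mapsto-a$ and $b\mapsto b$ under the involution, a one‑line check gives $\big(T\mathcal A(x)T^{-1}\big)\big|_{(N,M)\mapsto(-N,-M)}=G\big(T\mathcal A(-x)T^{-1}\big)^{\mathsf{T}}G^{-1}$. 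By Propositions~\ref{propexpansion+}--\ref{propexpansion-}, $R_+(x)$ (resp. $R_-(x)$) is the \emph{unique} formal solution at $x=\infty$ (resp. at $x=0$, where one uses $\alpha\notin\mathbb{Z}$ so that the recurrences governing the expansion are non‑degenerate) of $\partial_xR=[T\mathcal AT^{-1},R]$ that is a rank‑one idempotent with the prescribed leading term. One checks at once that $\rho(x)\mapsto G\rho(-x)^{\mathsf{T}}G^{-1}$ sends a solution of this ODE for the datum $(N,M)$ to a solution for the datum $(-N,-M)$, preserving idempotency and matching leading terms; uniqueness then gives the claim. Alternatively the identity can be checked directly from \eqref{R+}--\eqref{R-} using the reflection identities $A_\ell(-N,-M)=(-1)^{\ell+1}A_\ell(N,M)$, $B_\ell(-N,-M)=(-1)^\ell B_\ell(N+1,M+1)$ and $B_\ell(1-N,1-M)=(-1)^\ell B_\ell(N,M)$, all consequences of \eqref{ABlMN}, of $(p)_j=(-1)^j(1-p-j)_j$, and of a reindexing of the defining sums.

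Granting the claim, fix $r=r_++r_-\ge2$ and apply $N\mapsto-N$ to \eqref{multipoint} evaluated at $Nx_1,\dots,Nx_r$. Each denominator $(Nx_{i_1}-Nx_{i_2})\cdots(Nx_{i_r}-Nx_{i_1})=N^r(x_{i_1}-x_{i_2})\cdots(x_{i_r}-x_{i_1})$ acquires a factor $(-1)^r$. In the numerator the argument of each $R_{\s_{i_k}}$ becomes $-Nx_{i_k}$ and its coefficients are now at $-N$, so by the claim (with $x=-Nx_{i_k}$) it equals $G\,R_{\s_{i_k}}(Nx_{i_k})^{\mathsf{T}}G^{-1}$ with coefficients at $N$; inside the trace the $G^{-1}G$ telescope and $(AB\cdots C)^{\mathsf{T}}=C^{\mathsf{T}}\cdots B^{\mathsf{T}}A^{\mathsf{T}}$ reverses the product, so the numerator becomes $\tr\big(R_{\s_{i_r}}(Nx_{i_r})\cdots R_{\s_{i_1}}(Nx_{i_1})\big)$. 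Now reindex by sending each $r$‑cycle to its reverse, $(i_1,\dots,i_r)\mapsto(i_r,\dots,i_1)$ — a bijection of the set of $r$‑cycles (the signs $\s_i$ follow along, as $\s_i$ depends only on the label $i$): the numerator returns to the standard form $\tr\big(R_{\s_{j_1}}(Nx_{j_1})\cdots R_{\s_{j_r}}(Nx_{j_r})\big)$, while the cyclic product $(x_{i_1}-x_{i_2})\cdots(x_{i_r}-x_{i_1})$ picks up a second factor $(-1)^r$ from reversing its $r$ factors. The two $(-1)^r$'s cancel, and one recovers \eqref{multipoint} at $Nx_1,\dots,Nx_r$ verbatim; hence it is invariant under $N\mapsto-N$, that is, even in $N$.

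The main obstacle is the symmetry identity for $R_\pm$: in the conceptual route the delicate point is the uniqueness of the formal solutions (and the precise role of $\alpha\notin\mathbb{Z}$ at $x=0$), while in the computational route it is the somewhat fiddly reindexing behind the reflection formul\ae\ for $B_\ell$. The sign accounting in the final step is then routine — one only has to notice that the $(-1)^r$ produced by $(-N)^r$ in the denominator is exactly compensated by the $(-1)^r$ produced by reversing the cyclic ordering of the $x_i$'s, and that the transposition is needed precisely because $R_\sigma$ at the datum $(-N,-M)$ is \emph{not} conjugate, by a constant matrix, to $R_\sigma$ at $(N,M)$ with any argument.
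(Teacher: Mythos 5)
Your proof is correct, but it runs on a genuinely different mechanism than the paper's. The paper conjugates by the non-diagonal, $N$-dependent matrix $\bigl(\begin{smallmatrix}\sqrt c& N^{-1}\\ -\sqrt c N&1\end{smallmatrix}\bigr)$ and subtracts $\frac 12\1$ (after first checking that \eqref{multipoint} tolerates both operations), so that each factor $GR_\pm(Nx)G^{-1}-\frac 12\1$ becomes an \emph{odd} function of $N$; oddness is propagated through the coefficients $D_\ell,E_\ell,F_\ell$ by induction on the three-term recursions of Lemma \ref{lemmarecursion}, and the $(-1)^r$ from the $r$ odd factors cancels the $(-1)^r$ from the prefactor $N^{-r}$. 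You instead establish the exact reflection
\be
R_\sigma(x)\big|_{(N,M)\mapsto(-N,-M)}=G\,R_\sigma(-x)^{\mathsf T}G^{-1},\qquad G=\mathrm{diag}(1,NM),
\ee
and obtain the compensating $(-1)^r$ from reversing the $r$-cycles, which negates each of the $r$ factors of the cyclic denominator. I verified your reflection identities $A_\ell(-N,-M)=(-1)^{\ell+1}A_\ell(N,M)$ and $B_\ell(-N,-M)=(-1)^\ell B_\ell(N+1,M+1)$ against \eqref{ABlMN} (via $(-p)_j=(-1)^j(p-j+1)_j$ and the reindexings $j\mapsto\ell-1-j$, resp.\ $j\mapsto\ell-j$), and the matrix identity against \eqref{R+}--\eqref{R-}; both hold, for $R_-$ using additionally that $(\a-\ell)_{2\ell+1}$ is odd under $\a\mapsto-\a$. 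Your computational route is therefore airtight and arguably leaner: it needs neither the inductive oddness argument nor the auxiliary invariance of \eqref{multipoint} under $R\mapsto R+\gamma\1$, and you also treat the one-point cases explicitly, which the paper's proof (resting on \eqref{multipoint}, valid only for $r\geq 2$) leaves implicit. Two caveats. First, the ODE-uniqueness version of your symmetry claim is the only delicate step (uniqueness of the formal idempotent solution at $x=0$ and the role of $\a\notin\mathbb{Z}$ are merely sketched), so the direct verification from \eqref{R+}--\eqref{R-} should be regarded as the actual proof. Second, the paper's choice of gauge is not gratuitous: the same matrix and the estimates $D_\ell=\O(N^{-1})$, $E_\ell,F_\ell=\O(N)$ are reused immediately in the proof of Proposition \ref{prop13} to control the leading order in $N$, a service your diagonal $G$ does not render.
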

\begin{proof}
Using formula \eqref{multipoint} in Theorem \ref{thmmain} we have
\be
\label{scaledmultipoint}
C_{r_+,r_-}(N x_1,\dots,N x_r)=-\frac 1{N^r} \sum_{(i_1,\dots,i_r)\in\mathcal{C}_r}\frac{\tr\left(R_{\s_{i_1}}(N x_{i_1})\cdots R_{\s_{i_r}}(N x_{i_r})\right)-\delta_{r,2}}{(x_{i_1}-x_{i_2})\cdots (x_{i_{r-1}}-x_{i_r})(x_{i_r}-x_{i_1})}.
\ee
After the considerations exposed just before this lemma, it is clear that we are done if we find a matrix $G$ such that the matrices $G R_{\pm}(N x) G^{-1}-\frac 12\1$ are both odd in $N$. We claim that the matrix
\be
G=\begin{pmatrix} \sqrt{c} & N^{-1}\\ -\sqrt{c} N & 1 \end{pmatrix}
\ee
serves this purpose. The proof of this claim is a computation that we now perform; we have
\begin{align}\nonumber
G R_+(N x) G^{-1} &= \frac 12\begin{pmatrix}
1 & -N^{-1} \\ -N & 1
\end{pmatrix}
\\ 
\label{ultima1}
&\phantom{=}{}
+ \frac 12 \sum_{\ell\geq 0}\frac{1}{x^{\ell+1}}\begin{pmatrix} D_\ell(c,N) & N^{-2}\left(\ell E_\ell(c,N)+F_\ell(c,N)\right) \\ \ell E_\ell(c,N)-F_\ell(c,N) & -D_\ell(c,N) \end{pmatrix}
\end{align}
where
\begin{align}
D_\ell(c,N)&:=\frac {\sqrt{c}}{N^\ell}\left(B_\ell(N+1,cN+1)-B_\ell(N,cN)\right)
\\
E_\ell(c,N)&:=-\frac{2}{N^\ell} A_\ell(N,cN)
\\
F_\ell(c,N)&:=\frac{\sqrt{c}}{N^{\ell-1}}\left(B_\ell(N+1,cN+1)+B_\ell(N,cN)\right).
\end{align}
Therefore our claim is equivalent to the statement that $D_\ell,E_\ell,F_\ell$ are odd functions of $N$. This is easily seen from the linear recursions of Lemma \ref{lemmarecursion}. 
For the coefficients $E_\ell$ the initial datum of the recursion is
\be
E_0(c,N)=-2N,\qquad E_1(c,N)=-2cN,
\ee
and the recursion reads
\be
N^2 (\ell+2)E_{\ell+1}(c,N)=N^2(2\ell+1)(c+1)E_\ell(c,N)+(\ell-1)(\ell^2-(c-1)^2N^2)E_{\ell-1}(c,N)
\ee
and the claim follows by induction, as the initial datum is odd and the recursion is even in $N$. Similarly, for the coefficients $D_\ell,F_\ell$, the initial datum of the recursion is odd in $N$
\be
D_0(c,N)=0,\ \  D_1(c,N)=\frac{2\sqrt{c}}{N}, \qquad F_0(c,N)=2 \sqrt{c} N,\ \ F_1(c,N)=2 N \sqrt{c}(c+1),
\ee
and the recursion is even in $N$
\be
\label{RecursionD}
\text{\small $\!\!\!\!\!\!\!\!
N^2 (\ell+1)D_{\ell+1}(c,N)=N^2 (c+1) (2\ell+1)D_\ell(c,N)+(2 \ell+1)F_\ell +\ell (\ell^2-N^2(c-1)^2)D_{\ell-1}(c, N)
$}
, 
\ee
\be
\label{RecursionF}
\text{\small $
N^2 (\ell+1)F_{\ell+1}(c,N)=N^2 (c+1) (2\ell+1)F_\ell(c,N)+N^2 (2 \ell+1)D_\ell +\ell (\ell^2-N^2(c-1)^2)F_{\ell-1}(c, N)
$}
.
\ee
The same claim for $R_-(Nx)$ is proven exactly in the same way, as we have
\begin{align}
\nonumber
&\!\!\!\!\!\!\!\!\!G R_-(N x) G^{-1} =\frac 12\begin{pmatrix}
1 & -N^{-1} \\ -N & 1
\end{pmatrix} 
\\
\label{ultima2}
&\!\!\! + \frac 12\sum_{\ell\geq 0}\frac{N^{2\ell+1}x^\ell}{(\a-\ell)_{2\ell+1}}\begin{pmatrix} -D_\ell(c,N) & N^{-2}\left((\ell+1)E_\ell(c,N)-F_\ell(c,N)\right) \\ (\ell+1) E_\ell(c,N)+F_\ell(c,N) & D_\ell(c,N) \end{pmatrix}
\end{align}
and, since $\a=(c-1)N$,
\be
\label{denom}
\frac{N^{2\ell+1}}{(\a-\ell)_{2\ell+1}} = \frac{N^{2\ell}}{(c-1)} \prod_{j=1}^\ell \frac{1}{N^2(c-1)^2-j^2}
\ee
which is even in $N$.
\end{proof}

The second lemma regards integrality of the coefficients.

\begin{lemma}\label{lemmalargeN}
The functions $A_\ell(N,M)$ and $B_\ell(N,M)$ in \eqref{ABlMN} admit the alternative expressions
\begin{align}
\label{IntAlNM}
A_\ell(N,M)&= \sum_{\begin{smallmatrix}a,b\geq 0 \\ a+b\leq {\ell-1} \end{smallmatrix}}\frac{\ell!(\ell-1)!(N-a)_{a+1}(M-b)_{b+1}}{(a+1)!(b+1)!a!b!(\ell-1-a-b)!},&\ell\geq 1,
\\
\label{IntBlNM}
B_\ell(N,M)&= \sum_{\begin{smallmatrix}a,b\geq 0 \\ a+b\leq \ell \end{smallmatrix}}\frac{\ell!(\ell-1)!(N-a)_a(M-b)_b}{a!^2b!^2(\ell-a-b)!},&\ell\geq 0.
\end{align}
\end{lemma}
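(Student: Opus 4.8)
The plan is to derive \eqref{IntAlNM} and \eqref{IntBlNM} directly from the defining formula \eqref{ABlMN}, by recognizing the alternating sum over $j$ as an iterated finite difference of a product of binomial coefficients. First I would rewrite the rising factorials in \eqref{ABlMN} as binomials, using $(p)_m=m!\binom{p+m-1}{m}$ together with $\frac{1}{j!(\ell-j)!}=\frac{1}{\ell!}\binom{\ell}{j}$ and $\frac{1}{j!(\ell-1-j)!}=\frac{1}{(\ell-1)!}\binom{\ell-1}{j}$; a short bookkeeping shows that $B_\ell(N,M)=\ell!\sum_{j}(-1)^{j}\binom{\ell}{j}F(j)G(j)$ and $A_\ell(N,M)=\ell!\sum_{j}(-1)^{j}\binom{\ell-1}{j}F(j)G(j)$, where $F(j)=\binom{N+\ell-1-j}{\ell}$ and $G(j)=\binom{M+\ell-1-j}{\ell}$, so that in both cases the only remaining task is to evaluate an alternating binomial sum of a product.

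Next I would invoke the standard identity $\sum_{j=0}^{n}(-1)^{j}\binom{n}{j}f(j)=(-1)^{n}(\Delta^{n}f)(0)$, where $\Delta$ is the forward difference in $j$, together with the elementary fact (a consequence of Pascal's rule) that $\Delta_j^{k}\binom{p-j}{m}=(-1)^{k}\binom{p-k-j}{m-k}$. Applying the Leibniz rule for finite differences, $\Delta^{n}(FG)(x)=\sum_{k}\binom{n}{k}(\Delta^{k}F)(x)\,(\Delta^{n-k}G)(x+k)$, collapses each alternating sum into a single positive sum, in which one binomial has been differenced all the way down to $\binom{M-1}{k}$ (for $B_\ell$) or $\binom{M}{k+1}$ (for $A_\ell$) while the other survives as $\binom{N+\ell-1-k}{\ell-k}$. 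I would then expand this surviving binomial by the Vandermonde convolution $\binom{x+y}{m}=\sum_{a}\binom{x}{a}\binom{y}{m-a}$, splitting $N+\ell-1-k$ as $(N-1)+(\ell-k)$ in the $B_\ell$ case and as $N+(\ell-1-k)$ in the $A_\ell$ case so that $\binom{N-1}{a}$, respectively $\binom{N}{a+1}$, appears; this introduces the second summation index. Regrouping the remaining purely combinatorial binomials into a multinomial coefficient in $a,b$ and converting $\binom{N-1}{a},\binom{M-1}{b}$ (resp. $\binom{N}{a+1},\binom{M}{b+1}$) back into the rising factorials $(N-a)_{a},(M-b)_{b}$ (resp. $(N-a)_{a+1},(M-b)_{b+1}$) then yields the double sums of the statement.

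An alternative route, using only material stated above, is to denote by $\bar A_\ell(N,M)$ and $\bar B_\ell(N,M)$ the right-hand sides of \eqref{IntAlNM} and \eqref{IntBlNM}, verify that they satisfy the three-term recursions of Lemma \ref{lemmarecursion} (by substituting the double sums and reorganizing the resulting triple sums after a shift of indices), and check the initial data at small $\ell$; since in each recursion the coefficient of order $\ell+1$ is determined by the previous one(s), uniqueness then forces $\bar A_\ell=A_\ell$ and $\bar B_\ell=B_\ell$.

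The main obstacle in either case is purely combinatorial bookkeeping. In the first approach one must keep careful track of the shifts produced by the iterated differences and by the Vandermonde step, choose the splitting that makes the symmetric two-index form emerge, and treat the degenerate cases in small $\ell$ (and the fact that the upper summation limit is $j=\ell$ for $B_\ell$ but $j=\ell-1$ for $A_\ell$). In the second approach the recursion verification is a somewhat lengthy index manipulation. Neither step presents a conceptual difficulty once the finite-difference Leibniz identity is in place.
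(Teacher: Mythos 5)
Your first route is sound and does go through: rewriting the defining sums as $A_\ell=\ell!\sum_j(-1)^j\binom{\ell-1}{j}F(j)G(j)$ and $B_\ell=\ell!\sum_j(-1)^j\binom{\ell}{j}F(j)G(j)$ with $F(j)=\binom{N+\ell-1-j}{\ell}$, $G(j)=\binom{M+\ell-1-j}{\ell}$ is correct, and the discrete Leibniz rule followed by Chu--Vandermonde does collapse these to the claimed double sums (I checked the $A_\ell$ case lands exactly on \eqref{IntAlNM}). This is a genuinely different presentation from the paper's proof, which realizes the rising factorials as derivatives, $(\beta)_k=\partial_x^k x^{\beta+k-1}\big|_{x=1}$, so that e.g. $A_\ell=\partial_x^{\ell}\partial_y^{\ell}\left[x^Ny^M(xy-1)^{\ell-1}/\ell!\right]\big|_{x=y=1}$, then shifts $x=1+\xi$, $y=1+\eta$ and reads off the coefficient of $\xi^\ell\eta^\ell$ from the multinomial expansion of $(\xi\eta+\xi+\eta)^{\ell-1}$. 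Morally the two are the same hypergeometric manipulation: your $\Delta^{n}$ evaluated at $0$ plays the role of their $\partial^{\ell}$ evaluated at $1$, and your Leibniz-plus-Vandermonde step is their change of variables plus multinomial theorem. The generating-function version produces both indices $a,b$ in a single stroke and is shorter to write down; yours stays entirely within binomial-coefficient algebra and avoids introducing auxiliary variables. Your second suggestion (verify the three-term recursions of Lemma \ref{lemmarecursion} for the double sums and match initial data) is also logically valid, since those recursions with the given seeds determine $A_\ell$, $B_\ell$ uniquely, but it is the most laborious and least illuminating of the three options.

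One concrete warning: if you carry your computation for $B_\ell$ to the end you obtain the overall constant $(\ell!)^2$, i.e. $B_\ell=\sum_{a+b\leq\ell}\frac{(\ell!)^2(N-a)_a(M-b)_b}{a!^2 b!^2(\ell-a-b)!}$, not the $\ell!(\ell-1)!$ printed in \eqref{IntBlNM}. A check at $\ell=2$, $N=M=2$, where the definition and the recursion both give $B_2=14$ while the printed formula gives $7$, confirms that $(\ell!)^2$ is the correct constant and that \eqref{IntBlNM} as stated carries a typo (the two agree for $\ell\leq 1$, and \eqref{IntAlNM} is correct as printed). So when your bookkeeping refuses to land exactly on the printed formula for $B_\ell$, the discrepancy is in the statement, not in your method.
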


\begin{proof}
Using the identity
\be
(\beta)_k=\beta(\beta+1)\cdots(\beta+k-1)=\left.\frac{\pa^k}{\pa x^k}x^{\beta+k-1}\right|_{x=1}
\ee
we rewrite \eqref{ABlMN}, for $\ell\geq 1$, as
\be
A_\ell(N,M)=\frac 1{\ell!}\sum_{k=0}^{\ell-1}(-1)^k{{\ell-1}\choose k}(N-k)_{\ell}(M-k)_{\ell}=\left.\frac{\pa^{\ell}}{\pa x^{\ell}}\frac{\pa^{\ell}}{\pa y^{\ell}}\frac{x^Ny^{M}(xy-1)^{\ell-1}}{\ell!}
\right|_{x=1,y=1}
\ee
and then we rewrite this expression, by a change of variable $1+\xi=x,1+\eta=y$, as
\be
\left.\frac{\pa^{\ell}}{\pa \xi^{\ell}}\frac{\pa^{\ell}}{\pa\eta^{\ell}}\frac{(1+\xi)^N(1+\eta)^{M}(\xi\eta+\xi+\eta)^{\ell-1}}{\ell!}
\right|_{\xi=0,\eta=0}=\!\sum_{\begin{smallmatrix}a,b\geq 0 \\ a+b\leq {\ell-1} \end{smallmatrix}}\!
\frac{\ell!(\ell-1)!(N-a)_{a+1}(M-b)_{b+1}}{(a+1)!(b+1)!a!b!(\ell-1-a-b)!}
\, .
\ee
Similarly, for all $\ell\geq 0$ we have
\begin{align}
\nonumber
B_\ell(N,M)&=\frac 1{\ell!}\sum_{k=0}^{\ell}(-1)^k{\ell\choose j}(N-k)_{\ell}(M-k)_{\ell}=\left.\frac{\pa^{\ell}}{\pa x^{\ell}}\frac{\pa^{\ell}}{\pa y^{\ell}}\frac{x^{N-1}y^{M-1}(xy-1)^{\ell}}{\ell!}
\right|_{x=1,y=1}
\\
\nonumber
&=\left.\frac{\pa^{\ell}}{\pa \xi^{\ell}}\frac{\pa^{\ell}}{\pa\eta^{\ell}}\frac{(1+\xi)^{N-1}(1+\eta)^{M-1}(\xi\eta+\xi+\eta)^{\ell}}{\ell!}
\right|_{\xi=0,\eta=0}
\\
&=\sum_{\begin{smallmatrix}a,b\geq 0 \\ a+b\leq {\ell-1} \end{smallmatrix}}\frac{\ell!(\ell-1)!(N-a)_a(M-b)_b}{a!^2b!^2(\ell-a-b)!}
\end{align}
and the proof is complete.
\end{proof}

The expression \eqref{IntAlNM} is also derived, in a different way, in \cite{HT2003}.

It can be checked that the coefficients $\frac{\ell!(\ell-1)!}{(a+1)!(b+1)!a!b!(\ell-1-a-b)!}$ in \eqref{IntAlNM} are integers within the range of summation $a,b\geq 0$, $a+b\leq \ell-1$; indeed if $a+b\leq \ell-2$ one can write such coefficient as
\be
\frac{\ell!(\ell-1)!}{(a+1)!(b+1)!a!b!(\ell-1-a-b)!}={\ell\choose{a+1}}{{\ell-1}\choose b}{{\ell-b-1}\choose a} (b+2)_{\ell-a-b-2}
\ee
which is manifestly integer, while if $a+b=\ell-1$ the same coefficient is written as
\be
\frac{\ell!(\ell-1)!}{(a+1)!(b+1)!a!b!(\ell-1-a-b)!}=\frac 1\ell {\ell\choose a}{\ell\choose {a+1}}
\ee
which  is also manifestly integer since $a\leq\ell-1$.
Similarly, the coefficients $\frac{\ell!(\ell-1)!}{a!^2b!^2(\ell-a-b)!}$ in \eqref{IntBlNM} are integers within the range of summation $a,b\geq 0$, $a+b\leq \ell$.

\begin{proof}[{\bf Proof of Proposition \ref{prop13}}]
Lemma \ref{lemmalargeN} implies that $A_\ell(N, c N)$ and $B_\ell(N, c N)$ are polynomials in $N$ and $c$ with integer coefficients. Then the dependence on $N^2$ follows from Lemma \ref{Parity} and the expansion of \eqref{denom} as series in $N$ and $(c-1)$ with integer coefficients as
\be
\frac 1{N(c-1)} \prod_{j=1}^\ell \frac 1{N^2(c-1)^2-j^2} 	
= \frac{1}{(N(c-1))^{2\ell+1}} \sum_{k_1, \dots, k_\ell \ge 0}\frac{1^{2 k_1} 2^{2 k_2} \cdot \dots \cdot \ell^{2 k_\ell}}{(N(c-1))^{2k_1 + \dots + 2k_\ell}},
\ee
provided $\ell<N(c-1)$.

Finally we note, e.g. from the recursions, that $D_\ell\sim \O(N^{-1}),E_\ell,F_\ell=\O(N)$, as $N\to\infty$; hence from \eqref{ultima1} and \eqref{ultima2} we see that $GR_\pm(Nx)G^{-1}={\rm diag}(1,N)\O(1){\rm diag}(1,N^{-1})$ where $\O(1)$ refers to the behavior as $N\to\infty$. We conclude that \eqref{generatingmixedrescaled} is $\O(1)$ as $N\to\infty$, and has the same parity in $N\mapsto -N$ as $r$ (Lemma \ref{Parity}), completing the proof.
\end{proof}

\begin{example}\label{GenusZeroEx}
Here we  obtain  the formul\ae\ of Theorem \ref{thmmain} in genus zero for one- and two-point correlators.
In these cases, formul\ae\ of the same kind   have already appeared in the literature \cite{FRW2010,Z2019,CV2015}.
In the regime $\a=N(c-1)$ with $N\to\infty$ we have
\begin{align}
\lim_{N\to\infty}\frac{A_\ell(N,cN)}{N^{\ell+1}}&=\frac 1\ell\sum_{b=0}^{\ell-1}{{\ell}\choose {b+1}}{{\ell}\choose {b}}c^{b+1}, 
\\
\lim_{N\to\infty}\frac{B_\ell(N,cN)}{N^{\ell}}&=\sum_{b=0}^{\ell}{\ell\choose b}^2c^b.
\end{align}
The above relations follow from Lemma \ref{lemmalargeN} and the trivial estimate ${N\choose k}\sim\frac{N^k}{k!}$.
In particular due to \eqref{onepoint+-}, in the regime $N\to\infty$ with $\a=N(c-1)$  we have
\be
\label{planar}
\lim_{N\to\infty}\frac{\left\langle\tr X^\ell\right\rangle}{N^{\ell+1}}=\sum_{s=1}^{\ell}\mathcal{N}_{\ell,s}c^{s}
\ee
where
\be
\mathcal{N}_{\ell,s}:=\frac 1\ell{\ell\choose s}{\ell\choose s-1},\qquad \ell\geq 1,\ s=1,\dots,\ell
\ee
are the \emph{Narayana numbers}. Formula \eqref{planar} agrees with Wigner's computation of positive moments of the Laguerre equilibrium measure $
\rho(x)=\frac {\sqrt{(x_+-x)(x-x_-)}}{2\pi c x}1_{x\in (x_-,x_+)}$ where $ x_\pm:=(1\pm\sqrt{c})^2$, see \cite{F2010}.
From the one-point function we obtain the weighted strictly monotone and weakly monotone double Hurwitz numbers of genus zero with partition $\mu=(k)$ and $\nu$ of length $s$ as
\begin{align}
&H_{g=0}^>((k);s)=\frac 1{(k-1)!}\sum_{\nu\text{ of length }s}h_{g=0}^{>}((k);\nu)=\mathcal{N}_{k,s}=\frac 1k { {k}\choose  s-1} {{k}\choose  s},\\
\label{GenusZeroPlanar}
&H_{g=0}^\geq((k);s)=\frac 1{(k-1)!}\sum_{\nu\text{ of length }s}h_{g=0}^{\geq}((k);\nu)={{k-1}\choose  k-s} \frac{(s+1)_{k-2}}{(k-1)!}.
\end{align}
Similarly, for all two-point generating functions, we obtain the \emph{planar limit} $g=0$ as
\begin{align}
\lim_{N\to\infty} N^2C_{2,0}(Nx_1,Nx_2)&=\lim_{N\to\infty} N^2C_{0,2}(Nx_1,Nx_2)=\frac{\phi(x_1,x_2)-\sqrt{\phi(x_1,x_1)\phi(x_2,x_2)}}{2\sqrt{\phi(x_1,x_1)\phi(x_2,x_2)}(x_1-x_2)^2}
\\
\lim_{N\to\infty} N^2C_{1,1}(Nx_1,Nx_2)&=-\frac{\phi(x_1,x_2)+\sqrt{\phi(x_1,x_1)\phi(x_2,x_2)}}{2\sqrt{\phi(x_1,x_1)\phi(x_2,x_2)}(x_1-x_2)^2}
\end{align}
where
\be
\phi(x_1,x_2):=c^2-c(2+x_1+x_2)+(x_1-1)(x_2-1).
\ee
The two-point planar limit is strictly related  \cite{EO2007} to the so called canonical symmetric bi-differential (called also Bergman kernel)  associated to the spectral curve
$x^2y^2=(x-x_+)(x-x_-)=c^2-2c(x+1)+(x-1)^2$.
\end{example}

\section{Hodge-LUE Correspondence}\label{secLUE}

\subsection{Factorization of matrix models with even potential}

For the purposes of the present section, let us  introduce two sequences of monic orthogonal polynomials;
$p^{\mathsf{even}}_n(x)=x^n+\dots$ satisfying
\be
\int_{-\infty}^{+\infty}p^{\mathsf{even}}_n(x)p^{\mathsf{even}}_m(x)\e^{-V(x^2)}\d x=h_{n}^{\mathsf{even}}\delta_{n,m}
\ee
and, for $\Re \a>-1$, $p_n^{(\a)}(x)=x^n+\dots$ satisfying
\be
\int_0^{+\infty}p^{(\a)}_n(x)p_m^{(\a)}(x)x^\a\e^{-V(x)}\d x= h_{n}^{(\a)}\delta_{n,m},
\ee
where $V(x)$ is an arbitrary potential for which the polynomials are well defined.
The following lemma is elementary and the proof can be found in \cite{CGM2015}.

\begin{lemma}
\label{lemmaelementary}
For all $n\geq 0$ we have
\be
p^{\mathsf{even}}_{2n}(x)=p_n^{\left(-\frac 12\right)}(x^2),\qquad p^{\mathsf{even}}_{2n+1}(x)=xp_n^{\left(\frac 12\right)}(x^2)
\ee
and
\be\label{rel_hn}
h_{2n}^{\mathsf{even}}=h_n^{\left(-\frac 12\right)},\quad h_{2n+1}^{\mathsf{even}}=h_n^{\left(\frac 12\right)}.
\ee
\end{lemma}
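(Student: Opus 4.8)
The plan is to exploit the evenness of the weight $\e^{-V(x^2)}$ on $\mathbb{R}$. First I would record the elementary fact that, since this weight is invariant under $x\mapsto -x$, the associated monic orthogonal polynomials have definite parity: $p^{\mathsf{even}}_{2n}$ is an even polynomial and $p^{\mathsf{even}}_{2n+1}$ an odd one. This follows by induction from the Gram--Schmidt construction, because against a symmetric weight the pairing $\int_{\mathbb{R}} x^{m} p^{\mathsf{even}}_j(x)\,\e^{-V(x^2)}\,\d x$ vanishes unless $m+j$ is even. Consequently one can write $p^{\mathsf{even}}_{2n}(x)=q_n(x^2)$ and $p^{\mathsf{even}}_{2n+1}(x)=x\,r_n(x^2)$ with $q_n,r_n$ monic of degree $n$, and the claim reduces to identifying $q_n=p_n^{(-1/2)}$ and $r_n=p_n^{(1/2)}$.

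Next I would verify orthogonality directly. In the even case, for $0\le m\le n-1$ I would fold the two half-lines and substitute $u=x^2$ (so that $\d x=\d u/(2\sqrt u)$) to rewrite $\int_{\mathbb{R}} q_n(x^2)\,x^{2m}\,\e^{-V(x^2)}\,\d x$ as $\int_0^{+\infty} q_n(u)\,u^m\,u^{-1/2}\,\e^{-V(u)}\,\d u$; since this vanishes for all such $m$ and $q_n$ is monic of degree $n$, uniqueness of monic orthogonal polynomials forces $q_n=p_n^{(-1/2)}$. In the odd case the same substitution applied to $\int_{\mathbb{R}} x\,r_n(x^2)\,x^{2m+1}\,\e^{-V(x^2)}\,\d x$ produces $\int_0^{+\infty} r_n(u)\,u^m\,u^{1/2}\,\e^{-V(u)}\,\d u$, whence $r_n=p_n^{(1/2)}$; orthogonality of $p^{\mathsf{even}}_{2n}$ (resp.\ $p^{\mathsf{even}}_{2n+1}$) to the remaining monomials of the opposite parity is automatic from parity.

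Finally, the two norm identities would follow from exactly the same change of variables applied to the squared $L^2$ norms, giving $h^{\mathsf{even}}_{2n}=\int_{\mathbb{R}}\bigl(p_n^{(-1/2)}(x^2)\bigr)^2\e^{-V(x^2)}\,\d x=\int_0^{+\infty}\bigl(p_n^{(-1/2)}(u)\bigr)^2\,u^{-1/2}\,\e^{-V(u)}\,\d u=h_n^{(-1/2)}$ and, inserting the extra factor $x^2=u$, $h^{\mathsf{even}}_{2n+1}=\int_0^{+\infty}\bigl(p_n^{(1/2)}(u)\bigr)^2\,u^{1/2}\,\e^{-V(u)}\,\d u=h_n^{(1/2)}$. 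I do not expect any real obstacle: the only points that need care are the justification of the definite parity of the $p^{\mathsf{even}}_n$ and the bookkeeping of the Jacobian $\d x=\d u/(2\sqrt u)$ together with the factor $2$ from folding $\mathbb{R}$ onto $(0,+\infty)$ --- precisely the mechanism that produces the half-integer shifts $\a=\pm\tfrac12$. Existence and uniqueness of all monic orthogonal polynomials involved are guaranteed by positivity of the respective weights.
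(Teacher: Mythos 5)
Your proof is correct and is precisely the standard argument that the paper itself does not reproduce but defers to the cited reference \cite{CGM2015}: definite parity of the $p^{\mathsf{even}}_n$ from the symmetry of the weight, followed by the substitution $u=x^2$, in which the factor $2$ from folding $\mathbb{R}$ onto $(0,+\infty)$ cancels against the Jacobian $\d x=\d u/(2\sqrt u)$ to produce the weights $u^{\mp 1/2}\e^{-V(u)}$. All the steps, including the norm identities, check out.
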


Next we recall the relation between matrix integrals and the norming constants of the  above  orthogonal polynomials 
\begin{align}
\frac{1}{\Vol(N)}\int_{\H_N}\exp\tr(-V(X^2))\d X&= h^\mathsf{even}_0h^\mathsf{even}_1\cdots h^{\mathsf{even}}_{N-1},
\\
\frac{1}{\Vol(N)}\int_{\H_N^+}{\det}^\a X\exp\tr(-V(X))\d X&= h_0^{(\a)}h_1^{(\a)}\cdots h_{N-1}^{(\a)},
\end{align}
where $\Vol(N)$ is defined in \eqref{factorization2}.

Using the above relations and \eqref{rel_hn} in the case $V(x)=\frac x2 - \sum_{k\geq 1}s_k x^k$, we obtain the following identity  between the GUE partition function $Z_{2N}^{\mathsf{even}}(\ss)$ in \eqref{evenGUE} and the Laguerre partition function $Z_N\left(\pm\frac 12; \t_+\right)$ in \eqref{Z} with $\t_-=\0$
\be
\label{factorization1}
\frac{Z_{2N}^{\mathsf{even}}(\ss)}{Z_{2N}^{\mathsf{even}}(\0)}=
\frac{Z_N\left(-\frac 12; \t_+\right)}{Z_N\left(-\frac 12;\0\right)}
\frac{Z_N\left(\frac 12;\t_+\right)}{Z_N\left(\frac 12 ; \0\right)}
,\qquad t_k:=2^k s_k,
\ee
where  $Z_N^{\mathsf{even}}(\0)$ is given in \eqref{normalizingGUE} and $Z_N\left(\pm\frac 12 ; \0\right)$  in  \eqref{normalizinglaguerre}. There is a similar, slightly more involved, factorization for the matrix model $Z_{2N+1}^{\mathsf{even}}$, but we do not need its formulation for our present purposes.

\subsection{Formal matrix models and mGUE partition function}\label{secformal}

In this section we review the definition of mGUE partition function. First, the logarithm of the even GUE partition function can be considered as a formal Taylor expansion for small $s_k$ as
\be
\label{evenGUEformal}
\log Z_N^{\mathsf{even}}(\ss):=\log Z_N^{\mathsf{even}}(\0)+\sum_{r\geq 1}\sum_{k_1,\dots,k_r\geq 1}\frac{s_{k_1}\cdots s_{k_r}}{r!}\left\langle\tr X^{k_1}\cdots\tr X^{k_r}\right\rangle_{\mathsf{c}}^{\mathsf{even}}
\ee
where the connected even GUE correlators are introduced as in \eqref{connectedmoments}
\be
\left\langle\tr X^{k_1}\cdots\tr X^{k_r}\right\rangle_{\mathsf{c}}^{\mathsf{even}}:=\left.\frac{\pa^r \log Z_N^{\mathsf{even}}(\ss)}{\pa s_{k_1}\cdots\pa s_{k_r}}\right|_{\ss=\0}
\ee
and the normalizing constant $Z_N^{\mathsf{even}}(\0)$ is 
\be
\label{normalizingGUE}
Z_N^{\mathsf{even}}(\0)=\sqrt{2^N\pi^{N^2}}.
\ee
The infinite sum in \eqref{evenGUEformal} can be given a rigorous formal meaning in the algebra $\C[N,\a][[\ss]]$; introducing the grading $\deg s_k:=k$, the latter algebra is obtained taking the inductive limit $K\to\infty$ from the algebras of polynomials in $\ss$ of degree $<K$, with coefficients in $\C[N,\a]$. Equivalently, this grading can be encoded, up to an inessential shift, by a (small) variable $\epsilon$ via the transformation $s_k\mapsto \epsilon^{k-1} s_k$, which is the same as considering the matrix model $\int_{\H_N}\exp\left[-\frac 1\epsilon\left(\frac {X^2}2-\sum_{k\geq 1}s_k X^{2k}\right)\right]\d X$. For simplicity we have preferred to avoid the explicit $\epsilon$-dependence, even though we shall restore it for the statement of the Hodge-GUE/LUE correspondence (Theorem \ref{thmhodgegue}, Corollary \ref{corollary1}).

It must be stressed that \eqref{evenGUEformal} makes sense for any complex $N$, and not just for positive integers as it would be required by the genuine matrix integral interpretation; indeed the correlators are polynomials in $N$.

For the purposes of this section it is convenient to apply the same arguments to the Laguerre partition function (with $\t_-=\0$) and similarly identify the latter with the formal series
\be
\label{formalLUE}
\log Z_N(\a;\t_+)=\log Z_N(\a;\0)+\sum_{r\geq 1}\sum_{k_1,\dots,k_r\geq 1}\frac{t_{k_1}\cdots t_{k_r}}{r!}\left\langle\tr X^{k_1}\cdots\tr X^{k_r}\right\rangle_{\mathsf{c}}
\ee
where $Z_N(\a;\0)$ is given in \eqref{normalizinglaguerre} and the correlators are as in \eqref{connectedmoments}; using the last expression provided in \eqref{normalizinglaguerre} and the fact that the correlators are polynomials in $N,\a$ the expression \eqref{formalLUE} makes sense  also for  $N$  complex. This remark is crucial for a correct understanding of formul\ae\ \eqref{symmetryZ} and \eqref{symmetry12} below.

Let us finally recall from the introduction and \cite{DLYZ2016} that the mGUE partition function is introduced by \eqref{factorization2}, the left side of which being interpreted formally as in \eqref{evenGUEformal}. Of course in the identification of Theorem \ref{thm2}, the right side must be interpreted formally as in \eqref{formalLUE}.

\subsection{Proof of Theorem \ref{thm2}}\label{secproofthm2}

The proof of Theorem \ref{thm2} relies on two main ingredients; on one side the factorization property \eqref{factorization1}, and a symmetry property of the formal positive LUE partition function \eqref{formalLUE}, which we now describe.

\begin{lemma}
\label{lemmasymmetry}
The LUE connected correlator $\langle\tr X^{k_1}\cdots\tr X^{k_r}\rangle_\mathsf{c}$ with $k_1,\dots, k_r>0$ is a polynomial in $N,\a$, and it is invariant under the involution $(N,\a)\mapsto (N+\a,-\a)$.
\end{lemma}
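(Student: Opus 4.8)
The plan rests on a single observation: expressed in terms of the Wishart parameter $M:=N+\a$, the involution $(N,\a)\mapsto(N+\a,-\a)$ is nothing but the transposition $N\leftrightarrow M$, since $N\mapsto N+\a=M$ and $M\mapsto(N+\a)+(-\a)=N$. Thus the lemma is equivalent to the assertion that, once the positive LUE connected correlators are written as polynomials in the two variables $N$ and $M$, they are symmetric under exchanging $N$ and $M$.

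First I would record the polynomiality. For $r=1$ this is \eqref{onepoint+-} combined with Lemma~\ref{lemmalargeN}, which exhibit $\langle\tr X^k\rangle=A_k(N,N+\a)$ as a polynomial in $N,M$ with integer coefficients. For $r\geq 2$, formula \eqref{multipoint} with all signs $\s_i=+$ writes $C_{r_+,0}(x_1,\dots,x_r)$ as a rational function of $x_1,\dots,x_r$ --- regular on the diagonals by Remark~\ref{Regularity} --- whose dependence on $N$ and $M$ enters only through the entries of the matrix series $R_+(x)$; these entries are polynomials in $N,M$ (again by \eqref{ABlMN} and Lemma~\ref{lemmalargeN}), so expanding at $x_i=\infty$ and extracting the coefficient of $x_1^{-k_1-1}\cdots x_r^{-k_r-1}$ presents each correlator $\langle\tr X^{k_1}\cdots\tr X^{k_r}\rangle_{\mathsf{c}}$ as a polynomial in $N,M$.

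The symmetry is then immediate by inspection of \eqref{R+}: the four entries of $R_+(x)$ are built out of $\ell A_\ell(N,M)$, $B_\ell(N+1,M+1)$ and $-NM\,B_\ell(N,M)$, and both $A_\ell$ and $B_\ell$ are, directly from their definitions in \eqref{ABlMN}, symmetric functions of their two arguments. Hence $R_+(x)$ is \emph{literally} invariant under $N\leftrightarrow M$ --- no conjugation is needed --- and therefore so is every expression on the right-hand side of the formula for $C_{1,0}$ and of \eqref{2pointMain}--\eqref{multipoint} for $r\geq 2$. Since extracting the coefficients of the $x_j$ commutes with the substitution $N\leftrightarrow M$, every positive connected correlator is a symmetric function of $N$ and $M$, which is exactly the claim of the lemma.

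There is essentially no obstacle here: the only point that deserves a line of care is that polynomiality holds jointly and symmetrically in $(N,M)$, rather than merely separately in $N$ and $\a$, and this is precisely what Lemma~\ref{lemmalargeN} secures. I would also note in passing that the symmetry admits a more conceptual, if less self-contained, justification: for integer $M\geq N$ one has the equality in law $X=WW^\dagger/N$ with $W$ an $N\times M$ matrix of independent identically distributed Gaussian entries, and $WW^\dagger$ (of size $N$) and $W^\dagger W$ (of size $M$) share the same nonzero eigenvalues, so for $k>0$ the random variable $\tr X^k$ has the same law whether its parameters are $(N,M)$ or $(M,N)$; as the correlators are polynomials in $(N,M)$ and coincide on the infinite set $\{M\geq N\geq 1\}$, they coincide identically.
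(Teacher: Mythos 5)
Your proposal is correct and follows essentially the same route as the paper: rewrite the involution as $N\leftrightarrow M$ with $M=N+\a$ and observe that the coefficients entering $R_+$ in \eqref{R+} are symmetric polynomials in $(N,M)$, so the correlators produced by Theorem~\ref{thmmain} inherit the symmetry. The only slip is the blanket claim that ``$A_\ell$ and $B_\ell$ are symmetric in their two arguments'': $A_0(N,M)=N$ is not, but this is harmless for exactly the reason your own phrasing already contains --- $R_+$ involves only the combination $\ell A_\ell(N,M)$, which kills the $\ell=0$ term --- and the paper makes this caveat explicit.
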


\begin{proof}
It follows directly from Theorem \ref{thmmain}, as the coefficients of $R_+$, defined in \eqref{R+}, are polynomials in $N,\a$ which are manifestly symmetric under the aforementioned transformation.
Indeed from \eqref{ABlMN} we see that all the coefficients $A_\ell(N,M),B_\ell(N,M)$ are symmetric in $N,M$ but $A_0(N,M)=N$; however $R_+$ only contains the combination $\ell A_\ell(N,M)$, which is always symmetric in $N,M$.
\end{proof}

\begin{remark}
\label{remcombinatorial}
As already pointed out in the introduction, the symmetry of the above lemma is equivalent, by \eqref{hurwitzstrict}, to the symmetry property \eqref{symmetryhurwitz} of Hurwitz numbers. An alternative, purely combinatorial derivation of the symmetry in the latter formulation can be given as follows. In the group algebra of the group of permutations of $\{1,\dots,d\}$ (consisting of formal $\C$-linear combinations of permutations of $\{1,\dots,d\}$) we consider the distinguished elements
\begin{itemize}
\item $\mathcal{C}_\lambda$ for any partition $\lambda$ of $d$, which is the sum of all permutations of cycle type $\lambda$, and
\item $\mathcal{J}_m:=(1,m)+\dots+(m-1,m)$ (\emph{Jucys-Murphy elements}) for $m=2,\dots,d$.
\end{itemize}
Such elements commute among themselves and are related by
\be
(1+\xi\mathcal{J}_2)\cdots(1+\xi\mathcal{J}_d)=\sum_{|\nu|=d}\xi^{d-\ell(\nu)}\mathcal{C}_\nu.
\ee
From this relation we deduce that for every partition $\mu$ of $d$ of length $\ell$ we have
\be
\label{identitygroupalgebra}
\mathcal{C}_\mu\sum_{|\nu|=d}y^{d-\ell(\nu)} \mathcal{C}_\nu(1+z\mathcal{J}_2)\cdots (1+z\mathcal{J}_d)=
\mathcal{C}_\mu\sum_{|\lambda|=d}z^{d-\ell(\lambda)}(1+y\mathcal{J}_2)\cdots(1+y\mathcal{J}_d)\mathcal{C}_\lambda.
\ee
From the definition of Hurwitz numbers, recalled in Section \ref{paragraphhurwitz}, the coefficient in front of $\mathcal{C}_{[1^d]}y^{d-s}z^{2g-2+\ell+s}$ on the left side of \eqref{identitygroupalgebra} is $H^>_g(\mu;s)$ (up to the normalization factor $\frac{z_\mu}{d!}$); the coefficient of the same element $\mathcal{C}_{[1^d]}y^{d-s}z^{2g-2+\ell+s}$ on the right side of \eqref{identitygroupalgebra} is $H^>_g(\mu;2-2g+d-\ell-s)$ (up to the same normalization factor $\frac{z_\mu}{d!}$), yielding \eqref{symmetryhurwitz}.
\end{remark}

Let us restate Lemma \ref{lemmasymmetry}, in view of the formal expansion \eqref{formalLUE}, as the following identity
\be
\label{symmetryZ}
\frac{Z_N(\a;\t_+)}{Z_N(\a;\0)}=\frac{Z_{N+\a}(-\a;\t_+)}{Z_{N+\a}(-\a;\0)}.
\ee
The following special case ($\a=\frac 12$) of \eqref{symmetryZ}
\be
\label{symmetry12}
Z_{N+\frac 12}\left(-\frac 12;\mathbf{t}_+\right)=\frac{\pi^{\frac 38+\frac N2}\barnes(N+1)}{\barnes\left(N+\frac 32\right)}Z_N\left(\frac 12;\mathbf{t}_+\right)
\ee
plays a key role in the proof of Theorem \ref{thm2}, which we are now ready to give.

\begin{proof}[{\bf Proof of Theorem \ref{thm2}}]
We use the uniqueness of the decomposition \eqref{factorization2} which defines the mGUE partition function; rewriting it under the substitution $N\mapsto 2N$ we have
\be
\label{factorizationA}
\frac{Z_{2N}^{\mathsf{even}}(\ss)}{(2\pi)^{2N}\Vol(2N)}=\wt Z_{2N-\frac 12}(\ss)\wt Z_{2N+\frac 12}(\ss).
\ee
On the other hand, from \eqref{factorization1} we have
\be
\label{factorizationB}
Z_{2N}^{\mathsf{even}}(\ss)=D_N Z_{N}\left(-\frac 12;\t_+\right)Z_{N}\left(\frac 12;\t_+\right)
\ee
where here and below we are identifying $t_k=2^ks_k$.  The proportionality constant $D_N$ is explicitly evaluated from \eqref{normalizinglaguerre} and \eqref{normalizingGUE} as
\be
\label{proportionalityA}
D_N=\frac{Z_{2N}^{\mathsf{even}}(\0)}{Z_N\left(-\frac 12;\0\right)Z_N\left(\frac 12;\0\right)}=\frac{2^N \pi^{N^2 + N + \frac 12} \barnes(\frac 12)^2 }{ \barnes(N + \frac 12)\barnes(N + \frac 32) }.
\ee
It is then enough to show that the two factorizations \eqref{factorizationA} and \eqref{factorizationB} are consistent once we identify  $\wt Z_{2N-\frac 12}(\ss)=C_N Z_{N}\left(-\frac 12;\t_+\right)$   with $C_N$ a constant depending on $N$ only. Such consistency follows from the chain of equalities
\begin{align}
\frac{Z_{2N}^{\mathsf{even}}(\ss)}{(2\pi)^{2N}\Vol(2N)}&=
\wt Z_{2N-\frac 12}(\ss)\wt Z_{2N+\frac 12}(\ss)
\nonumber
\\
&=\wt Z_{2N-\frac 12}(\ss)\wt Z_{2\left(N+\frac 12\right)-\frac 12}(\ss)
\nonumber
\\
&=C_N Z_{N}\left(-\frac 12;\t_+\right)C_{N+\frac 12}Z_{N+\frac 12}\left(-\frac 12;\t_+\right)
\nonumber
\\
&=C_NC_{N+\frac 12}\frac{\pi^{\frac 38+\frac N2}\barnes(N+1)}{\barnes\left(N+\frac 32\right)}Z_{N}\left(-\frac 12;\t_+\right)Z_{N}\left(\frac 12;\t_+\right)
\label{proportionalityB}
\end{align}
where we have used the symmetry property \eqref{symmetry12}. This shows that the two factorizations \eqref{factorizationA} and \eqref{factorizationB} are consistent, provided we also identify the proportionality constants \eqref{proportionalityA} and \eqref{proportionalityB}
\be
\label{eq:eqforCN}
C_NC_{N+\frac 12}\frac{\pi^{\frac 38+\frac N2}\barnes(N+1)}{\barnes\left(N+\frac 32\right)}=\frac{D_N}{(2\pi)^{2N}\Vol(2N)}=
4^{N(N-1)}\pi^{-N(N+1)} \barnes(N+1)^2,
\ee
where in the last step we use the duplication formula for the Barnes G-function in the form
\be
\barnes(2N+1)=\frac{2^{N (2 N-1)} \pi ^{-N-\frac{1}{2}}}{\barnes\left(\frac{1}{2}\right)^2}\barnes\left(N+\frac 12\right)\barnes(N+1)^2\barnes\left(N+\frac 32\right).
\ee
Equation \eqref{eq:eqforCN} fixes the constant to be
\be
C_N=2^{N^2-\frac{3}{2}N+\frac{1}{4}} \pi ^{-\frac{N(N+1)}{2}} \barnes(N+1),
\ee
as stated in \eqref{constantmGUEevenvsLaguerre}.
\end{proof}

We conclude this section with a couple of remarks.

First, the identification of the mGUE and LUE partition functions is manifest also from the \emph{Virasoro constraints} of the two models. Indeed, Virasoro constraints for the modified GUE partition function have been derived in \cite{DLYZ2016}, directly from those of the GUE partition function, and they assume the form $\wt {\mathcal{L}}_n\wt Z_N(\ss)=0$, for $n\geq 0$, where
\be
\label{virasoromGUE}
\wt {\mathcal{L}}_n:=\begin{cases}
\sum_{k\geq 1}k \left(s_k-\frac 12\delta_{k,1}\right)\frac{\pa}{\pa s_k}+\frac{N^2}4-\frac 1{16}, & n=0,
\\
\sum_{k=1}^{n-1}\frac{\pa^2}{\pa s_k\pa s_{n-k}}+\sum_{k\geq 1}k \left(s_k-\frac 12\delta_{k,1}\right)\frac{\pa}{\pa s_{k+n}}+N\frac{\pa}{\pa s_n}, & n\geq 1.
\end{cases}
\ee
On the other hand, it is well known \cite{HH1993,AvM1995} that the LUE partition function with only positive couplings $\t_+$ satisfies the Virasoro constraints $\mathcal{L}_n^{(\a)} Z_N(\a;\t_+)=0$, for $n\geq 0$, where
\be
\label{virasoroLUE}
\mathcal{L}_n^{(\a)}:=\begin{cases}
\sum_{k\geq 1}k \left(t_k-\delta_{k,1}\right)\frac{\pa}{\pa t_k}+N\left(N+\a\right), & n=0,
\\
\sum_{k=1}^{n-1}\frac{\pa^2}{\pa t_k\pa t_{n-k}}+\sum_{k\geq 1}k \left(t_k-\delta_{k,1}\right)\frac{\pa}{\pa t_{k+n}}+(2N+\a)\frac{\pa}{\pa t_n},
& n\geq 1.
\end{cases}
\ee
The Virasoro constraints $\wt{\mathcal{L}}_n=\wt {\mathcal{L}}_n(N,\ss)$ in \eqref{virasoromGUE} and $=\mathcal{L}_n^{(\a)}=\mathcal{L}_n^{(\a)}(N,\t_+)$ in \eqref{virasoroLUE} satisfy
\be
2^n\wt{\mathcal{L}}_n\left(2N-\frac 12,\ss\right)=\mathcal{L}_n^{\left(-\frac 12\right)}(N,\t_+)
\ee 
under the identification $t_k=2^ks_k$, in agreement with Theorem \ref{thm2}.

Second, in \cite{DY2019} formul\ae\ of similar nature as those of Theorem \ref{thmmain} are derived for the modified GUE partition function. It can be checked that such formul\ae\ match with those of Theorem \ref{thmmain} restricted to $\a=-\frac 12$ under the identifications of times made explicit in the statement of Theorem \ref{thm2}.

\subsection{Proof of Corollary \ref{corollary2}}\label{secproofcorollaries}

From Theorem \ref{thm2} and the Hodge-GUE correspondence of \cite{DLYZ2016}, which we now recall, we are able to deduce a \emph{Hodge-LUE correspondence}; to state this result (Corollary \ref{corollary1}) let us introduce the generating function
\be
\mathcal{H}(\mathbf{p};\epsilon)=\sum_{g\geq 0}\epsilon^{2g-2}\sum_{n\geq 0}\sum_{k_1,\dots,k_n\geq 0}\frac{p_{k_1}\cdots p_{k_n}}{n!}\int_{\Mgn}\L^2(-1)\L\left(\frac 12\right)\prod_{i=1}^n\psi_i^{k_i},
\ee
for \emph{special cubic Hodge integrals} (with the standard notations recalled before the statement of Corollary \ref{corollary2}); here $\mathbf{p}=(p_0,p_1,\dots)$.

\begin{theorem}[Hodge-GUE correspondence \cite{DLYZ2016}]
\label{thmhodgegue}
Introduce the formal series
\be
\label{A(x,s)}
A(\lambda,\ss) := \frac 14 \sum_{j_1,j_2\geq 1}\frac{j_1j_2}{j_1+j_2}{{2j_1}\choose j_1}{{2j_2}\choose j_2}s_{j_1}s_{j_2} +  \frac 12\sum_{j\geq 1}\left(\lambda-\frac j{j+1}\right){{2j}\choose j}s_j,
\ee
and a transformation of an infinite vector of times $\ss=(s_1,s_2,\dots)\mapsto\mathbf{p}=(p_0,p_1,\dots)$ depending on a parameter $\lambda$ as
\be
\label{changep}
p_k(\lambda,\ss):=\sum_{j\geq 1}j^{k+1}{{2j}\choose j}s_j+\delta_{k,1}+\lambda\delta_{k,0}-1,\qquad k\geq 0.
\ee
Then we have
\be
\label{hodgegueformula}
\mathcal{H}\left(\mathbf{p}\left(\lambda,\ss\right);\sqrt{2}\epsilon\right)+\epsilon^{-2} A(\lambda, \mathbf{s}) =\log \wt Z_{\frac \lambda\epsilon}\left( (s_1,\epsilon s_2,\epsilon^2 s_3,\dots)\right)+B(\lambda,\epsilon)
\ee
where   $B(\lambda,\epsilon)$ is a constant depending on $\lambda$ and $\epsilon$ only and $\wt Z_{\frac \lambda\epsilon}$  is the mGUE partition function   in \eqref{factorization2}.
\end{theorem}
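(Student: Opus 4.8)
The plan is to treat Theorem \ref{thmhodgegue} as the Hodge--GUE correspondence established in \cite{DLYZ2016}, and to indicate the argument underlying it. The strategy is to show that, after the change of variables \eqref{changep} and the addition of the correction terms $A(\lambda,\ss)$ and $B(\lambda,\epsilon)$, the two sides of \eqref{hodgegueformula} are tau functions of one and the same integrable hierarchy --- the Volterra lattice (discrete KdV) --- with matching initial data, and hence agree. Throughout one works with the formal conventions of Section \ref{secformal}, so that $N=\lambda/\epsilon$ may be taken an arbitrary formal parameter rather than a positive integer.

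On the matrix-model side the input is essentially classical: $Z^{\mathsf{even}}_N(\ss)$ is a tau function of the Volterra hierarchy \cite{IKF1990}, the factorization \eqref{factorization2} singles out $\wt Z_N(\ss)$ as one of its two ``square-root'' tau functions, and $\wt Z_N$ is equivalently characterized by the Virasoro constraints \eqref{virasoromGUE}. On the Hodge side, the class $\L^2(-1)\L(\tfrac12)$, together with the node corrections dictated by Mumford's relation, defines a semisimple rank-one cohomological field theory; one can then use the Givental--Teleman reconstruction to write its total descendent potential $\mathcal{H}$ as an explicit $R$-matrix action on the trivial (point) potential, and identify the resulting $R$-matrix and genus-zero data with those of the Frobenius structure attached to the Volterra hierarchy.

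Concretely I would proceed in three steps. (i) Derive Virasoro constraints for $\mathcal{H}(\mathbf{p};\sqrt2\,\epsilon)$, starting from the Virasoro constraints of the Witten--Kontsevich tau function and twisting by the cubic Hodge class (via Givental's quantization, or equivalently via the ELSV-type recursion for these intersection numbers). (ii) Perform the substitution $\ss\mapsto\mathbf{p}$ of \eqref{changep}: the central binomial coefficients $\binom{2j}{j}$ appearing there encode a dilaton-type shift $p_1\mapsto p_1-1$ and a string-type shift $p_0\mapsto p_0+\lambda-1$, while the polynomial $A(\lambda,\ss)$ of \eqref{A(x,s)} absorbs the genus-zero and genus-one discrepancies produced by this shift; then check that under this substitution, after adding $\epsilon^{-2}A(\lambda,\ss)$, the Hodge Virasoro operators are transformed precisely into the operators $2^{-n}\wt{\mathcal{L}}_n(\lambda/\epsilon,\ss)$ of \eqref{virasoromGUE}. (iii) Conclude by uniqueness of the formal series annihilated by the Virasoro operators, the overall normalization being fixed by the constant $B(\lambda,\epsilon)$.

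The main obstacle is step (ii): verifying that the intricate, binomial-coefficient-laden change of variables \eqref{changep}--\eqref{A(x,s)} conjugates the Hodge Virasoro operators into the mGUE ones. This reduces to a generating-function identity for the numbers $\binom{2j}{j}$ --- its genus-zero part amounting to the matching of the disc amplitudes of the two models, and its genus-one part fixing the $\lambda$-linear terms of $A$ together with $B(\lambda,\epsilon)$ --- and it is here that the genuine content resides. The remaining ingredients (semisimplicity of the Hodge CohFT, Givental--Teleman reconstruction, the Virasoro constraints for Witten--Kontsevich, and uniqueness of tau functions under Virasoro) are by now standard, so once step (ii) is in place the identity \eqref{hodgegueformula} follows.
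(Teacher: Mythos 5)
The paper does not prove Theorem \ref{thmhodgegue} at all: it is stated as an imported result, attributed entirely to \cite{DLYZ2016}, and its only role here is as input to Corollary \ref{corollary1} and Corollary \ref{corollary2}. So there is no internal argument to measure your proposal against; the paper's ``proof'' is a citation. Judged on its own terms, your outline is a reasonable reconstruction of the strategy one expects (and of the remark at the end of Section \ref{secproofthm2}, where the paper records the Virasoro operators \eqref{virasoromGUE} and \eqref{virasoroLUE} and checks their compatibility with Theorem \ref{thm2}): both sides of \eqref{hodgegueformula} are to be pinned down by Virasoro constraints plus a normalization, and the substance is the matching of constraints under the substitution \eqref{changep}.

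As written, however, the proposal is a program rather than a proof. All of the mathematical content is concentrated in your step (ii) --- verifying that the change of variables \eqref{changep}, corrected by $\epsilon^{-2}A(\lambda,\ss)$, conjugates the Virasoro operators of the special cubic Hodge potential into $2^{-n}\wt{\mathcal L}_n(\lambda/\epsilon,\ss)$ --- and you explicitly defer exactly that computation; this is where the binomial coefficients $\binom{2j}{j}$, the peculiar shift $p_k\mapsto p_k+\delta_{k,1}+\lambda\delta_{k,0}-1$, and the quadratic-plus-linear form of $A(\lambda,\ss)$ all have to be produced, and it occupies the bulk of the argument in \cite{DLYZ2016}. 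Two further points would need care even at the level of a sketch: (a) one must first establish that $\mathcal H(\mathbf p;\sqrt2\,\epsilon)$ satisfies Virasoro constraints of the required shape, which rests on the special cubic condition $\tfrac1{-1}+\tfrac1{-1}+\tfrac1{1/2}=0$ and Mumford's relation and is not an automatic consequence of Givental--Teleman reconstruction; (b) the uniqueness statement you invoke in step (iii) --- that a formal series annihilated by all $\wt{\mathcal L}_n$ is determined up to an additive constant --- has to be argued, since it requires checking that the recursion the operators induce on the coefficients of $\log\wt Z$ actually closes. Until step (ii) is carried out, nothing is established.
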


\begin{corollary}[Hodge-LUE correspondence]
\label{corollary1}
Let $\mathcal{H}\left(\mathbf{p}\left(\lambda,\ss\right);\sqrt{2}\epsilon\right)$ as in \eqref{hodgegueformula} and $Z_N\left(-\frac 12;\t_+\right)$ the Laguerre partition function \eqref{Z} with parameter $\alpha=-\frac{1}{2}$ and times $\t_+$ and $\t_-=0$. We have
\be
\mathcal{H}\left (\mathbf{p}(\lambda,\ss);\sqrt{2}\epsilon\right)+ \epsilon^{-2} A \left (\lambda,\ss\right )= \log Z_N\left(-\frac 12;\t_+\right) + C(N,\epsilon),
\ee
where we identify
\be
\lambda=\epsilon\left( 2 N -\frac 12 \right),\qquad t_k =  2^k\epsilon^{k-1} s_k,
\ee
and $A(\lambda, \ss)$ is defined in \eqref{A(x,s)}, $\mathbf{p}(\lambda,\ss)$ is defined in \eqref{changep}, and $C(N,\epsilon)$ is a constant depending on $N$ and $\epsilon$ only.
\end{corollary}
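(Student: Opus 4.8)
The plan is to obtain Corollary \ref{corollary1} by simply composing the Hodge--GUE correspondence (Theorem \ref{thmhodgegue}) with the identification of the mGUE and LUE partition functions proved above (Theorem \ref{thm2}); no computation beyond these two inputs is needed.

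First I would take the Hodge--GUE formula \eqref{hodgegueformula},
\be
\mathcal{H}\left(\mathbf{p}(\lambda,\ss);\sqrt{2}\epsilon\right)+\epsilon^{-2}A(\lambda,\ss)=\log\wt Z_{\lambda/\epsilon}\left((s_1,\epsilon s_2,\epsilon^2 s_3,\dots)\right)+B(\lambda,\epsilon),
\ee
and specialize the free parameter by imposing $\lambda/\epsilon=2N-\frac 12$, i.e.\ $\lambda=\epsilon\left(2N-\frac 12\right)$, which is exactly the relation in the statement. The index of the mGUE partition function on the right then becomes $2N-\frac 12$, and its argument is the rescaled vector $\ss'=(s_1,\epsilon s_2,\epsilon^2 s_3,\dots)$, that is $s_k'=\epsilon^{k-1}s_k$. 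Next I would apply Theorem \ref{thm2} with $\ss'$ in place of $\ss$: it gives $\wt Z_{2N-1/2}(\ss')=C_N\,Z_N\!\left(-\frac 12;\t_+\right)$, where now $t_k=2^k s_k'=2^k\epsilon^{k-1}s_k$ --- precisely the time identification stated in the corollary. Taking logarithms and substituting back yields
\be
\mathcal{H}\left(\mathbf{p}(\lambda,\ss);\sqrt{2}\epsilon\right)+\epsilon^{-2}A(\lambda,\ss)=\log Z_N\!\left(-\frac 12;\t_+\right)+C(N,\epsilon),
\ee
with $C(N,\epsilon)$ collecting the time-independent constants $\log C_N$ (from Theorem \ref{thm2}, $C_N$ as in \eqref{constantmGUEevenvsLaguerre}) and $B\left(\epsilon(2N-\frac 12),\epsilon\right)$ (from Theorem \ref{thmhodgegue}); thus $C(N,\epsilon)$ depends on $N$ and $\epsilon$ only, which is the asserted identity.

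I do not expect a genuine obstacle; the remaining points are purely bookkeeping. (i) The whole chain must be read at the level of formal power series in $\ss$ (equivalently $\t_+$), whose coefficients are polynomial in $N$ or explicit elementary functions of $N$, so that the specialization to a non-integer $N$ and the use of the mGUE partition function at half-integer index are legitimate, as discussed in Section \ref{secformal}. (ii) One must check that the rescaling $s_k\mapsto\epsilon^{k-1}s_k$ needed to pass from $\wt Z_{\lambda/\epsilon}(s_1,\epsilon s_2,\dots)$ to $\wt Z_{2N-1/2}$ evaluated at the ``standard'' times of Theorem \ref{thm2} is exactly compensated by the identification $t_k=2^k\epsilon^{k-1}s_k$ of the corollary; this is immediate from the relation $t_k=2^k s_k$ in Theorem \ref{thm2}. (iii) Finally one verifies that the arguments $\lambda$ of $\mathbf{p}(\lambda,\ss)$ and $A(\lambda,\ss)$ on the left side are untouched by this procedure, so that the left side is literally that of \eqref{hodgegueformula}, and merges all $N$- and $\epsilon$-dependent but time-independent constants into the single constant $C(N,\epsilon)$.
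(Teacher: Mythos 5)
Your proposal is correct and coincides with the paper's own proof, which likewise obtains the corollary by substituting $\lambda=\epsilon\left(2N-\frac 12\right)$ into \eqref{hodgegueformula} and applying Theorem \ref{thm2} to the rescaled times $\epsilon^{k-1}s_k$. The bookkeeping points you raise (formal-series interpretation, compatibility of the time rescalings, absorption of the constants into $C(N,\epsilon)$) are exactly the content implicit in the paper's one-line argument.
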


\begin{proof}
It follows from \eqref{hodgegueformula} upon the substitution $\lambda \mapsto \epsilon\left( 2 N -\frac 12 \right)$ and applying Theorem \ref{thm2} for the set of times $\epsilon^{k-1}s_k$, $k\geq 1$.
\end{proof}

It would be interesting to construct the Double Ramification  hierarchy (see \cite{B2015,BR2016}) for cubic Hodge integrals, and then check in this case the conjecture formulated in \cite{BDGR2018} by which the logarithm of the corresponding tau function should coincide with the LUE partition function, after the change of variables described in \cite{BDGR2020}.

Finally, Corollary \ref{corollary2} is obtained matching the coefficients in \eqref{hodgegueformula} using \eqref{hurwitzstrict}.

\begin{proof}[{\bf Proof of Corollary~\ref{corollary2}}]
We apply $\left.\frac{\pa^\ell}{\pa s_{\mu_1}\cdots\pa s_{\mu_\ell}}\right|_{\ss=\0}$, for $\ell > 0$, on both sides of \eqref{hodgegueformula}. On the right side we get, in view of Theorem \ref{thm2}
\begin{align}
\nonumber
&\left.\frac{\pa^\ell}{\pa s_{\mu_1}\cdots\pa s_{\mu_\ell}}\right|_{\ss=\0}
\log \wt Z_{\frac \lambda\epsilon}\left((s_1,\epsilon s_2,\epsilon^2 s_3,\dots)\right)
=\epsilon^{|\mu|-\ell}2^{|\mu|}\left\langle\tr X^{\mu_1}\cdots\tr X^{\mu_\ell}\right\rangle_{\mathsf{c}}\big|_{N=\frac \lambda{2\epsilon}+\frac 14,\, \a=-\frac 12}
\\
&\qquad
=\epsilon^{|\mu|-\ell}2^{|\mu|}\sum_{\g\geq 0}\sum_{s=1}^{1-2\g+|\mu|-\ell}\left(\frac{\lambda+\frac\epsilon 2}{2\epsilon}\right)^{2-2\g+|\mu|-\ell}\left(\frac{\lambda-\frac\epsilon 2}{\lambda+\frac \epsilon 2}\right)^sH_\g^>(\mu;s)
\end{align}
where in the last step we have used \eqref{hurwitzstrict}; we also note that the substitutions $2N-\frac 12=\frac \lambda\epsilon$, $\a=-\frac 12$, from Theorem \ref{thm2}, yield $N=\frac {\lambda+\frac \epsilon 2}{2\epsilon}$, $c=\frac{\lambda-\frac \epsilon 2}{\lambda+\frac \epsilon 2}$. On the other side we get
\be
\left.\frac{\pa^\ell}{\pa s_{\mu_1}\cdots\pa s_{\mu_\ell}}\right|_{\ss=\0}\mathcal{H}(\mathbf{p}(\lambda,\mathbf{s});\sqrt{2} \epsilon)+\epsilon^{-2}\left.\frac{\pa^\ell}{\pa s_{\mu_1}\cdots\pa s_{\mu_\ell}}\right|_{\ss=\0} A(\lambda,\mathbf{s}).
\ee
The contributions from the last term is directly evaluated from \eqref{A(x,s)} and give the second line of \eqref{Hodgegmu}. For the first term we recall the \emph{affine} change of variable \eqref{changep} and compute
\begin{align}
\nonumber
&\text{\small $
\frac{\pa^\ell}{\pa s_{\mu_1}\cdots\pa s_{\mu_\ell}}\mathcal{H}\left(\mathbf{p}(\lambda,\mathbf{s});\sqrt{2}\epsilon\right)=
\sum_{i_1,\dots,i_\ell\geq 0}\prod_{b=1}^\ell\mu_b^{i_b+1}{{2\mu_b}\choose \mu_b}\frac{\pa^\ell}{\pa p_{i_1}\cdots\pa p_{i_\ell}}\mathcal{H}\left(\mathbf{p}(\lambda,\mathbf{s});\sqrt{2}\epsilon\right)$}
\\
\label{eq}
&\text{\small $=\sum_{g,n\geq 0}\left(\sqrt 2 \epsilon\right)^{2g-2}\sum_{\begin{smallmatrix}k_1,\dots,k_n\geq 0 \\ i_1,\dots,i_\ell\geq 0 \end{smallmatrix}}\frac{p_{k_1}(\lambda,\ss)\cdots p_{k_n}(\lambda;\ss)}{n!}
\int_{\overline{\mathcal{M}}_{g,n+\ell}}\L^2(-1)\L\left(\frac 12\right)\prod_{a=1}^n\psi_a^{k_a}\prod_{b=1}^\ell \mu_b^{i_b+1}{{2\mu_b}\choose \mu_b}\psi_{n+b}^{i_b}$}.
\end{align}
Evaluation at $\ss=\0$ corresponds to $p_k=\delta_{k,1}+\lambda\delta_{k,0}-1$; thus, in the previous expression, we set $n=m+r$, where $m$ is the number of $k_a$'s equal to zero, and the remaining $k_1,\dots,k_r$'s are all $\geq 2$ (we are evaluating at $p_1=0$), and so the evaluation of the \eqref{eq} at $p_k=\delta_{k,1}+\lambda\delta_{k,0}-1$ reads
\begin{align}
\nonumber
&\text{\small $\sum_{g,m,r\geq 0}\left(\sqrt 2 \epsilon\right)^{2g-2}\sum_{\begin{smallmatrix}k_1,\dots,k_r\geq 2 \\ i_1,\dots,i_\ell\geq 0 \end{smallmatrix}}\frac{(\lambda-1)^m(-1)^r}{m!r!}
\int_{\overline{\mathcal{M}}_{g,\ell+m+r}}\L^2(-1)\L\left(\frac 12\right)\prod_{a=1}^r\psi_a^{k_a}\prod_{b=1}^\ell \mu_b^{i_b+1}{{2\mu_b}\choose \mu_b}\psi_{m+r+b}^{i_b}$}
\\
\nonumber
&\text{\small $=\sum_{g,m,r\geq 0}\left(\sqrt 2 \epsilon\right)^{2g-2}\sum_{d_1,\dots,d_r\geq 1}\frac{(\lambda-1)^m(-1)^r}{m!r!}
\int_{\overline{\mathcal{M}}_{g,\ell+m+r}}\L^2(-1)\L\left(\frac 12\right)\prod_{a=1}^r\psi_a^{d_a+1}\prod_{b=1}^\ell \frac{\mu_b{{2\mu_b}\choose \mu_b}}{1-\mu_b\psi_{m+r+b}}$},
\\
\label{eqeq}
\end{align}
\normalsize
where in the last step we rename $k_a=d_a+1$, $d_a\geq 1$.

We can trade the $\psi_1,\dots,\psi_r$ classes in \eqref{eqeq} for a suitable combination of Mumford $\kappa$ classes, following ideas from \cite{BGF2017}.
Let $\pi:\overline{\mathcal M}_{g,\ell+m+r}\to\overline{\mathcal M}_{g,\ell+m}$ be the map forgetting the first $r$ marked points (and contracting the resulting unstable components), then we have the following iterated version of the \emph{dilaton equation}
\be
\pi_*\left((\pi^*\mathcal{X}) \prod_{a=1}^r\psi_a^{d_a+1}\right)=\mathcal{X}\sum_{\s\in\mathfrak{S}_r}\prod_{\g\in\mathrm{Cycles}(\s)}\kappa_{\sum_{a\in\g}d_a},\qquad d_a,\dots,d_r\geq 1,
\ee
for any $\mathcal{X}\in H^\bullet\left(\overline{\mathcal M}_{g,\ell+m},\Q\right)$. Here and below, $\mathfrak{S}_r$ is the group of permutations of $\{1,\dots,r\}$ and $\mathrm{Cycles}(\s)$ is the set of disjoint cycles in the permutation $\s$, $\s=\prod_{\g\in\mathrm{Cycles}(\s)}\g$.
In our case it is convenient to set
\be
\mathcal{X}=\L^2(-1)\L\left(\frac 12\right)\prod_{b=1}^\ell \frac{\mu_b{{2\mu_b}\choose \mu_b}}{1-\mu_b\psi_{m+b}},\qquad\pi^*\mathcal{X}=\L^2(-1)\L\left(\frac 12\right)\prod_{b=1}^\ell \frac{\mu_b{{2\mu_b}\choose \mu_b}}{1-\mu_b\psi_{m+r+b}},
\ee
so that the sum over $r\geq 0$ and $d_1,\dots,d_r\geq 1$ in \eqref{eqeq} can be expressed as
\be
\text{\small $\sum_{r\geq 0}\frac{(-1)^r}{r!}\sum_{d_1,\dots,d_r\geq 1}\int_{\overline{\mathcal M}_{g,\ell+m+r}}(\pi^*X)\prod_{a=1}^r\psi_a^{d_a+1}
=\sum_{r\geq 0}\frac{(-1)^r}{r!}\sum_{d_1,\dots,d_r\geq 1}\int_{\overline{\mathcal M}_{g,\ell+m}} X\sum_{\s\in\mathfrak{S}_r}\prod_{\g\in\mathrm{Cycles}(\s)}\kappa_{\sum_{a\in\g}d_a}$}.
\ee

Let us now recall that for any set of variables $F_1,F_2,\dots$, we have the identity of symmetric functions
\be
\exp\left(\sum_{r\geq 1}\frac{\xi^r}r F_r\right)=\sum_{\nu}\frac{\xi^{|\nu|}}{z_\nu}F_{\nu_1}\cdots F_{\nu_{\ell(\nu)}}
\ee
where the sum on the right side extends over the set of all partitions $\nu=(\nu_1,\dots,\nu_{\ell(\nu)})$, $|\nu|=\nu_1+\cdots+\nu_{\ell(\nu)}$, and $z_\nu$ has the same definition as above, namely $z_\nu:=\prod_{i\geq 1}\left(i^{m_i}\right)m_i!$, $m_i$ being the multiplicity of $i$ in the partition $\nu$. Applying this relation to
\be
F_r=\sum_{d_1,\dots,d_r\geq 1}\kappa_{\sum_{a=1}^td_a}=\sum_{d\geq r}{{d-1}\choose{r-1}}\kappa_d,\qquad \xi=-1,
\ee
since for any partition $\nu$ of $r$ the quantity $r!/z_\nu$ is the cardinality of the conjugacy class labeled by $\nu$ in $\mathfrak{S}_r$, we deduce that
\be
\sum_{r\geq 0}\frac {(-1)^r}{r!}\sum_{d_1,\dots,d_r\geq 1}\int_{\overline{\mathcal M}_{g,\ell+m}}\mathcal{X}\sum_{\s\in\mathfrak{S}_r}\prod_{\g\in{\mathrm Cycles}(\s)}\kappa_{\sum_{a\in\g}d_a}
=\int_{\overline{\mathcal M}_{g,\ell+m}}\mathcal{X}\exp\left(-\sum_{d\geq 1}\frac{\kappa_d}d\right)
\ee
where we also use the identity $\sum_{r\geq 1}\frac{(-1)^r}r{{d-1}\choose{r-1}}=-\frac 1d$. The proof is complete.
\end{proof}

\begin{example}\label{examplehodgeguegenus0}

Comparing the coefficients of $\epsilon^{-2}$ on  both sides of \eqref{hodgemonotonehurwitzformula} we obtain the following relation in genus zero
\be
\label{hodgehurwitzgenus0}
\mathscr{H}_{0,\mu}=2^{\ell-2}\lambda^{|\mu|+2-\ell}\sum_{s=1}^{|\mu|+1-\ell}H_0^>(\mu;s)
\ee
valid for any partition $\mu$ of length $\ell$. One can check that \eqref{hodgehurwitzgenus0} is consistent with the computations of Hurwitz numbers in genus zero performed in Example \ref{GenusZeroEx}.

E.g. for $\ell=1$ we compute the first terms in the $(\lambda-1)$-expansion of the left side of \eqref{hodgehurwitzgenus0}, directly from \eqref{hodgemonotonehurwitzformula},
\be
\label{LHSgenus0}
\mathscr{H}_{0,\mu=(\mu_1)}=\frac 12 \frac {1}{\mu_1+1}{{2\mu_1}\choose \mu_1}+\frac{(\lambda-1)}2{{2\mu_1}\choose \mu_1}+\frac{(\lambda-1)^2}4\mu_1{{2\mu_1}\choose\mu_1}+\O\left((\lambda-1)^3\right).
\ee
On the other hand, the right side of \eqref{hodgehurwitzgenus0} is computed as
\begin{align}
\nonumber
\frac 12 \lambda^{\mu_1+1}\sum_{s=1}^{\mu_1}H_0^>(\mu=(\mu_1);s)&=\frac 1{2\mu_1}\lambda^{\mu_1+1}\sum_{s=1}^{\mu_1}{{\mu_1}\choose s}{{\mu_1}\choose{s-1}}
\\
\nonumber
&=\frac 1{2\mu_1}\lambda^{\mu_1+1}{{2\mu_1}\choose {\mu_1-1}}
\\
\label{RHSgenus0}
&=\frac 1{2(\mu_1+1)}{{2\mu_1}\choose{\mu_1}}\sum_{b=0}^{\mu_1+1}{{\mu_1+1}\choose b}(\lambda-1)^b
\end{align}
where we use \eqref{GenusZeroPlanar} and the identity
\be
\sum_{s=1}^{\mu_1} {{\mu_1} \choose {s-1}}{{\mu_1} \choose s} = \sum_{s=0}^{\mu_1-1} {\mu_1 \choose s}{\mu_1 \choose {\mu_1-1-s}} = {{2\mu_1} \choose {\mu_1-1}},
\ee
which follows from the \emph{Chu-Vandermonde identity} $\sum_{s=0}^{k-1}{a\choose s}{b\choose{k-1-s}}={{a+b}\choose{k-1}}$ for $a=b=k=\mu_1$. Expressions \eqref{LHSgenus0} and \eqref{RHSgenus0} match.
\end{example}

\appendix

\section{Numerical Tables}\label{apptable}
\subsection{Tables of some weighted strictly monotone double Hurwitz numbers,  $H_{g}^>(\mu;s)$}
We recall that $H_{g}^>(\mu;s)=\frac{z_\mu}{|\mu|!}\sum\limits_{\nu\text{ of length }s}h_g^>(\mu;\nu)$, where $h_g^>(\mu;\nu)$ is the strictly monotone double Hurwitz number with partitions $\mu$ and $\nu$; see \eqref{eq122}.

\shrink{
\footnotesize
\begin{align*}
&\begin{array}[t]{|c|c|c|}
\hline \mu=(3,1) & g=0 & g=1\\
\hline s=1 & 3 & 3 \\
\hline s=2 & 9 & 0 \\
\hline s=3 & 3 & 0 \\
\hline
\end{array}
\quad
\begin{array}[t]{|c|c|c|}
\hline \mu=(3,2) & g=0 & g=1 \\
\hline s=1 & 6 & 18 \\
\hline s=2 & 30 & 18 \\
\hline s=3 & 30 & 0 \\
\hline s=4 & 6 & 0 \\
\hline
\end{array}
\quad
\begin{array}[t]{|c|c|c|c|}
\hline \mu=(3,3) & g=0 & g=1 & g=2 \\
\hline s=1 & 9 & 75 & 36 \\
\hline s=2 & 72 & 198 & 0 \\
\hline s=3 & 138 & 75 & 0 \\
\hline s=4 & 72 & 0 & 0 \\
\hline s=5 & 9 & 0 & 0 \\
\hline
\end{array}
\\
&\begin{array}[t]{|c|c|c|c|c|}
 \hline\mu=(4,4) & g=0 & g=1 & g=2 & g=3 \\
 \hline s=1 & 16 & 616 & 3304 & 1104 \\
 \hline s=2 & 264 & 4636 & 8132 & 0 \\
 \hline s=3 & 1200 & 8496 & 3304 & 0 \\
 \hline s=4 & 1940 & 4636 & 0 & 0 \\
 \hline s=5 & 1200 & 616 & 0 & 0 \\
 \hline s=6 & 264 & 0 & 0 & 0 \\
 \hline s=7 & 16 & 0 & 0 & 0 \\
 \hline
\end{array}
\quad
\begin{array}[t]{|c|c|c|c|c|}
 \hline \mu=(6, 3) & g=0 & g=1 & g=2 & g=3 \\
 \hline s=1 & 18 & 1428 & 16002 & 22872 \\
 \hline s=2 & 414 & 15120 & 70938 & 22872 \\
 \hline s=3 & 2598 & 43680 & 70938 & 0 \\
 \hline s=4 & 6210 & 43680 & 16002 & 0 \\
 \hline s=5 & 6210 & 15120 & 0 & 0 \\
 \hline s=6 & 2598 & 1428 & 0 & 0 \\
 \hline s=7 & 414 & 0 & 0 & 0 \\
 \hline s=8 & 18 & 0 & 0 & 0 \\
\hline
\end{array}
\\
&\begin{array}[t]{|c|c|}
 \hline \mu=(2,1,1) & g=0\\
\hline s=1 & 6 \\
\hline s=2 & 6 \\
\hline
\end{array}
\quad
\begin{array}[t]{|c|c|c|}
 \hline \mu=(2,2,1) & g=0 & g=1 \\
 \hline s=1 & 16 & 8 \\
 \hline s=2 & 40 & 0 \\
 \hline s=3 & 16 & 0 \\
\hline
\end{array}
\quad
\begin{array}[t]{|c|c|c|}
 \hline \mu=(2,2,2) & g=0 & g=1 \\
\hline s=1 & 40 & 80 \\
\hline s=2 & 176 & 80 \\
\hline s=3 & 176 & 0 \\
\hline s=4 & 40 & 0 \\
\hline
\end{array}
\end{align*}
}

\shrink{
\footnotesize
\begin{align*}
&\begin{array}[t]{|c|c|c|c|c|c|}
 \hline \mu=(4, 4, 4) & g=0 & g=1 & g=2 & g=3 & g=4 \\
 \hline s=1 & 704 & 89760 & 2631552 & 18161440 & 19033344 \\
 \hline s=2 & 21312 & 1568640 & 24587904 & 75241920 & 19033344 \\
 \hline s=3 & 204480 & 8507520 & 66562944 & 75241920 & 0 \\
 \hline s=4 & 843648 & 18934080 & 66562944 & 18161440 & 0 \\
 \hline s=5 & 1673856 & 18934080 & 24587904 & 0 & 0 \\
 \hline s=6 & 1673856 & 8507520 & 2631552 & 0 & 0 \\
 \hline s=7 & 843648 & 1568640 & 0 & 0 & 0 \\
 \hline s=8 & 204480 & 89760 & 0 & 0 & 0 \\
 \hline s=9 & 21312 & 0 & 0 & 0 & 0 \\
 \hline s=10 & 704 & 0 & 0 & 0 & 0 \\
\hline \end{array}
\\
&\begin{array}[t]{|c|c|c|c|c|}
 \hline \mu=(4, 3, 2, 1) & g=0 & g=1 & g=2 & g=3 \\
 \hline s=1 & 1728 & 54432 & 235872 & 70848 \\
 \hline s=2 & 26136 & 379512 & 570672 & 0 \\
 \hline s=3 & 111024 & 680832 & 235872 & 0 \\
 \hline s=4 & 175824 & 379512 & 0 & 0 \\
 \hline s=5 & 111024 & 54432 & 0 & 0 \\
 \hline s=6 & 26136 & 0 & 0 & 0 \\
 \hline s=7 & 1728 & 0 & 0 & 0 \\
\hline \end{array}
\quad
\begin{array}[t]{|c|c|c|c|}
 \hline \mu=(2, 2, 2, 2) & g=0 & g=1 & g=2 \\
 \hline s=1 & 672 & 3360 & 1008 \\
 \hline s=2 & 4464 & 8016 & 0 \\
 \hline s=3 & 7872 & 3360 & 0 \\
 \hline s=4 & 4464 & 0 & 0 \\
 \hline s=5 & 672 & 0 & 0 \\
\hline
\end{array}
\\
&\begin{array}[t]{|c|c|c|c|c|c|c|}
 \hline \mu=(5, 4, 4, 2) & g=0 & g=1 & g=2 & g=3 & g=4 & g=5 \\
 \hline s=1 & 29120 & 7047040 & 444924480 & 8434666240 & 42317475200 & 35974149120 \\
 \hline s=2 & 1212800 & 180513600 & 6829912320 & 71893480000 & 168041817600 & 35974149120 \\
 \hline s=3 & 16616960 & 1529449920 & 33913376640 & 186374568640 & 168041817600 & 0 \\
 \hline s=4 & 103248000 & 5796138240 & 72317482560 & 186374568640 & 42317475200 & 0 \\
 \hline s=5 & 331189440 & 11030467200 & 72317482560 & 71893480000 & 0 & 0 \\
 \hline s=6 & 584935680 & 11030467200 & 33913376640 & 8434666240 & 0 & 0 \\
 \hline s=7 & 584935680 & 5796138240 & 6829912320 & 0 & 0 & 0 \\
 \hline s=8 & 331189440 & 1529449920 & 444924480 & 0 & 0 & 0 \\
 \hline s=9 & 103248000 & 180513600 & 0 & 0 & 0 & 0 \\
 \hline s=10 & 16616960 & 7047040 & 0 & 0 & 0 & 0 \\
 \hline s=11 & 1212800 & 0 & 0 & 0 & 0 & 0 \\
 \hline s=12 & 29120 & 0 & 0 & 0 & 0 & 0 \\
\hline \end{array}
\\
&\begin{array}[t]{|c|c|c|}
 \hline \mu=(2, 2, 2, 1, 1) & g=0 & g=1 \\
 \hline s=1 & 1680 & 3360 \\
 \hline s=2 & 7392 & 3360 \\
 \hline s=3 & 7392 & 0 \\
 \hline s=4 & 1680 & 0 \\
\hline
\end{array}
\quad
\begin{array}[t]{|c|c|c|c|c|}
 \hline \mu=(3, 3, 2, 2, 2) & g=0 & g=1 & g=2 & g=3 \\
 \hline s=1 & 71280 & 2661120 & 18461520 & 18722880 \\
 \hline s=2 & 1206144 & 23973840 & 75182256 & 18722880 \\
 \hline s=3 & 6314976 & 63697968 & 75182256 & 0 \\
 \hline s=4 & 13791600 & 63697968 & 18461520 & 0 \\
 \hline s=5 & 13791600 & 23973840 & 0 & 0 \\
 \hline s=6 & 6314976 & 2661120 & 0 & 0 \\
 \hline s=7 & 1206144 & 0 & 0 & 0 \\
 \hline s=8 & 71280 & 0 & 0 & 0 \\
\hline
\end{array}
\end{align*}
}

\subsection{Tables of some weighted weakly monotone double Hurwitz numbers $H_g^\geq(\mu;s)$}

We recall that  $H_g^\geq(\mu;s) =\frac{z_\mu}{|\mu|!}\sum\limits_{\nu\text{ of length }s}h_g^\geq(\mu;\nu),
$ where $h_g^\geq(\mu;\nu)$ is the weakly monotone double Hurwitz number with partitions $\mu$ and $\nu$; compare with \eqref{eq122}.

In general, $H_{g}^\ge(\mu;s) \not = 0$ for every $s \leq |\mu|$ and $g \ge 0$. We calculate $H_{g}^\ge(\mu;s)$ for the first few values of $g$.

\shrink{
\begin{align*}
&\begin{array}[t]{|c|c|c|c|}
\hline \mu=(3, 1) & g=0 & g=1 & g=2 \\
\hline s=1 & 3 & 45 & 483 \\
\hline s=2 & 18 & 255 & 2688 \\
\hline s=3 & 30 & 420 & 4410 \\
\hline s=4 & 15 & 210 & 2205 \\
\hline
\end{array}
\quad
\begin{array}[t]{|c|c|c|c|}
 \hline \mu=(3, 2) & g=0 & g=1 & g=2 \\
\hline s=1 & 6 & 168 & 3402 \\
\hline s=2 & 54 & 1464 & 29058 \\
\hline s=3 & 156 & 4176 & 82212 \\
\hline s=4 & 180 & 4800 & 94260 \\
\hline s=5 & 72 & 1920 & 37704 \\
\hline
\end{array}
\quad
\begin{array}[t]{|c|c|c|c|}
 \hline \mu=(3, 3) & g=0 & g=1 & g=2 \\
\hline s=1 & 9 & 462 & 16443 \\
\hline s=2 & 117 & 5742 & 197559 \\
\hline s=3 & 516 & 24660 & 833472 \\
\hline s=4 & 1008 & 47580 & 1594836 \\
\hline s=5 & 900 & 42300 & 1413720 \\
\hline s=6 & 300 & 14100 & 471240 \\
\hline
\end{array}
\\
&\begin{array}[t]{|c|c|c|c|c|c|}
 \hline \mu=(1, 1, 1) & g=0 & g=1 & g=2 & g=3 & g=4 \\
 \hline s=1 & 4 & 20 & 84 & 340 & 1364  \\
 \hline s=2 & 12 & 60 & 252 & 1020 & 4092  \\
 \hline s=3 & 8 & 40 & 168 & 680 & 2728  \\
\hline \end{array}
\quad
\begin{array}[t]{|c|c|c|c|c|}
 \hline \mu=(3,2,1) & g=0 & g=1 & g=2 & g=3 \\
 \hline s=1 & 42 & 2268 & 81774 & 2498496 \\
 \hline s=2 & 558 & 28248 & 982326 & 29405736 \\
 \hline s=3 & 2472 & 121320 & 4143024 & 122714160 \\
 \hline s=4 & 4836 & 234060 & 7926312 & 233606280 \\
 \hline s=5 & 4320 & 208080 & 7025760 & 206699040 \\
 \hline s=6 & 1440 & 69360 & 2341920 & 68899680 \\
\hline \end{array}
\\
&\begin{array}[t]{|c|c|c|c|c|c|c|}
 \hline \mu=(5, 3, 2) & g=0 & g=1 & g=2 & g=3 & g=4 & g=5 \\
 \hline s=1 & 330 & 98670 & 17117100 & 2288397540 & 262779844470 & 27370788935490 \\
 \hline s=2 & 11790 & 3139530 & 508126980 & 64989626220 & 7244914364850 & 739256601861510 \\
 \hline s=3 & 151140 & 37555800 & 5814501240 & 722008428240 & 78865374260700 & 7932095991173640 \\
 \hline s=4 & 973200 & 231506100 & 34809669720 & 4236585517200 & 456285210221400 & 45429895491347220 \\
 \hline s=5 & 3600180 & 832748640 & 122812524600 & 14745786668160 & 1572851081541420 & 155505293985110400 \\
 \hline s=6 & 8126700 & 1846504080 & 268910866680 & 31999520486160 & 3391243294051140 & 333707416656660000 \\
 \hline s=7 & 11380320 & 2557716000 & 369587047200 & 43733298023520 & 4615886297332800 & 452853891923025600 \\
 \hline s=8 & 9649080 & 2155587000 & 310123401000 & 36581098895880 & 3852087017209200 & 377274782175656400 \\
 \hline s=9 & 4536000 & 1010772000 & 145151092800 & 17098516260000 & 1798743628584000 & 176040872796600000 \\
 \hline s=10 & 907200 & 202154400 & 29030218560 & 3419703252000 & 359748725716800 & 35208174559320000 \\
\hline \end{array}
\\
&\begin{array}[t]{|c|c|c|c|c|c|c|}
 \hline \mu=(1, 1, 1, 1) & g=0 & g=1 & g=2 & g=3 & g=4 & g=5 \\
 \hline s=1 & 30 & 420 & 4410 & 42240 & 390390 & 3554460 \\
 \hline s=2 & 174 & 2364 & 24498 & 233328 & 2151222 & 19565892 \\
 \hline s=3 & 288 & 3888 & 40176 & 382176 & 3521664 & 32022864 \\
 \hline s=4 & 144 & 1944 & 20088 & 191088 & 1760832 & 16011432 \\
\hline \end{array}
\\
&\begin{array}[t]{|c|c|c|c|c|c|c|}
\hline \mu=(2, 2, 1, 1) & g=0 & g=1 & g=2 & g=3 & g=4 & g=5 \\
\hline s=1 & 224 & 11760 & 417648 & 12652640 & 353825472 & 9465041040 \\
\hline s=2 & 2936 & 145560 & 5001792 & 148676240 & 4111488168 & 109250057640 \\
\hline s=3 & 12912 & 623088 & 21061152 & 619916064 & 17042443920 & 451231651728 \\
\hline s=4 & 25176 & 1200264 & 40262736 & 1179630192 & 32339018280 & 854769872184 \\
\hline s=5 & 22464 & 1066464 & 35678592 & 1043606592 & 28581355584 & 754984855584 \\
\hline s=6 & 7488 & 355488 & 11892864 & 347868864 & 9527118528 & 251661618528 \\
\hline \end{array}
\\
&\begin{array}[t]{|c|c|c|c|c|c|c|}
 \hline \mu=(3, 2, 2, 1) & g=0 & g=1 & g=2 & g=3 & g=4 & g=5 \\
 \hline s=1 & 1080 & 142560 & 11891880 & 808030080 & 49030839000 & 2777130588960 \\
 \hline s=2 & 24408 & 2975688 & 236613384 & 15604156944 & 928759785048 & 51934912866648 \\
 \hline s=3 & 195696 & 22833936 & 1764985248 & 114273524448 & 6718979907216 & 372620872120176 \\
 \hline s=4 & 764208 & 86946408 & 6607836864 & 423012867984 & 24682857466608 & 1361716707058488 \\
 \hline s=5 & 1622160 & 181944000 & 13692581280 & 870735528000 & 50576815946160 & 2781487931040000 \\
 \hline s=6 & 1911600 & 212829120 & 15934474080 & 1009718844480 & 58506896866320 & 3212163320083200 \\
 \hline s=7 & 1175040 & 130440960 & 9745954560 & 616691715840 & 35698249900800 & 1958572008345600 \\
 \hline s=8 & 293760 & 32610240 & 2436488640 & 154172928960 & 8924562475200 & 489643002086400 \\
\hline \end{array}
\\
&\begin{array}[t]{|c|c|c|c|c|c|}
 \hline \mu=(3, 3, 3, 3) & g=0 & g=1 & g=2 & g=3 & g=4 \\
 \hline s=1 & 14742 & 6781320 & 1863064476 & 397980044280 & 73027276324002 \\
 \hline s=2 & 684774 & 286543656 & 73938326364 & 15124478632344 & 2690423275640562 \\
 \hline s=3 & 11927088 & 4700315952 & 1162209509712 & 230530176869328 & 40089332784598560 \\
 \hline s=4 & 108506304 & 41049414576 & 9847619855856 & 1910059732782864 & 326635075616752080 \\
 \hline s=5 & 591049872 & 217264375440 & 50997568912848 & 9730568084094000 & 1643434518194147520 \\
 \hline s=6 & 2065978224 & 744104821680 & 171941934622896 & 32417467690208400 & 5425295582074933440 \\
 \hline s=7 & 4798180800 & 1703613513600 & 389301061256640 & 72772493528332800 & 12099023079466665600 \\
 \hline s=8 & 7485955200 & 2632114958400 & 596891523260160 & 110918372096491200 & 18356651181359395200 \\
 \hline s=9 & 7754940000 & 2709582840000 & 611410862412000 & 113177279163888000 & 18674140608815688000 \\
 \hline s=10 & 5114988000 & 1780691688000 & 400648862930400 & 73995902520393600 & 12187705122917006400 \\
 \hline s=11 & 1944000000 & 675695520000 & 151836376608000 & 28014789102336000 & 4610660182447564800 \\
 \hline s=12 & 324000000 & 112615920000 & 25306062768000 & 4669131517056000 & 768443363741260800 \\
\hline \end{array}
\end{align*}
}

\subsection{Tables of some positive LUE Correlators}
We write the correlators in terms of $N$ and the parameter $\a=N(c-1)$.

\shrink{
\begin{alignat*}{1}
\langle \tr X\, \tr X \rangle_{c} ={}& N (\a +N), \\
 \langle \tr X^2\tr X \rangle_{c} ={}& 2 N(\a+N)(\a+2N), \\    
\langle \tr X^2\tr X^2\rangle_{c} ={}& 2 \a \left(1+2 \a^2\right) N+2 \left(1+11 \a^2\right) N^2+36 \a N^3+18 N^4, \\
\langle \tr X^3\tr X^1\rangle_{c} ={}& 3 N \left (\a+N\right ) \left (1+\a^2+5 \a N+5 N^2\right )\\
\langle \tr X^3\tr X^2\rangle_{c} ={}& 6 N \left (\a+N\right ) \left (\a+2 N\right ) \left (3+\a^2+6 \a N+6 N^2\right ) \\
\langle \tr X^3\tr X^3\rangle_{c} ={}& 3 N \left (\a+N\right ) \left (12+25 \a^2+3 \a^4+4 \a \left (29+9 \a^2\right ) N+4 \left (29+34 \a^2\right ) N^2+200 \a N^3+100 N^4\right ),
\\
\langle \tr X\, \tr X\, \tr X \rangle_{c} ={}& 2N(\a+N),
\\
\langle \tr X^2\tr X\, \tr X \rangle_{c} ={}& 6 N(\a+N)(\a+2N),
\\
    \langle \tr X^2\tr X^2\tr X \rangle_{c} ={}& 8 \a \left(1+2 \a^2\right) N+8 \left(1+11 \a^2\right) N^2+144 \a N^3+72 N^4,
\\
    \langle \tr X^2\tr X^2\tr X^2\rangle_{c} ={}& 40 \a^2 \left(2+\a^2\right) N+48 \left(5 \a+7 \a^3\right) N^2+16 \left(10+59 \a^2\right) N^3+1080 \a N^4+432 N^5,  \\
     \langle \tr X^4\tr X^3\tr X^2\rangle_{c} ={}& 24 \a \left(328+1092 \a^2+252 \a^4+8 \a^6\right) N+24 \left(328+1092 \a^2+252 \a^4+8 \a^6+\a \left(4826 \a+2765 \a^3+169 \a^5\right)\right) N^2+ \\ & + 24 \left(4826 \a+2765 \a^3+169 \a^5+\a \left(4826+9935 \a^2+1239 \a^4\right)\right) N^3+24 \left(4826+9935 \a^2+1239 \a^4+ \right. \\ & \left.+\a \left(14340 \a+4240 \a^3\right)\right) N^4+24 \left(14340 \a+4240 \a^3+\a \left(7170+7370 \a^2\right)\right) N^5+24 \left(7170+13670 \a^2\right) N^6+ \\ & +201600 \a N^7+50400 N^8,
\\
\langle \tr X\, \tr X\, \tr X\, \tr X \rangle_{c} ={}& 6 N ( \a+N),
\\
  \langle \tr X^2\tr X\, \tr X\, \tr X \rangle_{c} ={}& 24 N(\a+N)(\a+2N),
\\
\langle \tr X^2\tr X^2\tr X\, \tr X \rangle_{c} ={}& 40 \a \left(1+2 \a^2\right) N+40 \left(1+11 \a^2\right) N^2+720 \a N^3+360 N^4,   
\\
    \langle \tr X^2\tr X^2\tr X^2\tr X \rangle_{c} ={}& 240 \a^2 \left(2+\a^2\right) N+288 \left(5 \a+7 \a^3\right) N^2+96 \left(10+59 \a^2\right) N^3+6480 \a N^4+2592 N^5,    \\
    \langle \tr X^2\tr X^2\tr X^2\tr X^2\rangle_{c} ={}& 48 \a \left(21+14 \a^2 \left(5+\a^2\right)\right) N+48 \left(21+377 \a^2+163 \a^4\right) N^2+96 \left(307 \a+338 \a^3\right) N^3+  \\ & +48 \left(307+1283 \a^2\right) N^4+54432 \a N^5+18144 N^6, 
\\
\langle \tr X\, \tr X\, \tr X\, \tr X\, \tr X \rangle_{c}={}& 24 N(\a+N), \\
    \langle \tr X^2\tr X\, \tr X\, \tr X\, \tr X \rangle_{c} ={}& 120 N(\a+N)(\a+2N), 
\\
 \langle \tr X^2\tr X^2\tr X\, \tr X\, \tr X \rangle_{c} ={}& 240 \a \left(1+2 \a^2\right) N+240 \left(1+11 \a^2\right) N^2+4320 \a N^3+2160 N^4,   \\
      \langle \tr X^3\tr X^2\tr X^2\tr X^2\tr X \rangle_{c} ={}& 6048 \a^2 \left(11+\a^2\right) \left(3+2 \a^2\right) N+432 \left(42 \a \left(11+\a^2\right) \left(3+2 \a^2\right)+3 \a^3 \left(611+121 \a^2\right)\right) N^2+ \\ & +432 \left(28 \left(11+\a^2\right) \left(3+2 \a^2\right)+39 \a^2 \left(47+37 \a^2\right)+9 \a^2 \left(611+121 \a^2\right)\right) N^3+ 432 \left(2160 \a^3+  \right. \\ & \left. +117 \a \left(47+37 \a^2\right)+6 \a \left(611+121 \a^2\right)\right) N^4+2592 \left(611+1741 \a^2\right) N^5+3265920 \a N^6+933120 N^7, 
\end{alignat*}
}

\subsection{Tables of some negative LUE Correlators}

In the following formul\ae\ we denote $a_j :=(\a-j)_{2j+1}$,  $\a=N(c-1)$.

\shrink{
\begin{alignat*}{1}
    \langle \tr X^{-1}\tr X^{-1}\rangle_{c} ={}& \frac{1}{a_1 a_0} N (\a+N),   
    \\
    \langle \tr X^{-2}\tr X^{-1}\rangle_{c} ={}& \frac{1}{a_2 a_0} 2 N (\a+N) (\a+2 N), \\
    \langle \tr X^{-2}\tr X^{-2}\rangle_{c} ={}& \frac{1}{a_3 a_1} 2 N (\a+N) \left(2 \a^4+9 \a^3 N+\a^2 \left(9 N^2-5\right)-21 \a N-21 N^2+3\right),   \\
    \langle \tr X^{-3}\tr X^{-1}\rangle_{c} ={}& \frac{1}{a_3 a_0} 3 N (\a+N) \left(\a^2+5 \a N+5 N^2+1\right),  \\
    \langle \tr X^{-3}\tr X^{-2}\rangle_{c} ={}& \frac{1}{a_4 a_1} 6 N \left (\a+N\right ) \left (\a+2 N\right ) \left (2+\a^4-26 \a N+6 \a^3 N-26 N^2+\a^2 \left (-3+6 N^2\right )\right ), \\
   \langle \tr X^{-3}\tr X^{-3}\rangle_{c} ={}& \frac{1}{a_5 a_2} 3 N \left (\a+N\right ) \left (320-444 \a^2+147 \a^4-26 \a^6+3 \a^8+4 \a \left (-200+411 \a^2-100 \a^4+9 \a^6\right ) N \right. \\ & \left. +4 \left (-200+1731 \a^2-425 \a^4+34 \a^6\right ) N^2+40 \a \left (264-65 \a^2+5 \a^4\right ) N^3+20 \left (264-65 \a^2+5 \a^4\right ) N^4\right ), 
\\
    \langle \tr X^{-1}\tr X^{-1}\tr X^{-1}\rangle_{c} ={}& \frac{1}{ a_2 a_{0}^2 }  4 N (\a+N) (\a+2 N),   \\
    \langle \tr X^{-2}\tr X^{-1}\tr X^{-1}\rangle_{c} ={}& \frac{1}{ a_{3} a_{1} a_{0} } 2 N (\a+N) \left(5 \a^4+24 \a^3 N+\a^2 \left(24 N^2-5\right)-36 \a N-36 N^2\right) ,   \\
    \langle \tr X^{-2}\tr X^{-2}\tr X^{-1}\rangle_{c} ={}& \frac{1}{ a_{4} a_{1} a_{0} }24 N (\a+N) (\a+2 N) \left(\a^4+6 \a^3 N+\a^2 \left(6 N^2-3\right)-26 \a N-26 N^2+2\right),
\\
 \langle \tr X^{-2}\tr X^{-2}\tr X^{-2}\rangle_{c} ={}& \frac{1}{ a_{5} a_{1}^2 } 8 N (\a+N) \left(7 \a^2 \left(\a^2-7\right) \left(\a^2-1\right)^2+36 \left(6 \a^4-71 \a^2+125\right) N^4+ 72 \a \left(6 \a^4-71 \a^2+125\right) N^3+ \right. \\ & \left.9 (\a-1) \a (\a+1) \left(9 \a^4-89 \a^2+100\right) N+9 \left(33 \a^6-382 \a^4+689 \a^2-100\right) N^2\right),
       \end{alignat*}
}

\shrink{
 \begin{alignat*}{1}
    \langle \tr X^{-3}\tr X^{-2}\tr X^{-2}\rangle_{c} ={}& \frac{1}{a_6 a_2 a_1} 24 N (\a+N) (\a+2 N) \left(2 \a^2 \left(\a^2-1\right)^2 \left(2 \a^4-25 \a^2+68\right)+60 \left(3 \a^6-71 \a^4+488 \a^2-840\right) N^4+ \right. 
    \\     
    & \left. +120 \a \left(3 \a^6-71 \a^4+488 \a^2-840\right) N^3+(\a-1) \a (\a+1) \left(57 \a^6-1105 \a^4+6148 \a^2-7200\right) N+ \right. \\ & \left. +\left(237 \a^8-5422 \a^6+36533 \a^4-63748 \a^2+7200\right) N^2\right), 
\\
    \langle \tr X^{-1}\tr X^{-1}\tr X^{-1}\tr X^{-1}\rangle_{c} ={}& \frac{1}{a_3 a_1 a_0^2} 6 N (\a+N) \left(5 \a^4+24 \a^3 N+\a^2 \left(24 N^2-5\right)-36 \a N-36 N^2\right),   \\
    \langle \tr X^{-2}\tr X^{-1}\tr X^{-1}\tr X^{-1}\rangle_{c} ={}& \frac{1}{a_4 a_1 a_0^2}12 N (\a+N) (\a+2 N) \left(7 \a^4+44 \a^3 N+\a^2 \left(44 N^2-7\right)-144 \a N-144 N^2\right),   \\
    \langle \tr X^{-2}\tr X^{-2}\tr X^{-1}\tr X^{-1}\rangle_{c} ={}& \frac{1}{a_5 a_2 a_1 a_0} 8 N (\a+N) \left(14 \a^2 \left(\a^2-1\right)^2 \left(2 \a^4-13 \a^2+20\right)+12 \left(78 \a^6-1055 \a^4+4237 \a^2-3800\right) N^4 + \right. \\ & \left.+24 \a \left(78 \a^6-1055 \a^4+4237 \a^2-3800\right) N^3+3 (\a-1) \a (\a+1) \left(113 \a^6-1205 \a^4+3632 \a^2-800\right) N+  \right. \\ & \left.+ 3 \left(425 \a^8-5538 \a^6+21785 \a^4-19632 \a^2+800\right) N^2\right),   \\
\end{alignat*}
}

\subsection{Tables of some mixed LUE Correlators.}

In the following formul\ae\ we denote $a_j :=(\a-j)_{2j+1}$,  $\a=N(c-1)$. 

\shrink{
\begin{alignat*}{1}
 \langle \tr X^{-1}\tr X^{1}\rangle_{c} ={}& -\frac{1}{a_0}N,
 \\
 \langle \tr X^{-2}\tr X^{1}\rangle_{c} ={}& -\frac{1}{a_1}2 N (\a+N),  \\
 \langle \tr X^{-1}\tr X^{2}\rangle_{c} ={}& -\frac{1}{a_0}2 N (\a+N),  \\
 \langle \tr X^{-2}\tr X^{2}\rangle_{c} = {}&-\frac{1}{a_1} (\a+2 N) \left(\a^2+2 \a N+2 N^2-1\right), \\
\langle \tr X^{-3}\tr X^{1}\rangle_{c} ={}& -\frac{1}{a_2} 3 N \left (\a+N\right ) \left (\a+2 N\right ), \\
\langle \tr X^{-3}\tr X^{2}\rangle_{c} ={}& -\frac{1}{a_2} 6 N \left (\a+N\right ) \left (-2+\a^2+2 \a N+2 N^2 \right ), \\
\langle \tr X^{-1}\tr X^{3}\rangle_{c} ={}& -\frac{1}{a_0} 3 N \left (\a+N\right ) \left (\a+2 N\right ), \\
\langle \tr X^{-2}\tr X^{3}\rangle_{c} ={}& -\frac{1}{a_1} 6 N \left (\a+N\right ) \left (-1+\a^2+2 \a N+2 N^2\right ), \\
\langle \tr X^{-3}\tr X^{3}\rangle_{c} ={}& -\frac{1}{a_2} 3 \left (2 \a \left (4-5 \a^2+\a^4\right )+\left (4-11 \a^2+3 \a^4\right ) N+6 \a \left (-3+2 \a^2\right ) N^2+4 \left (-3+7 \a^2\right ) N^3+30 \a N^4+12 N^5\right ), 
\\
 \langle \tr X^{-1}\tr X^{1}\tr X^{1}\rangle_{c} ={}& 0, 
\\
 \langle \tr X^{-1}\tr X^{-1}\tr X^{1}\rangle_{c} ={}& -\frac{1}{a_0 a_1}2 N (\a+N),  \\
   \langle \tr X^{-2}\tr X^{-1}\tr X^{1}\rangle_{c} ={}&  -\frac{1}{a_2 a_0} 6 N (\a+N) (\a+2 N),  
\\ \langle \tr X^{-1}\tr X^{2}\tr X^{1}\rangle_{c} ={}& -\frac{1}{a_0}2 N (\a+N),  \\
  \langle \tr X^{-2}\tr X^{1}\tr X^{1}\rangle_{c} ={}&  \frac{1}{a_1}2 N (\a+N), \\
\langle \tr X^{-1}\tr X^{-1}\tr X^{2}\rangle_{c} ={}& -\frac{1}{a_0 a_1} 2 N (\a+N) (\a+2 N), \\
   \langle \tr X^{-1}\tr X^{2}\tr X^{2}\rangle_{c} ={}& -\frac{1}{a_0}8 N (\a+N) (\a+2 N), \\
\\ 
\langle \tr X^{-2}\tr X^{2}\tr X^{1}\rangle_{c} ={}& 0, \\
 \langle \tr X^{-2}\tr X^{-2}\tr X^{1}\rangle_{c} ={}& -\frac{1}{a_3 a_1}8 N (\a+N) \left(2 \a^4+9 \a^3 N+\a^2 \left(9 N^2-5\right)-21 \a N-21 N^2+3\right), \\
 \langle \tr X^{-2}\tr X^{-1}\tr X^{2}\rangle_{c} ={}& -\frac{1}{a_2 a_0}8 N (\a + N) (-1 + \a^2 + 3 \a N + 3 N^2), \\
 \langle \tr X^{-1}\tr X^{-1}\tr X^{3}\rangle_{c} ={}& -\frac{1}{a_1 a_0} 12 N^2 (\a+N)^2, \\
 \langle \tr X^{-1}\tr X^{1}\tr X^{3}\rangle_{c} ={}& -\frac{1}{a_0} 6 N (\a+N) (\a+2 N), \\
 \langle \tr X^{-3}\tr X^{-1}\tr X^{1}\rangle_{c} ={}& -\frac{1}{a_3 a_0} 12 N (\a+N) \left(\a^2+5 \a N+5 N^2+1\right), \\
 \langle \tr X^{-3}\tr X^{1}\tr X^{1}\rangle_{c} ={}&  \frac{1}{a_2}6 N (\a+N) (\a+2 N),
\\
 \langle \tr X^{-3}\tr X^{-2}\tr X^{1}\rangle_{c} ={}& -\frac{1}{a_4 a_1}30 N (\a+N) (\a+2 N) \left(\a^4+6 \a^3 N+\a^2 \left(6 N^2-3\right)-26 \a N-26 N^2+2\right),
  \end{alignat*}
}

\shrink{
\begin{alignat*}{1}
  \langle \tr X^{-3}\tr X^{-2}\tr X^{2}\rangle_{c} ={}& -\frac{1}{a_4 a_1}24 N (\a+N) \left(2 \a^6-17 \a^4+10 \left(3 \a^2-13\right) N^4+20 \left(3 \a^2-13\right) \a N^3+23 \a^2+\right. \\ & \left. +2 \left(23 \a^4-120 \a^2+57\right) N^2+ 2 \left(8 \a^4-55 \a^2+57\right) \a N-8\right), 
\\
\langle \tr X^{-1}\tr X^{1}\tr X^{1}\tr X^{1}\rangle_{c} ={}& 0,\\
\langle \tr X^{-1}\tr X^{-1}\tr X^{1}\tr X^{1}\rangle_{c} ={}& \frac{1}{a_1 a_0}2 N (\a+N),\\
\langle \tr X^{-1}\tr X^{-1}\tr X^{-1}\tr X^{1}\rangle_{c} ={}& -\frac{1}{a_2 a_0^2} 12 N (\a + N) (\a + 2 N), 
\\ \langle \tr X^{-2}\tr X^{1}\tr X^{1}\tr X^{1}\rangle_{c} =&{} 0,  \\
 \langle \tr X^{-2}\tr X^{-2}\tr X^{1}\tr X^{1}\rangle_{c} ={}&  \frac{1}{a_3 a_1}24 N (\a+N) \left(2 \a^4+9 \a^3 N+\a^2 \left(9 N^2-5\right)-21 \a N-21 N^2+3\right),  \\
  \langle \tr X^{-2}\tr X^{-2}\tr X^{-1}\tr X^{1}\rangle_{c} ={}& -\frac{1}{a_4 a_1 a_0} 120 N (\a+N) (\a+2 N) \left(\a^4+6 \a^3 N+\a^2 \left(6 N^2-3\right)-26 \a N-26 N^2+2\right),  \\
\langle \tr X^{-2}\tr X^{-1}\tr X^{-1}\tr X^{2} \rangle_{c} ={}& -\frac{1}{a_3 a_1 a_0} 24  N (\a + N) (-1 + \a + 2 N) (\a + 2 N) (1 + \a + 2 N) (2 \a^2-3), \\
\langle \tr X^{-2}\tr X^{-1}\tr X^{2}\tr X^{1} \rangle_{c} ={}& \frac{1}{a_2 a_0} 8 N (\a + N) (-1 + \a^2 + 3 \a N + 3 N^2), \qquad \langle \tr X^{-2}\tr X^{2}\tr X^{1}\tr X^{1} \rangle_{c} =  0, \\
\langle \tr X^{-1}\tr X^{2}\tr X^{2}\tr X^{1} \rangle_{c} ={}& -\frac{1}{a_0}24 N (\a + N) (\a + 2 N),
\\
 \langle \tr X^{-1}\tr X^{-1}\tr X^{2}\tr X^{2} \rangle_{c} ={}& -\frac{1}{a_1 a_0} 24 N^2 (\a + N)^2, \\
 \langle \tr X^{-3}\tr X^{-1}\tr X^{-1}\tr X^{2}\rangle_{c} ={}& -\frac{1}{a_4 a_1 a_0}24 N (\a+N) \left(100 \left(\a^2-2\right) N^4+200 \a \left(\a^2-2\right) N^3+ \right. \\ & \left. +2 \left(73 \a^4-165 \a^2+52\right) N^2+2 \a \left(23 \a^4-65 \a^2+52\right) N+5 \a^2 \left(\a^4-3 \a^2+2\right)\right),  
\end{alignat*}
}

\subsection{Topological Expansion of some mixed correlators}

We compute the first terms in the large $N$ expansion for some of the above mixed correlators; compare with Proposition \ref{prop13}.

\shrink{
\begin{align*}
\langle \tr X^{-2}\tr X^{3} \rangle_{c} ={}& -N \left(\frac{12}{(c-1)^3}+\frac{24}{(c-1)^2}+\frac{18}{c-1}+6\right)-\frac{1}{N}\left(\frac{12}{(c-1)^5}+\frac{24}{(c-1)^4}+\frac{12}{(c-1)^3}\right)+ \\ 
 & -\frac{1}{N^3}\left(\frac{12}{(c-1)^7}+\frac{24}{(c-1)^6}+\frac{12}{(c-1)^5}\right)-\frac{1}{N^5}\left(\frac{12}{(c-1)^9}+\frac{24}{(c-1)^8}+\frac{12}{(c-1)^7}\right)+O\left(\frac{1}{N^7}\right), 
 \\ \langle \tr X^{-4}\tr X^{4} \rangle_{c} ={}& -\left(\frac{400}{(c-1)^7}+\frac{1400}{(c-1)^6}+\frac{1968}{(c-1)^5}+\frac{1420}{(c-1)^4}+\frac{560}{(c-1)^3}+\frac{120}{(c-1)^2}+\frac{16}{c-1}+2\right)+
\\
& - \frac{1}{N^2} \left( \frac{5600}{(c-1)^9}+\frac{19600}{(c-1)^8}+\frac{26920}{(c-1)^7}+\frac{18300}{(c-1)^6}+\frac{6320}{(c-1)^5}+\frac{980}{(c-1)^4}+\frac{40}{(c-1)^3}\right)+
\\
& - \frac{1}{N^4} \left( \frac{58800}{(c-1)^{11}}+\frac{205800}{(c-1)^{10}}+\frac{280448}{(c-1)^9}+\frac{186620}{(c-1)^8}+\frac{61560}{(c-1)^7}+\frac{8620}{(c-1)^6}+\frac{232}{(c-1)^5}\right)+O\left(\frac{1}{N^6}\right),
\\
\langle \tr X^{-1}\tr X^{2} \tr X^{2} \rangle_{c} ={}&  -N^2\left(\frac{16}{c-1}+24+8 (c-1)\right), 
\\
 \langle \tr X^{-2}\tr X^{-2} \tr X^{1} \rangle_{c} ={}& -\frac{1}{N^4}\left(\frac{72}{(c-1)^8}+\frac{144}{(c-1)^7}+\frac{88}{(c-1)^6}+\frac{16}{(c-1)^5}\right) +
\\
& -\frac{1}{N^6}\left(\frac{912}{(c-1)^{10}}+\frac{1824}{(c-1)^9}+\frac{1112}{(c-1)^8}+\frac{200}{(c-1)^7}\right)+\\ 
&-\frac{1}{N^8}\left(\frac{9144}{(c-1)^{12}}+\frac{18288}{(c-1)^{11}}+\frac{11160}{(c-1)^{10}}+\frac{2016}{(c-1)^9}\right) + O\left(\frac{1}{N^{10}}\right), 
\\
\langle \tr X^{-3}\tr X^{-2}\tr X^{2}\tr X^{2}\rangle_{c} ={}& \frac{1}{N^3}\left(\frac{720}{(c-1)^{10}}+\frac{2160}{(c-1)^9}+\frac{2544}{(c-1)^8}+\frac{1488}{(c-1)^7}+\frac{432}{(c-1)^6}+\frac{48}{(c-1)^5}\right)+
\\
& + \frac{1}{N^5}\left(\frac{19200}{(c-1)^{12}}+\frac{57600}{(c-1)^{11}}+\frac{66864}{(c-1)^{10}}+\frac{37728}{(c-1)^9}+\frac{10344}{(c-1)^8}+\frac{1080}{(c-1)^7}\right)+
\\
& + \frac{1}{N^7}\left( \frac{377040}{(c-1)^{14}}+\frac{1131120}{(c-1)^{13}}+\frac{1304688}{(c-1)^{12}}+\frac{724176}{(c-1)^{11}}+\frac{193056}{(c-1)^{10}}+\frac{19488}{(c-1)^9}\right)+O\left(\frac{1}{N^9}\right),
\\
\langle \tr X^{-3}\tr X^{-1}\tr X^{-1}\tr X^{2}\rangle_{c}={}&  -\frac{1}{N^5}\left(\frac{2400}{(c-1)^{11}}+\frac{7200}{(c-1)^{10}}+\frac{8304}{(c-1)^9}+\frac{4608}{(c-1)^8}+\frac{1224}{(c-1)^7}+\frac{120}{(c-1)^6}\right)+
\\
& - \frac{1}{N^7} \left( \frac{69600}{(c-1)^{13}}+\frac{208800}{(c-1)^{12}}+\frac{239904}{(c-1)^{11}}+\frac{131808}{(c-1)^{10}}+\frac{34464}{(c-1)^9}+\frac{3360}{(c-1)^8}\right)+
\\
& - \frac{1}{N^9} \left( \frac{1430400}{(c-1)^{15}}+\frac{4291200}{(c-1)^{14}}+\frac{4923408}{(c-1)^{13}}+\frac{2694816}{(c-1)^{12}}+\frac{700248}{(c-1)^{11}}+\frac{68040}{(c-1)^{10}}+\right)+O\left(\frac{1}{N^{11}}\right).
\end{align*}
}

\subsection*{Acknowledgements}
We are grateful to Marco Bertola, John Harnad, and Di Yang for very useful discussions.
This project has received funding from the European Union's H2020 research and innovation programme under the Marie Sk\l odowska--Curie grant No. 778010 {\em  IPaDEGAN}. G.R. wishes to thank the School of Mathematical Sciences at the University of Science and Technology of China in Hefei for hospitality during which part of this work was completed; the research of G.R. is supported by the Fonds de la Recherche Scientifique-FNRS under EOS project O013018F.

The authors wish to thank the anonymous referee for careful reading the manuscript and suggesting important improvements to the presentation.

\end{document}